\documentclass[11pt]{article}
\usepackage{graphicx} 

\title{Max-Distance Sparsification for Diversification and Clustering}
\author{Soh Kumabe\thanks{CyberAgent.}}
\date{}
\usepackage{graphicx} 
\usepackage{amsmath,amssymb,amsthm,amsfonts}
\usepackage{bm,fullpage,hyperref}
\usepackage{color}
\usepackage{booktabs,multirow}
\usepackage{comment}
\usepackage{apxproof}

\usepackage{algpseudocode}
\usepackage[linesnumbered,ruled,vlined]{algorithm2e}
\SetKwProg{Procedure}{Procedure}{}{}

\allowdisplaybreaks

\newcommand{\DP}{\mathsf{DP}}
\newcommand{\EX}{\mathsf{EX}}
\newcommand{\OPT}{\mathsf{OPT}}

\newcommand{\floor}[1]{\left \lfloor{#1}\right \rfloor}
\newcommand{\ceil}[1]{\left \lceil{#1}\right \rceil}

\newtheorem{theorem}{Theorem}[section]
\newtheorem{lemma}[theorem]{Lemma}

\theoremstyle{definition}
\newtheorem{definition}[theorem]{Definition}

\begin{document}

\maketitle
\begin{abstract}
Let $\mathcal{D}$ be a set family that is the solution domain of some combinatorial problem.
The \emph{max-min diversification problem on $\mathcal{D}$} is the problem to select $k$ sets from $\mathcal{D}$ such that the Hamming distance between any two selected sets is at least $d$. FPT algorithms parameterized by $k+\ell $, where $\ell=\max_{D\in \mathcal{D}}|D|$, and $k+d$ have been actively studied recently for several specific domains.

This paper provides unified algorithmic frameworks to solve this problem. Specifically, for each parameterization $k+\ell $ and $k+d$, we provide an FPT oracle algorithm for the max-min diversification problem using oracles related to $\mathcal{D}$. 
We then demonstrate that our frameworks provide the first FPT algorithms on several new domains $\mathcal{D}$, including the domain of $t$-linear matroid intersection, almost $2$-SAT, minimum edge $s,t$-flows, vertex sets of $s,t$-mincut, vertex sets of edge bipartization, and Steiner trees.
We also demonstrate that our frameworks generalize most of the existing domain-specific tractability results.

Our main technical breakthrough is introducing the notion of \emph{max-distance sparsifier} of $\mathcal{D}$, a domain on which the max-min diversification problem is equivalent to the same problem on the original domain $\mathcal{D}$.
The core of our framework is to design FPT oracle algorithms that construct a constant-size max-distance sparsifier of $\mathcal{D}$.
Using max-distance sparsifiers, we provide FPT algorithms for the max-min and max-sum diversification problems on $\mathcal{D}$, as well as $k$-center and $k$-sum-of-radii clustering problems on $\mathcal{D}$, which are also natural problems in the context of diversification and have their own interests.
\end{abstract}


\section{Introduction}

\subsection{Background and Motivation}

The procedure for approaching real-world problems with optimization algorithms involves formulating the real-world motivations as mathematical problems and then solving them. However, real-world problems are complex, and the idea of a ``good'' solution cannot always be correctly formulated. 
The paradigm of \emph{diversification}, introduced by Baste et al.~\cite{baste2019fpt} and Baste et al.~\cite{baste2022diversity}, is a ``formulation of the unformulatable problems'', which 
formulates \emph{diversity measures} for a set of multiple solutions, rather than attempting to formulate the ``goodness'' of a single solution. 
By computing a set of solutions that maximize this measure, the algorithm provides effective options to evaluators who have the correct criteria for judging the ``goodness'' of a solution.

Let $U$ be a finite set, $k\in \mathbb{Z}_{\geq 1}$ and $d\in \mathbb{Z}_{\geq 0}$.
Let $\mathcal{D}\subseteq 2^U$ be the feasible domain of some combinatorial problem.
The following problem frameworks, defined by two types of diversity measures, have been studied extensively.

\begin{quote}
\textbf{Max-Min Diversification Problem on $\mathcal{D}$:}  
Does there exist a $k$-tuple $(D_1, \dots, D_k) \in \mathcal{D}^k$ of sets in $\mathcal{D}$ such that $\min_{1\leq i<j\leq k} |D_i \triangle D_j| \geq d$?\footnote{We define $Z_1\triangle Z_2 := (Z_1\setminus Z_2)\cup (Z_2\setminus Z_1)$.}
\end{quote}

\begin{quote}
\textbf{Max-Sum Diversification Problem on $\mathcal{D}$:}  
Does there exist a $k$-tuple $(D_1, \dots, D_k) \in \mathcal{D}^k$ of sets in $\mathcal{D}$ such that $\sum_{1\leq i<j\leq k} |D_i \triangle D_j| \geq d$?
\end{quote}
These problems ensure diversity by aiming to output solutions that are as dissimilar as possible in terms of Hamming distance.

Parameterized algorithms for diversification problems have been actively studied.
Particularly, FPT algorithms for the max-min diversification problems parameterized by $k+\ell $, where $\ell= \max_{D \in \mathcal{D}} |D|$~\cite{baste2022diversity,baste2019fpt,fomin2024diversecollection,hanaka2021finding,shida2024finding}, as well as by $k+d$~\cite{eiben2024determinantal,fomin2024diversepair,fomin2024diversecollection,funayama2024parameterized,gima2024computing,misra2024parameterized}, have been the focus of research.
Since assuming $d \leq 2\ell $ does not lose generality in the max-min diversification problem, the latter addresses a more general situation than the former. 
Since the max-sum diversification problem is empirically more tractable than the max-min diversification problem, for some time hereafter, we will restrict our discussion to the max-min diversification problem.


This research provides general algorithmic frameworks for FPT algorithms solving max-min (and max-sum) diversification problems for both parameterizations $k+\ell $ and $k+d$. 
Our frameworks are very general and can be applied to all domains~\cite{bang2021k,bang2016parameterized,baste2022diversity,baste2019fpt,eiben2024determinantal,fomin2024diversepair,fomin2024diversecollection,funayama2024parameterized,gutin2018k,hanaka2021finding,shida2024finding} for which FPT algorithms parameterized by $k+\ell $ and $k+d$ are currently known for the case that diversity measure is defined using an unweighted Hamming distance. 
Moreover, our frameworks further provide the first algorithms for several domains where such algorithms were previously unknown.


Our main technical breakthrough is introducing a notion of \emph{max-distance sparsifier} as an intermediate step, which, for the max-min diversification problem, essentially works as a \emph{core-set}~\cite{agarwal2004approximating}. The formal definition is given in Section~\ref{sec:framework_overview}.
The critical fact is that, when $\mathcal{K}$ is a max-distance sparsifier of $\mathcal{D}$, the max-min diversification problem on $\mathcal{D}$ is equivalent to the same problem on $\mathcal{K}$.
Our framework constructs a constant-size max-distance sparsifier $\mathcal{K}$ of $\mathcal{D}$ using the oracles on $\mathcal{D}$, enabling us to solve the max-min diversification problems on $\mathcal{D}$ by brute-force search on $\mathcal{K}$.

The power of max-distance sparsification is not limited to solving diversification problems. Specifically, the following \emph{$k$-center}~\cite{hsu1979easy} and \emph{$k$-sum-of-radii clustering problems on $\mathcal{D}$}~\cite{charikar2001clustering} can also be solved via max-distance sparsification.

\begin{quote}
\textbf{$k$-Center Clustering Problem on $\mathcal{D}$:}  
Does there exist a $k$-tuple of subsets $(D_1, \dots, D_k) \in \mathcal{D}^k$ such that for all $D \in \mathcal{D}$, there exists an $i \in \{1, \dots, k\}$ satisfying $|D_i \triangle D| \leq d$?
\end{quote}

\begin{quote}
\textbf{$k$-Sum-of-Radii Clustering Problem on $\mathcal{D}$:}  
Does there exist a $k$-tuple of subsets $(D_1, \dots, D_k) \in \mathcal{D}^k$ and a $k$-tuple of non-negative integers $(d_1, \dots, d_k) \in \mathbb{Z}_{\geq 0}^k$ with $\sum_{i \in \{1, \dots, k\}} d_i \leq d$ such that for all $D \in \mathcal{D}$, there exists an $i \in \{1, \dots, k\}$ satisfying $|D_i \triangle D| \leq d_i$?
\end{quote}

When $\mathcal{D}$ is an explicitly given set of points, parameterized algorithms for these problems have been extensively studied in the area of clustering~\cite{amir2014efficiency,bandyapadhyayL023a, chen2024parameterized, demaine2005fixed, eiben2023parameterized, feldmann2020parameterized, inamdar2020capacitated}.
Furthermore, approximation algorithms for the \emph{relational $k$-means}~\cite{curtin2020rk,EsmailpourS24,MoseleyPSW21} and \emph{relational $k$-center}~\cite{agarwal2024computing} problems are investigated, which are the $k$-means and $k$-center clustering problem defined on a point set represented as a \emph{join} of given relational databases. 
Their setting is similar to ours as the point set are implicitly given and its size can be exponential. 
This research adds several new combinatorial domains to the literature on clustering problems on implicitly given domains, and also introduces a parameterized view.
These problems are also natural in the context of diversification, since in real situations, the concept of diversity often means that the extracted elements cover the entire space comprehensively rather than being mutually dissimilar. This motivation is formulated by clustering problems, which extract a list of sets in $\mathcal{D}$ such that for each set in $\mathcal{D}$, there is an extracted set near to it.

\subsection{Our Results}

This paper consists of two parts. In the first part, we design general frameworks for solving diversification and clustering problems. In the second part, we apply our frameworks to several specific domains $\mathcal{D}$. 
Table 1 shows a list of domains on which our frameworks provide the first or an improved algorithm for the max-min diversification problem.

\subsubsection{The Frameworks}

We define the following \emph{$(-1,1)$-optimization oracle on $\mathcal{D}$} and the \emph{exact extension oracle on $\mathcal{D}$}.

\begin{quote}
\textbf{$(-1,1)$-Optimization Oracle on $\mathcal{D}$:}  
Let $U$ be a finite set, $\mathcal{D}\subseteq 2^U$, and $w\in \{-1,1\}^U$ be a weight vector. Return a set $D\in \mathcal{D}$ that maximizes $\sum_{e\in D}w_e$.
\end{quote}

\begin{quote}
\textbf{Exact Extension Oracle on $\mathcal{D}$:}  
Let $U$ be a finite set, $r\in \mathbb{Z}_{\geq 0}$, $\mathcal{D}\subseteq 2^U$, and $C\in \mathcal{D}$. Let $X,Y\subseteq U$ be two disjoint subsets of $U$. If there exists a set $D\in \mathcal{D}$ such that $|D\triangle C|=r$, $X\subseteq D$, and $Y\cap D = \emptyset$, return one such set. If no such set exists, return $\bot$.
\end{quote}

When $C=\emptyset$ and $X=\emptyset$, we specifically call the exact extension oracle on $\mathcal{D}$ the \emph{exact empty extension oracle on $\mathcal{D}$}.

\begin{quote}
\textbf{Exact Empty Extension Oracle on $\mathcal{D}$:}  
Let $U$ be a finite set, $r\in \mathbb{Z}_{\geq 0}$, $\mathcal{D}\subseteq 2^U$, and $Y\subseteq U$. If there exists a set $D\in \mathcal{D}$ such that $|D|=r$ and $Y\cap D = \emptyset$, return one such set. If no such set exists, return $\bot$.
\end{quote}

Let $\mathcal{P}_{\mathcal{D}}$ be the max-min/max-sum diversification problem or the $k$-center/$k$-sum-of-radii clustering problem on $\mathcal{D}$.
Our main result is FPT algorithms for solving $\mathcal{P}_{\mathcal{D}}$ using these oracles. 
We construct frameworks for both types of parameterizations, $k+\ell $ and $k+d$. 
The result for the parameterization by $k+\ell $ is as follows.
\begin{theorem}\label{thm:framework_small}
There exists an oracle algorithm solving $\mathcal{P}_{\mathcal{D}}$ using the exact empty extension oracle on $\mathcal{D}$, where the number of oracle calls and time complexity are both FPT parameterized by $k+\ell $ and for each call of an oracle, $r$ and $|Y|$ are bounded by constants that depend only on $k+\ell $.
\end{theorem}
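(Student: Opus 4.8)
The plan is to reduce $\mathcal{P}_{\mathcal{D}}$ to a brute-force search over a small set $\mathcal{K}\subseteq\mathcal{D}$ by constructing a \emph{max-distance sparsifier} of $\mathcal{D}$ of size bounded by a function of $k+\ell$, using only the exact empty extension oracle. First I would observe that every $D\in\mathcal{D}$ has $|D|\le\ell$, so the pairwise Hamming distance $|D_i\triangle D_j|$ is always at most $2\ell$; hence in the max-min problem we may assume $d\le 2\ell$, in the max-sum problem $d\le k^2\ell$, and similarly bound the relevant radii in the clustering variants. Thus all the ``distance budgets'' are bounded in terms of $k+\ell$, which is what makes a constant-size sparsifier plausible: intuitively, $\mathcal{K}$ only needs to witness, for each ``distance pattern'' that could be realized, one representative set.

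The core step is the sparsifier construction. I would proceed greedily/iteratively: maintain a current family $\mathcal{K}$, initially empty (or seeded by one call with $r$ ranging over $0,\dots,\ell$ and $Y=\emptyset$ to grab \emph{some} member of $\mathcal{D}$ of each size). Then, as long as there is a set in $\mathcal{D}$ that is ``far'' in the appropriate sense from everything currently in $\mathcal{K}$ — i.e., a set that would change the answer to $\mathcal{P}_{\mathcal{D}}$ if added — locate such a set via oracle calls and add it. The subtlety is that the exact empty extension oracle only tells us about $|D|$ and about avoiding a forbidden set $Y$; it does not directly let us control $|D\triangle C|$ for $C\ne\emptyset$. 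The trick to emulate a bounded-radius extension query against the current family using only the empty oracle is to note that, since $|C|\le\ell$ and we only care about $|D\triangle C|\le O(\ell)$, the symmetric difference is supported on a set of size $O(\ell)$; we can therefore enumerate all $2^{O(\ell)}$ candidate ``local patterns'' of $D$ on $C$ together with a target size for $D\setminus C$, and for each pattern ask the empty extension oracle (with $Y$ chosen to forbid the elements of $C$ we want excluded, and after a suitable relabelling to force inclusion of the elements of $C$ we want kept — handled by guessing which elements of $C\setminus D$ and $D\setminus C$ appear). This blows up the number of oracle calls only by a factor depending on $k+\ell$, and keeps $r,|Y|=O(\ell)$ for every call. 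Each accepted set strictly increases $|\mathcal{K}|$, and a counting argument (the number of distinct ``distance signatures'' a useful new set can have is bounded in terms of $k+\ell$) caps $|\mathcal{K}|$ by a function of $k+\ell$; so the process terminates after an FPT number of rounds.

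Finally, I would prove the \emph{correctness/equivalence} claim: that $\mathcal{P}_{\mathcal{D}}$ has the same answer on $\mathcal{K}$ as on $\mathcal{D}$. The ``$\mathcal{K}$ yes $\Rightarrow$ $\mathcal{D}$ yes'' direction is immediate since $\mathcal{K}\subseteq\mathcal{D}$. For the converse, take an optimal tuple $(D_1,\dots,D_k)\in\mathcal{D}^k$; for each $D_i$, if $D_i\notin\mathcal{K}$ then by the termination condition $D_i$ is \emph{not} useful, meaning there is a surrogate $D_i'\in\mathcal{K}$ whose distances to all relevant reference points (the other chosen sets, for diversification; all of $\mathcal{D}$, for clustering) are as good as those of $D_i$ up to the thresholds that matter; swapping $D_i$ for $D_i'$ preserves feasibility, and iterating gives a feasible tuple in $\mathcal{K}^k$. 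This is where the precise definition of max-distance sparsifier from Section~\ref{sec:framework_overview} does the bookkeeping, and I would simply invoke it. Once $\mathcal{K}$ is built, we brute-force over $\mathcal{K}^k$ (and over the $O(d)$-bounded radius vectors in the sum-of-radii case), which is FPT in $k+\ell$.

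The main obstacle I anticipate is the middle step: faithfully simulating a ``far from the current family $\mathcal{K}$'' test — which is naturally a bounded-radius extension query relative to several non-empty centers — using \emph{only} the exact empty extension oracle. Getting the enumeration of local patterns right, arguing that the number of genuinely new sets is bounded (so the loop terminates in FPT many rounds), and keeping $r$ and $|Y|$ bounded by a function of $k+\ell$ throughout, is the technical heart of the argument; everything else is packaging.
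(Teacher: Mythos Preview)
Your overall architecture---build a constant-size max-distance sparsifier $\mathcal{K}\subseteq\mathcal{D}$ using the oracle, then brute-force over $\mathcal{K}^k$---matches the paper. The observations that all relevant distances are bounded in terms of $k+\ell$, and the correctness direction (replace each $D_i\notin\mathcal{K}$ by a surrogate in $\mathcal{K}$) are also exactly how the paper proceeds in Section~\ref{sec:usage}. However, the central technical step of your plan contains a genuine gap.

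\textbf{The termination bound is not an argument.} You write that ``a counting argument (the number of distinct `distance signatures' a useful new set can have is bounded in terms of $k+\ell$) caps $|\mathcal{K}|$''. This is where the proof lives, and as stated it does not go through: the number of possible distance vectors to a growing family $\mathcal{K}$ is not a priori bounded independently of $|\mathcal{K}|$, and two sets with the same distance profile to the current $\mathcal{K}$ may behave differently with respect to future $F_i$'s. The paper's key idea, which your proposal is missing entirely, is the \emph{sunflower lemma}. The paper shows (Lemma~\ref{lem:previouspaper}) that if $\mathcal{K}\setminus\{Z\}$ contains a sunflower of size $kr+1$ consisting of sets of size $|Z|$ whose core is contained in $Z$, then $Z$ is redundant: $\mathcal{K}\setminus\{Z\}$ is still a $k$-max-distance sparsifier with respect to $\mathcal{B}(\emptyset,r)$. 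The algorithm therefore only adds a new $D$ when adding it does \emph{not} create a sunflower of size $kr+2$; by the sunflower lemma this caps $|\mathcal{K}|$ at $(\ell+1)!(kr+1)^\ell$. Both the size bound and the correctness of the sparsifier hinge on this sunflower argument, and nothing in your ``distance signatures'' sketch substitutes for it.

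\textbf{The oracle simulation also does not work as described.} You propose to emulate an extension query relative to a nonempty center $C$ by guessing the pattern of $D$ on $C$ and ``relabelling to force inclusion of the elements of $C$ we want kept''. But the exact empty extension oracle only lets you fix $|D|$ and \emph{forbid} a set $Y$; it provides no mechanism to force $D$ to contain prescribed elements, and there is no relabelling of the ground set that creates one. The paper sidesteps this completely: it never searches for a set ``far from $\mathcal{K}$''. Instead it iterates over all $Y\subseteq R:=\bigcup_{K\in\mathcal{K}}K$ that hit every $K\in\mathcal{K}$ of the target size $\ell'$ \emph{and} every core of a size-$(kr+1)$ sunflower in $\mathcal{K}$, and calls the empty oracle with $(\ell',Y)$. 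Any returned $D$ is automatically new and automatically lacks a covering sunflower---exactly the property needed. This is what keeps $|Y|\leq |R|$ bounded by a function of $k+\ell$ while using only exclusion constraints.
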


The result for the parameterization by $k+d$ is as follows.
\begin{theorem}\label{thm:framework_large}
There exists a randomized oracle algorithm solving $\mathcal{P}_{\mathcal{D}}$ using the $(-1,1)$-optimization oracle on $\mathcal{D}$ and the exact extension oracle on $\mathcal{D}$, where the number of oracle calls and time complexity are both FPT parameterized by $k+d$ and for each call of the exact extension oracle, $r+|X|+|Y|$ are bounded by constants that depend only on $k+d$.
\end{theorem}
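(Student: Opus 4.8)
The plan is to prove Theorem~\ref{thm:framework_large} in the same two stages as Theorem~\ref{thm:framework_small}: first reduce $\mathcal{P}_{\mathcal{D}}$ to constructing a \emph{max-distance sparsifier} $\mathcal{K}\subseteq\mathcal{D}$ whose size is bounded by a function $f(k,d)$, and then solve $\mathcal{P}_{\mathcal{D}}$ by brute force over $\mathcal{K}$ — enumerating all $k$-tuples from $\mathcal{K}$ and, for the $k$-sum-of-radii problem, also all radius vectors summing to at most $d$. By the properties of max-distance sparsifiers set up in Section~\ref{sec:framework_overview}, this brute force is correct and runs in FPT time in $k+d$. So the whole task is to build $\mathcal{K}$ using only the $(-1,1)$-optimization oracle and the exact extension oracle, keeping $r+|X|+|Y|$ bounded by a function of $k+d$ at every call.

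In contrast with the $k+\ell$ regime, the sets in $\mathcal{D}$ can be arbitrarily large, so there is no a priori bounded universe to search; the $(-1,1)$-optimization oracle is what provides one. Its key use is to compute, for any reference $S\in\mathcal{D}$, a solution of $\mathcal{D}$ farthest from $S$: running the oracle with weights $+1$ off $S$ and $-1$ on $S$ maximizes $|D\triangle S|$ up to the additive constant $|S|$. Guided by this, I would maintain a ``scattered'' subfamily $S_1,\dots,S_m\in\mathcal{D}$ that is pairwise at distance greater than $c\cdot d$ for a suitable constant $c$, growing it (via a sequence of $\{-1,1\}$-weighted optimization calls guided by the current $S_i$'s) as long as some solution lies outside all the balls $B(S_i,c\cdot d)$. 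If the process passes $k+1$ sets, the instance is decided outright: for the diversification problems this is a yes-instance, while for the clustering problems the radius-$d$ balls around $S_1,\dots,S_{k+1}$ are pairwise disjoint, so no $k$ centers can cover them. Otherwise $m\le k$ and, by maximality, $\mathcal{D}\subseteq\bigcup_{i=1}^{m}B(S_i,c\cdot d)$, so $\mathcal{D}$ is covered by at most $k$ balls of radius $c\cdot d$.

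Inside each ball, the family $\{\,D\triangle S_i : D\in\mathcal{D},\ |D\triangle S_i|\le c\cdot d\,\}$ consists of subsets of $U$ of size at most $c\cdot d$, which is exactly the regime an exact-extension query can probe with bounded parameters: asking for $D\in\mathcal{D}$ with $|D\triangle S_i|=r$, $X\subseteq D$, $Y\cap D=\emptyset$ for $r,|X|,|Y|\le c\cdot d$. Here color-coding enters, and is where the randomization comes from: to search for a solution whose difference with $S_i$ respects prescribed required/forbidden coordinates (and, where needed, stays far from the other references), one randomly colors $U$ to isolate the $O(k\cdot d)$ relevant coordinates and only imposes the surviving constraints, keeping $r+|X|+|Y|=O(k\cdot d)$. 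Iterating such queries greedily, I collect in each ball a set of representatives and stop once no query returns a solution that is ``new'' relative to those already collected; a sunflower/representative-family argument on subsets of size $O(k\cdot d)$ bounds the number of iterations, hence the size of the collected set, by a function of $k+d$. Then $\mathcal{K}$ is the union of $\{S_1,\dots,S_m\}$ with the per-ball representatives.

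Correctness of the sparsifier — $\mathcal{P}_{\mathcal{D}}$ is feasible over $\mathcal{D}$ iff over $\mathcal{K}$ — is an exchange argument: given a feasible configuration using some $D\notin\mathcal{K}$, the set $D$ lies in some ball $B(S_i,c\cdot d)$, and by the stopping rule some $K\in\mathcal{K}$ in that ball agrees with $D$ on every coordinate that matters for its interaction, at the relevant distance thresholds, with the other sets of the configuration and with all of $\mathcal{D}$; swapping $D$ for $K$ preserves feasibility. I expect the two real obstacles to be (i) making the conjunctive ``far from all references / prescribed required and forbidden coordinates'' search reduce honestly to a constant-parameter exact-extension query despite the exact-extension oracle only measuring distance to a single reference — reconciling distance-to-one-reference (oracle-friendly) with distance-to-many-references (which color-coding must localize) is the delicate point — and (ii) fixing the precise meaning of ``new'' in the greedy, together with the matching sunflower bound, so that the collected representatives form a max-distance sparsifier valid for all four problems at once.
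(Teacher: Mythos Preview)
Your overall architecture is the paper's: use the $(-1,1)$-oracle to build a scattered family $\{S_1,\dots,S_m\}$, $m\le k$, whose $p$-balls cover $\mathcal{D}$ (or stop with $k+1$ pairwise-far sets, which already form a sparsifier by Lemma~\ref{lem:tooscattered}), then inside each ball pass to the shifted family $\{D\triangle S_i : D\in\mathcal{D}\cap\mathcal{B}(S_i,p)\}$ of sets of size $\le p$ and run the sunflower-based sparsification of Section~\ref{sec:framework_small}, where the exact extension oracle on $\mathcal{D}$ with $C=S_i$ is exactly the exact empty extension oracle on the shifted family. Your ``new'' criterion and sunflower bound in obstacle~(ii) are precisely Lemmas~\ref{lem:previouspaper}--\ref{lem:numloops}.

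The genuine gap is the first phase. The weight vector ``$+1$ off $S$, $-1$ on $S$'' maximizes $|D\triangle S|$ for \emph{one} reference; no sequence of such deterministic calls ``guided by the $S_i$'s'' can locate a $D$ simultaneously far from all of $S_1,\dots,S_m$, nor certify none exists. Take $\mathcal{D}=\{S_1,S_2,S_3,D\}$ with $|S_i\triangle S_j|=100d$ and $|D\triangle S_i|=50d$ for all $i$: each call keyed to an $S_i$ returns some $S_j$ already collected, $D$ is never found, and you wrongly declare coverage. The paper resolves this by sampling $w\in\{-1,1\}^U$ \emph{uniformly at random} and proving (Lemma~\ref{lem:new_prob}, by partitioning $U$ into the $\le 2^k$ equivalence classes induced by the sets $S_i\triangle D$ and analyzing each class) that if some $D$ lies outside $\bigcup_i\mathcal{B}(S_i,p)$ with $p=(4d+2)^2 2^{k-1}$, then with probability $\ge 2^{-2^{O(k)}}$ the oracle's output lies outside $\bigcup_i\mathcal{B}(S_i,2d)$. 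This random sampling---not color-coding---is the only source of randomization in the framework; the paper uses no color-coding at all. Relatedly, your obstacle~(i) is handled without it: the per-ball sparsifier is computed \emph{with respect to the larger family $\mathcal{B}(\emptyset,p+d)$} rather than with respect to itself, and Lemma~\ref{lem:union_distance} shows that the union, shifted back, is a $d$-limited $k$-max-distance sparsifier of $\mathcal{D}$ with respect to all of $2^U$, so single-reference oracle calls suffice throughout.
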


\begin{table}[t!]
    \centering
    \begin{tabular}{c|c|c}
        Domain & Parameter & Reference \\ \hline \hline
        $t$-Linear Matroid Intersection & $k+\ell+t$ & Sec.~\ref{sec:app_matroid_intersection}\\ \hline
        Almost $2$-SAT  & $k+\ell $ & Sec.~\ref{sec:app_almostsat}\\ \hline
        Independent Set on Certain Graphs  & $k+\ell $ & Sec.~\ref{sec:app_mis}\\ \hline
        Min Edge $s,t$-Flow  & $k+d$ &Sec.~\ref{sec:app_flow}\\ \hline
        Steiner Tree  & $k+d+|T|$ &Sec.~\ref{sec:app_steiner}\\ \hline
        Vertex Set of Min $s,t$-Cut & $k+d$ &Sec.~\ref{sec:app_cut}\\ \hline
        Vertex Set of Edge Bipartization & $k+d+s$ &Sec.~\ref{sec:app_bipartization}\\ \hline
    \end{tabular}
    \caption{List of new results for the max-min diversification problem obtained by our frameworks. 
    The first column represents the domain $\mathcal{D}$. 
    The second and third columns represent parameterization and the reference, respectively.
    For the formal definition of each domain, see Section~\ref{sec:application}.}
\end{table}

\subsubsection{Applications of Theorem~\ref{thm:framework_small}}\label{sec:intro_app_small}

On most domains $\mathcal{D}$, the exact empty extension oracle can be designed by almost the same way as an algorithm to extract a single solution from $\mathcal{D}$. 
For example, consider the case where $\mathcal{D}$ is the \emph{$\ell $-path domain}, i.e., a domain of sets of edges on paths of length $\ell $. 
In this case, the exact empty extension oracle on $\mathcal{D}$ is equivalent to the problem of finding an $\ell $-path in the graph obtained by removing all edges in $Y$ from the input.
Combining Theorem~\ref{thm:framework_small} with this empirical fact, we can claim that, for most domains $\mathcal{D}$, the diversification and clustering problems parameterized by $k+\ell $ on $\mathcal{D}$ are as easy as determining the non-emptiness of $\mathcal{D}$.

To demonstrate that Theorem~\ref{thm:framework_small} yields existing tractability results, we design the oracles for the domains of the vertex cover~\cite{baste2022diversity,baste2019fpt}, $t$-hitting set~\cite{baste2022diversity}, feedback vertex set~\cite{baste2022diversity}, and common independent set of two matroids~\cite{eiben2024determinantal,fomin2024diversecollection}. We also apply our framework on new domains, $t$-represented linear matroid intersection, almost $2$-SAT, and independent set on \emph{subgraph-closed IS-FPT} graph classes. Here, a graph class is \emph{subgraph-closed IS-FPT} if it is closed under taking subgraphs and the problem of finding independent set of size $\ell $ is FPT parameterized by $\ell $.
The following theorem summarizes our results, where the precise definitions of each domain are given in Section~\ref{sec:paramkl}.
\begin{theorem}
Let $\mathcal{D}$ be the domain of vertex covers, $t$-hitting sets, feedback vertex sets, $t$-represented linear matroid intersections, almost $2$-SATs, or independent sets on subgraph-closed IS-FPT graph classes.
Then, max-min and max-sum diversification problems and $k$-center and $k$-sum-of-radii clustering problems admit an FPT algorithm, where the parameterization is $k+\ell $ except for the $t$-hitting set and $t$-represented linear matroid intersection, which are parameterized by $k+\ell+t$.
\end{theorem}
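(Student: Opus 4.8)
The plan is to derive this from Theorem~\ref{thm:framework_small}. That theorem already reduces, in FPT (in $k+\ell$) time and with an FPT number of oracle calls, each of the four problems $\mathcal{P}_{\mathcal{D}}$ to instances of the exact empty extension oracle on $\mathcal{D}$ in which $r$ and $|Y|$ are bounded by a constant depending only on $k+\ell$. So it suffices to exhibit, for each domain $\mathcal{D}$ listed, an algorithm for the exact empty extension oracle---decide whether some $D\in\mathcal{D}$ has $|D|=r$ and $D\cap Y=\emptyset$, and if so return one---that is FPT parameterized by $r$ (equivalently by $\ell$, since $|D|\le\ell$ for every $D\in\mathcal{D}$ forces $r\le\ell$ in any nontrivial call), with the extra parameter $t$ for $t$-hitting sets and $t$-represented matroid intersections. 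In each case the oracle will be the textbook single-solution extraction algorithm for the underlying problem, run after ``forbidding'' the elements of $Y$, followed by a padding step handling the exact-size constraint $|D|=r$: whenever $\mathcal{D}$ as defined in Section~\ref{sec:paramkl} is monotone under adding elements of $U$ (true for vertex covers, $t$-hitting sets, feedback vertex sets, and almost-$2$-SAT deletion sets), a feasible set of size $<r$ avoiding $Y$ can be completed to one of size exactly $r$ by adding arbitrary elements of $U\setminus Y$, and if $|U\setminus Y|<r$ we may return $\bot$.

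For the individual domains we would proceed as follows. For \emph{vertex cover} and \emph{$t$-hitting set}: forbidding $Y$ forces every element of $Y$ out of the solution, hence forces into it the elements needed to cover the sets meeting $Y$; we then run the standard bounded-depth branching---on an uncovered edge, resp.\ an uncovered set of size $\le t$---never branching into $Y$, yielding a search tree of size $2^r$, resp.\ $t^r$. For \emph{feedback vertex set}: we run the standard FPT algorithm for \textsc{Feedback Vertex Set} with the vertices of $Y$ declared undeletable, i.e.\ we branch only on deletable vertices, apply the usual reduction rules unchanged, and output $\bot$ once an irreducible instance still has a cycle inside $Y$. For \emph{$t$-represented linear matroid intersection}: we first delete $Y$ from each of the $t$ linearly represented matroids (matroid deletion preserves linear representability) and then search for a common independent set of size exactly $r$, via the routine of Section~\ref{sec:app_matroid_intersection}. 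For \emph{almost $2$-SAT}: we invoke the FPT algorithm for \textsc{Almost $2$-SAT} restricted to deletion sets disjoint from $Y$ (and return $\bot$ when the residual instance admits no such set). For \emph{independent set on a subgraph-closed IS-FPT graph class}: since the class is subgraph-closed, $G-Y$ remains in the class, so we run the assumed FPT algorithm for an independent set of size $r$ on $G-Y$. Composing each of these---all FPT in $k+\ell$, resp.\ $k+\ell+t$---with the FPT-many oracle calls of Theorem~\ref{thm:framework_small} gives the claimed FPT algorithms for max-min and max-sum diversification and for $k$-center and $k$-sum-of-radii clustering on every listed domain.

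The main obstacle is not any single oracle, each being a routine modification of a known FPT algorithm, but the two points where care is needed. First, the exact-size requirement ``$|D|=r$'' must be reconciled with the precise definition of each $\mathcal{D}$ in Section~\ref{sec:paramkl}: for the size-monotone domains this is the padding argument above, whereas for matroid common independent sets only certain values of $r$ are realizable and the intersection routine must be invoked for that $r$ directly, since padding is unavailable. Second, one must verify that the forbidden set $Y$ interacts correctly with the branching and iterative-compression structure of the \textsc{Feedback Vertex Set} and \textsc{Almost $2$-SAT} algorithms---in particular that an obstruction located entirely inside $Y$ (a cycle on $Y$, resp.\ a conflict repairable only within $Y$) is detected and reported as $\bot$ rather than silently skipped; this is where I expect the argument in Section~\ref{sec:paramkl} to concentrate most of its work.
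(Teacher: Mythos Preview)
Your high-level strategy is exactly the paper's: invoke Theorem~\ref{thm:framework_small} and supply, for each domain, an FPT exact empty extension oracle. For vertex cover, $t$-hitting set, matroid intersection, and independent sets on subgraph-closed IS-FPT classes, your oracles coincide with those in Section~\ref{sec:paramkl} essentially verbatim (the padding argument you spell out is implicit in the paper).

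The two places where you anticipate difficulty are handled in the paper by slightly different---and simpler---devices than you propose. For \textsc{Feedback Vertex Set}, rather than threading an ``undeletable'' constraint through a branching or iterative-compression algorithm, the paper invokes the compact representation of all minimal feedback vertex sets of size at most $r$ due to Guo et al.: there is an FPT-size list $\mathfrak{L}$ of families $\mathcal{C}$ of pairwise disjoint vertex sets such that every minimal FVS of size $\le r$ is a transversal of some $\mathcal{C}\in\mathfrak{L}$. One then simply checks, for each $\mathcal{C}\in\mathfrak{L}$, whether every $C\in\mathcal{C}$ meets $V\setminus Y$. For \textsc{Almost $2$-SAT}, the paper exploits that $|Y|$ is bounded: the clauses in $Y$ together contain at most $2|Y|$ variables, so one guesses their truth values at cost $2^{2|Y|}$; a guess that falsifies some clause of $Y$ is discarded, and otherwise the clauses of $Y$ are satisfied and may be dropped, reducing to an unconstrained Almost $2$-SAT instance. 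In both cases the ``obstruction inside $Y$'' that you flag is detected trivially (an empty $C\setminus Y$, respectively a falsified clause of $Y$), with no need to open up the internals of the base FPT algorithm.
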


Theorem~\ref{thm:framework_small} also generalizes existing frameworks for diversification. Baste et al.~\cite{baste2019fpt} provided an algorithmic framework for diversification using a \emph{loss-less kernel}~\cite{carbonnel2016propagation,carbonnel2017kernelization}, which, roughly speaking, is a kernel that completely preserves the information of the solution space. Since loss-less kernels are known for very limited domains, their framework requires very strong assumptions. Our framework has broader applicability than theirs because it relies on a weaker oracle, as the exact empty extension oracle can be constructed using a loss-less kernel. 
Hanaka et al.~\cite{hanaka2021finding} developed a color-coding-based framework for diversification. The oracle they use can be regarded as the exact empty extension oracle with an additional colorfulness constraint. Our framework again has broader applicability than theirs because our oracle can be constructed using theirs. Moreover, our framework also treats clustering problems, which these two do not.

\subsubsection{Applications of Theorem~\ref{thm:framework_large}}\label{sec:intro_app_large}

FPT algorithms for the max-min diversification problems on $\mathcal{D}$ parameterized by $k+d$ are known for the cases where $\mathcal{D}$ is the family of matroid bases~\cite{eiben2024determinantal,fomin2024diversecollection}, perfect matchings~\cite{fomin2024diversecollection}, and shortest paths~\cite{funayama2024parameterized}. 
The result for the perfect matchings is later extended to the matchings of specified size, not necessarily perfect~\cite{eiben2024determinantal}.
Additionally, for the cases where $\mathcal{D}$ is the family of interval schedulings~\cite{hanaka2021finding} and the longest common subsequences of an absolute constant number of strings~\cite{shida2024finding}, FPT algorithms parameterized by $k+\ell $ are known. Both of these two can be generalized to the domain of dynamic programming problems, which we define in Section~\ref{sec:paramkd}. 
Furthermore, the problem of finding a pair of a branching and an in-branching such that the Hamming distance between them is at least $d$ is investigated as the name of \emph{$d$-distinct branchings problem}~\cite{bang2021k,bang2016parameterized,gutin2018k}, which admits FPT algorithm parameterized by $d$.
This problem can naturally be extended to the case that selects $k_1$ branchings and $k_2$ in-branching, rather than one each.
We give FPT algorithms parameterized by $k+d$ for all those problems, where $k=k_1+k_2$ for the extended version of $d$-distinct branchings problem.
We also give FPT algorithms on domains of minimum edge $s,t$-flows, Steiner trees, vertex sets of $s,t$-mincut, and vertex sets of edge bipartization, which are domains where no FPT algorithm for the max-min diversification problem is previously known.
Remark that the domain of shortest paths~\cite{funayama2024parameterized} is the special case of the minimum edge $s,t$-flow domain and the minimum Steiner tree domain. 
The following theorem summarizes our results, where the precise definitions of each domain are given in Section~\ref{sec:paramkd}.
\begin{theorem}
Let $\mathcal{D}$ be the domain of matroid bases, branchings, matchings of specified size, minimum edge $s,t$-flows, minimum Steiner trees, vertex sets of $s,t$-mincut, vertex sets of edge bipartization, and dynamic programming problems.
Then, max-min and max-sum diversification problems and $k$-center and $k$-sum-of-radii clustering problems admit an FPT algorithm, where the parameterization is $k+d$ except for the Steiner tree, which is parameterized by $k+d+|T|$ for the terminal set $T$, and edge bipartization, which is parameterized by $k+\ell+s$, where $s$ is the minimum number of edges to be removed to make the given graph bipartite.
Furthermore, the extended version of $d$-distinct branching problem also admits FPT algorithm parameterized by $k+d$.
\end{theorem}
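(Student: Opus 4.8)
The plan is to apply our general frameworks --- Theorem~\ref{thm:framework_small} for the parameterizations involving $\ell$ and Theorem~\ref{thm:framework_large} for those involving $d$ --- to each domain in turn. Since those theorems cover all four problems $\mathcal{P}_{\mathcal{D}}$ (max-min and max-sum diversification and $k$-center and $k$-sum-of-radii clustering) simultaneously, it suffices, for every domain on the list, to implement the required oracles (the exact empty extension oracle, or the $(-1,1)$-optimization oracle together with the exact extension oracle) in FPT time with respect to the stated parameter, under the promise that the quantities $r$, $|X|$, $|Y|$ passed to the extension oracles are bounded by a constant depending only on that parameter. We treat the domains in groups according to the classical machinery that powers their oracles.

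For the matroid-type domains --- matroid bases, branchings (viewed as common independent sets of a graphic and a partition matroid), and matchings of a prescribed size --- the $(-1,1)$-optimization oracle is a weighted optimization problem solved in polynomial time by a classical algorithm (the greedy algorithm, Edmonds' branching / weighted matroid intersection algorithm, and weighted matching, respectively; for variants requiring a solution of a fixed cardinality we add a large lexicographic term to the weights). For the exact extension oracle we first delete $Y$ and force $X$ (returning $\bot$ if $X$ is not extendible); since $|D|=|C|$ for any $C,D$ in the domain, the requirement $|D\triangle C|=r$ means $D$ arises from $C$ by adding $r/2$ and deleting $r/2$ elements, so it suffices to enumerate the $O(1)$ possibilities for how the symmetric difference meets $X$, $Y$, and $C$ and, for each, solve a weighted matroid-intersection or matching instance, using color coding or a perfect hash family to localize the $O(r)$ affected elements when needed (harmless, since Theorem~\ref{thm:framework_large} is already randomized). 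For the flow- and cut-type domains --- minimum edge $s,t$-flows and vertex sets of $s,t$-mincuts --- the oracles reduce to min-cost flow and, for the lattice of minimum cuts, to the maximum-weight-closure problem on the Picard--Queyranne DAG, both polynomial; the exact-distance constraint is again handled by enumerating the $O(1)$ splits of $r$ and modifying costs accordingly.

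For the FPT-subroutine-type domains the oracles are realized by running the fixed-parameter algorithm of the underlying problem with the forced and forbidden elements imposed and a bounded distance budget: Dreyfus--Wagner / Nederlof's algorithm for minimum Steiner trees (parameter $|T|$, with a lexicographic weighting that keeps the tree of minimum size while optimizing $\sum_{e\in D}w_e$ or enforcing the forced and forbidden sets), iterative compression for vertex sets of edge bipartization (where the parameter $s$ bounds the minimum number of edges whose deletion makes the graph bipartite, hence $\ell\le 2s$ and Theorem~\ref{thm:framework_small} applies), and a modified dynamic program for the dynamic-programming domains of Section~\ref{sec:paramkd} (which subsume interval scheduling and longest common subsequences). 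The extended $d$-distinct branchings problem is not a diversification problem on a single domain, so instead of invoking Theorem~\ref{thm:framework_large} as a black box we run the max-distance sparsifier construction that underlies it separately on the branching domain and on the in-branching domain, obtaining constant-size sparsifiers $\mathcal{K}_1$ and $\mathcal{K}_2$, and then brute-force over all ways of choosing $k_1$ sets from $\mathcal{K}_1$ and $k_2$ sets from $\mathcal{K}_2$.

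The step I expect to be the main obstacle is the exact extension oracle: enforcing Hamming distance \emph{exactly} $r$ from the reference solution $C$ (rather than at most $r$), simultaneously with the forced set $X$ and the forbidden set $Y$, is not a textbook constraint for the matroid/matching or flow/cut domains and requires the reductions to prescribed-exchange and prescribed-overlap problems sketched above, with the number of auxiliary instances and the parameters handed to them kept within the allowed constant. The heterogeneous extended branching problem is a second, separate difficulty, since it forces us to open up the framework rather than use it as a black box.
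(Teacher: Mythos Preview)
Your high-level plan---apply Theorem~\ref{thm:framework_large} (and the max-distance sparsifier construction behind it) and implement the two oracles for each domain---is indeed the paper's strategy, and your treatment of the extended $d$-distinct branchings problem is exactly what the paper does. However, several of your oracle sketches contain genuine gaps.

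The most concrete error is for edge bipartization. You claim $\ell\le 2s$ and then invoke Theorem~\ref{thm:framework_small}. This inequality is false: the domain consists of \emph{vertex sets} $D\subseteq V$ minimizing $|\beta(D)|$, so $\ell=\max_{D\in\mathcal{D}}|D|$ can be as large as $|V|$ even when $s=0$ (take any bipartite graph; one side of the bipartition is a set in $\mathcal{D}$). The paper instead designs the $(-1,1)$-optimization and exact extension oracles for Theorem~\ref{thm:framework_large}, parameterized by $k+d+s$, by proving $\beta(D_1)\triangle\beta(D_2)=\delta(D_1\triangle D_2)$ and thereby expressing $\mathcal{D}$ as a union over $T\subseteq V[\beta(D)]$ (at most $4^s$ choices) of minimum-cut domains; the extension oracle then recurses with branching factor $\le 2s$ and depth $\le r$.

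Second, for the vertex set of a minimum $s,t$-cut you assert that the exact extension oracle follows from Picard--Queyranne plus enumerating $O(1)$ splits of $r$. The paper explicitly says it does \emph{not} have an exact extension oracle for this domain; instead it exploits the Birkhoff poset to argue that either the relevant ideal neighbourhoods $W^{+}_C,W^{-}_C$ are large (in which case a trivial $d$-limited sparsifier of size $k+1$ is produced directly), or they are small (in which case the entire ball $\mathcal{D}\cap\mathcal{B}(C,p)$ has at most $2^{4kd}$ elements and can be brute-forced). Your ``$O(1)$ splits of $r$'' idea does not realize the exactness constraint.

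Finally, your matching and flow sketches miss the actual mechanisms. For matchings of a specified size the paper reduces the exact extension oracle to the \emph{exact matching problem} (red/blue edges, find a perfect matching with a prescribed number of blue edges), which is where the randomization comes from; ``color coding to localize the $O(r)$ affected elements'' is not how this is done. For minimum edge $s,t$-flows, the paper rewrites $\mathcal{D}$ as $\{F\triangle D^*: D^*\ \text{Eulerian},\ |D^*\cap F|=|D^*\setminus F|\}$ in the residual graph and uses color-coding to find a colorful Eulerian subgraph of size $r$---again different from your min-cost-flow-plus-splits sketch. For matroid bases the paper's exact extension oracle is also different from (and cleaner than) yours: it computes $D_{\min},D_{\max}$ minimizing and maximizing $|D\triangle C|$ subject to $X\subseteq D$, $D\cap Y=\emptyset$, and then walks from $D_{\min}$ towards $D_{\max}$ using the exchange property until $|D\triangle C|=r$.
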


Eiben et al.~\cite{eiben2024determinantal} provided a technique called \emph{determinantal sieving}, which is a general tool to give and speed up parameterized algorithms, including that for diversification problems. Particularly, they provided a framework to solve the diversification problem by using an oracle that, roughly speaking, counts the number of solutions modulo $2$. 
Using their framework, they improved the running times of FPT algorithms for max-min diversification problems on matchings and matroid bases, as well as the $d$-distinct branchings problem.
Although not stated explicitly, their framework seems to yield FPT algorithms parameterized by $k+d$ when $\mathcal{D}$ is the dynamic programming domain, thereby improving the parameterizations in the results of~\cite{hanaka2021finding} and~\cite{shida2024finding}, respectively, as well as the extended version of $d$-distinct branchings problems.
We are not sure whether our framework generalizes theirs, that is, whether our oracle can be constructed using their oracle.  
However, we strongly believe that our framework has broader applicability because their framework assumes counting oracles, which is often hard even modulo $2$.
In contrast, our framework uses optimization-type oracles, which are generally more tractable than counting. 
Indeed, our framework provides an FPT algorithm with the same parameterization for every domain which they explicitly considered.
Moreover, we do not think their framework can give an FPT algorithm for the max-min diversification problem on the domains of minimum edge $s,t$-flows, vertex sets of minimum $s,t$-cut, and vertex sets of edge bipartization.
Furthermore, our framework can be applied not only to diversification problems but also to clustering problems.

\subsection{Framework Overview}\label{sec:framework_overview}
In this section, we provide an overview of the entire flow of our frameworks. Our frameworks first construct max-distance sparsifiers using the corresponding oracles and then solve diversification and clustering problems using them. 

\subsubsection{From $d$-Limited $k$-Max-Distance Sparsifier to Diversification and Clustering}
We begin by defining the max-distance sparsifiers.
The key for Theorem~\ref{thm:framework_small} is designing the following \emph{$k$-max-distance sparsifier of $\mathcal{D}$}.
\begin{definition}[\rm{$k$-max-distance sparsifier}]
Let $k \in \mathbb{Z}_{\geq 1}$. Let $U$ be a finite set and $\mathcal{D}, \mathcal{F} \subseteq 2^U$. We say that $\mathcal{K} \subseteq \mathcal{D}$ is a \emph{$k$-max-distance sparsifier of $\mathcal{D}$ with respect to $\mathcal{F}$} if for any $(F_1, \dots, F_k) \in \mathcal{F}^k$ and $(z_1, \dots, z_k) \in \mathbb{Z}_{\geq 0}^{k}$, the two conditions
\begin{itemize}
    \item There exists $D \in \mathcal{D}$ such that for each $i\in \{1, \dots, k\}$, $|F_i \triangle D| \geq z_i$.
    \item There exists $K \in \mathcal{K}$ such that for each $i\in \{1, \dots, k\}$, $|F_i \triangle K| \geq z_i$.
\end{itemize}
are equivalent.
Unless specifically noted, when we write \emph{$k$-max-distance sparsifier of $\mathcal{D}$}, we mean the case where $\mathcal{D} = \mathcal{F}$.
\end{definition}

Similarly, the key for Theorem~\ref{thm:framework_large} is designing the following \emph{$d$-limited $k$-max-distance sparsifier of $\mathcal{D}$}.
\begin{definition}[\rm{$d$-limited $k$-max-distance sparsifier}]
Let $k \in \mathbb{Z}_{\geq 1}$ and $d \in \mathbb{Z}_{\geq 0}$. Let $U$ be a finite set and $\mathcal{D}, \mathcal{F} \subseteq 2^U$. We say that $\mathcal{K} \subseteq \mathcal{D}$ is a \emph{$d$-limited $k$-max-distance sparsifier of $\mathcal{D}$ with respect to $\mathcal{F}$} if for any $(F_1, \dots, F_k) \in \mathcal{F}^k$ and $(z_1, \dots, z_k) \in \{0,\dots, d\}^k$, the two conditions
\begin{itemize}
    \item There exists $D \in \mathcal{D}$ such that for each $i\in \{1, \dots, k\}$, $|F_i \triangle D| \geq z_i$.
    \item There exists $K \in \mathcal{K}$ such that for each $i\in \{1, \dots, k\}$, $|F_i \triangle K| \geq z_i$.
\end{itemize}
are equivalent.
Unless specifically noted, when we write \emph{$d$-limited $k$-max-distance sparsifier of $\mathcal{D}$}, we mean the case where $\mathcal{D} = \mathcal{F}$.
\end{definition}

The difference between the two sparsifiers is that the domain of $(z_1,\dots, z_k)$ is $\mathbb{Z}_{\geq 0}^k$ in the former case, while it is $\{0,\dots, d\}^k$ in the latter. 
By definition, any $k$-max-distance sparsifier is also a $d$-limited $k$-max-distance sparsifier for any $d\in \mathbb{Z}_{\geq 0}$.
We can prove that given a $d$-limited $(k-1)$-max-distance sparsifier of $\mathcal{D}$ with size bounded by a constant that depends only on $k+d$, we can construct FPT algorithms parameterized by $k+d$ for the max-min/max-sum diversification problems on $\mathcal{D}$. 
Similarly, we can prove that given a $(d+1)$-limited $k$-max-distance sparsifier of $\mathcal{D}$ with size bounded by a constant that depends only on $k+d$, we can construct FPT algorithms parameterized by $k+d$ for the $k$-center/$k$-sum-of-radii clustering problems on $\mathcal{D}$. 
Therefore, to prove Theorems~\ref{thm:framework_small}~and~\ref{thm:framework_large}, it suffices to construct FPT oracle algorithms for designing $k$-max-distance sparsifiers and $d$-limited $k$-max-distance sparsifiers, respectively.

\subsubsection{Computing $k$-Max-Distance Sparsifier}\label{sec:overview_small}

The remaining task towards Theorem~\ref{thm:framework_small} is to provide an FPT algorithm parameterized by $k+\ell $ that constructs a $k$-max-distance sparsifier of $\mathcal{D}$ with size bounded by a constant that depends only on $k+\ell $. 
The key lemma toward this is that, if $\mathcal{K}$ contains a sufficiently large \emph{sunflower} (see Section~\ref{sec:framework_small} for the definition) consisting of sets of the same size, then we can safely remove one of them from $\mathcal{K}$ while preserving the property that $\mathcal{K}$ is a $k$-max-distance sparsifier (actually, for the sake of simplifying the framework, we prove a slightly stronger statement). 
Starting with $\mathcal{K}=\mathcal{D}$ and exhaustively removing such sets leads to a $\mathcal{K}$ that is still a $k$-max-distance sparsifier of $\mathcal{D}$, and by using the well-known sunflower lemma (Lemma~\ref{lem:sunflower}), its size is bounded by a constant.

However, this observation is still not sufficient to obtain an FPT algorithm. 
The reason is that $\mathcal{D}$ generally has exponential size, and removing sets one by one would require an exponential number of steps. Instead, our algorithm starts with $\mathcal{K}=\emptyset$ and exhaustively adds sets of $\mathcal{D}$ to $\mathcal{K}$ until $\mathcal{K}$ becomes a $k$-max-distance sparsifier. In this way, the number of steps is bounded by a constant. 
The remaining task is to choose a set to be added at each step. For this task, we design an FPT algorithm using constant number of calls of the exact empty extension oracle.



\subsubsection{Computing $d$-Limited $k$-Max-Distance Sparsifier}

The algorithm in the previous section alone is insufficient to prove Theorem~\ref{thm:framework_large} since $\ell $ is unbounded and the sunflower-lemma-based bound for the number of steps cannot be used.
Our algorithm divides $\mathcal{D}$ into at most $k$ clusters, computes a $d$-limited $k$-max-distance sparsifier for each cluster, and outputs their union.
Let $p>2d$ be a constant that depends only on $k+d$. 
We first find $\mathcal{C}\subseteq \mathcal{D}$ satisfying the following properties: (i) $|\mathcal{C}|\leq k$, (ii) for all distinct $C,C'\in \mathcal{C}$, $|C\triangle C'| > 2d$, and (iii) for all $D\in \mathcal{D}$, there exists $C\in \mathcal{C}$ such that $|D\triangle C|\leq p$.
If such a family does not exist, a trivial $d$-limited $k$-max-distance sparsifier of $\mathcal{D}$ will be found, and we output it and terminate.

We provide an algorithm for computing $\mathcal{C}$. Our algorithm starts with $\mathcal{C}=\emptyset$ and exhaustively adds sets in $\mathcal{D}$ to $\mathcal{C}$ until $\mathcal{C}$ satisfies the above conditions or its size exceeds $k$.
To choose the elements to be added, we randomly sample $w\in \{-1,1\}^U$ and call a $(-1,1)$-optimization oracle. We can prove for sufficiently large constant $p$ that if there exists $D\in \mathcal{D}$ such that $|D\triangle C|>p$ for any $C\in \mathcal{C}$, with a constant probability, the $(-1,1)$-optimization oracle will find a $D\in \mathcal{D}$ such that $|D\triangle C|>2d$ for any $C\in \mathcal{C}$. 
Thus, if $\mathcal{C}$ does not meet the conditions, by calling the $(-1,1)$-optimization oracle a sufficient number of times, we can find a set to add to $\mathcal{C}$ with high probability.

Here, we provide an algorithm for computing a $d$-limited $k$-max-distance sparsifier of $\mathcal{D}$ using $\mathcal{C}$.
For each cluster $\mathcal{D}_{C} := \{D\in \mathcal{D}\colon |D\triangle C|\leq p\}$, let $\mathcal{D}^*_{C} := \{D\triangle C\colon D\in \mathcal{D}_{C}\}$.
The algorithm computes a $k$-max-distance sparsifier of each $\mathcal{D}^*_{C}$ and outputs their union.
For technical reasons, we actually compute a slightly more general object, but we will not delve into the details here.
Since each $\mathcal{D}^*_{C}$ consists only of sets whose size is at most $p$, the $k$-max-distance sparsifier of $\mathcal{D}^*_{C}$ can be constructed using the algorithm in Section~\ref{sec:overview_small}.
The exact empty extension oracle on $\mathcal{D}^*_{C}$ corresponds to the exact extension oracle on $\mathcal{D}$.

Here, we note the difference between our framework and that used by Fomin et al.~\cite{fomin2024diversecollection} and Funayama et al.~\cite{funayama2024parameterized} to provide FPT algorithms for the max-min diversification problem on $\mathcal{D}$ when $\mathcal{D}$ is the family of perfect matchings and shortest paths, respectively.
Their algorithms also start by dividing $\mathcal{D}$ into clusters.
However, their algorithms perform stricter clustering than ours.
Specifically, in their clustering, clusters $\mathcal{D}_{C}$ corresponding to different $C\in \mathcal{C}$ must be well-separated.
In contrast, we allow clusters to overlap. This simplifies the clustering step compared to their approach at the cost of a more challenging task afterward. We resolve this more challenging task by introducing and designing the $d$-limited $k$-max-distance sparsifier.

\subsection{Further Related Work}\label{app:further_related}
\subsubsection{Further Algorithms for Diversification Problems}
The max-sum diversification problem is often more tractable compared to the max-min diversification problem, and polynomial-time algorithms are known for multiple domains $\mathcal{D}$.
Hanaka et al.~\cite{hanaka2021finding} provided a polynomial-time algorithm for the case where $\mathcal{D}$ is the base family of a matroid.
Hanaka et al.~\cite{hanaka2022computing} provided polynomial-time algorithms for the cases where $\mathcal{D}$ is the family of shortest paths, branchings, and bipartite matchings.
de Berg et al.~\cite{BergMS23} provided a polynomial-time algorithm for the case where $\mathcal{D}$ is the family of edge sets of minimum $s,t$-cuts.
There has also been active research on approximation algorithms for the max-sum diversification problem.
Hanaka et al.~\cite{hanaka2023framework} proposed a generic framework that provides local search-based approximation algorithms for max-sum diversification problems.
Gao et al.~\cite{gao2022obtaining} provided a framework for bicriteria approximation algorithms for the case where $\mathcal{D}$ is a family of (not necessarily optimal) solutions of an optimization problem.
Do et al.~\cite{do2023diverse} discussed the tradeoff between solution quality and diversity for submodular maximization on matroids.

There are several other research directions on parameterized algorithms for diversification problems.
Drabik and Masa{\v{r}}{\'\i}k~\cite{drabik2024finding} provided an FPT algorithm for diversification problems on domains expressible by $\mathrm{MSO}_1$ formulas on graphs with bounded cliquewidth, parameterized by $k$, $d$, cliquewidth, and the length of the $\mathrm{MSO}_1$ formula.
Arrighi et al.~\cite{arrighi2023synchronization} proposed an FPT algorithm for the max-min diversification problem on the set of synchronizing words for a deterministic finite automaton, where the distance between two words is measured by the edit distance.
Misra et al.~\cite{misra2024parameterized}~and Gima et al.~\cite{gima2024computing} independently investigated the parameterized complexity of the diverse pair of satisfiability problems.
Merkl et al.~\cite{merkl2025diversity} investigated the parameterized complexity of diversification problems on answers for conjunctive queries on relational databases.

\subsubsection{Parameterized Algorithms for Clustering Problems}

For the case where the point set is explicitly given, several research have been conducted on parameterized algorithms for clustering problems~\cite{bandyapadhyayL023a, chen2024parameterized, demaine2005fixed,eiben2023parameterized, feldmann2020parameterized, inamdar2020capacitated}. 
Probably, the research most closely related to our situation is the extension of the \emph{closest string problem} by Amir et al.~\cite{amir2014efficiency}. 
They defined the $k$-center and $k$-sum-of-radii clustering problems for sets of strings and investigated their parameterized complexity.
When the alphabet is binary, they consider the same problem as ours. 
However, their results do not directly imply our results since, in our setting, the points are implicitly given as the solution domain of combinatorial problems, which are generally of exponential size.


\subsubsection{Core-sets and Kernelization}

In the context of machine learning and data mining, \emph{core-sets}~\cite{agarwal2004approximating,feldman2020introduction} are widely investigated to reduce the size of the input point set. 
Although the precise definition varies depending on the context, a core-set typically refers to a subset of an input point set such that the solving a specific optimization problem on the core-set approximates the optimal solution of the same problem on the original set. 
Research has been conducted on constructing core-sets for diversification and clustering problems. 
In particular, Indyk et al.~\cite{indyk2014composable} constructed a \emph{composable core-set} for the max-min diversification problem on explicitly given point sets, which have applications in streaming and distributed computing. 
Core-sets for clustering problems are a popular research topic, and have been extensively studied for clustering problems such as $k$-means~\cite{chen2009coresets,har2004coresets}, $k$-median~\cite{badoiu2002approximate,chen2009coresets,har2004coresets}, $k$-center~\cite{badoiu2002approximate}, and $k$-sum-of-radii~\cite{inamdar2020capacitated}.

Our max-distance sparsifier can be viewed as a variant of core-sets, although some points are different from typical ones. 
Firstly, max-distance sparsification requires that the optimal value for the sparsifier is equal to that for the original instance, rather than approximating it. 
Secondly, in max-distance sparsification, $\mathcal{D}$ is not given explicitly but is implicitly given as a family of solutions to a combinatorial problem. 
Thirdly, whereas many core-sets are defined for problems in Euclidean metrics, this study deals with the Hamming metric.
Due to these differences, max-distance sparsifiers can be used to obtain exact algorithms for diversification and clustering problems on combinatorial domains.

In the context of parameterized algorithms, \emph{Kernelization}~\cite{fomin2019kernelization} refers to techniques for transforming an input into a smaller instance while preserving the optimal solution. Precisely, a \emph{kernelization} is a polynomial-time algorithm that, given an instance of a parameterized problem, produces an equivalent instance of constant size. In diversification problems, a max-distance sparsifier is similar to a kernel in the sense that it is a constant-size instance preserving the existence of a solution. However, while kernelization must be performed in polynomial time, max-distance sparsification is allowed to use FPT time.

\subsubsection{Diversification in AI Fields}

Diversification has been extensively studied in many areas of artificial intelligence, such as recommendation~\cite{ziegler2005improving}, ranking~\cite{clarke2008novelty}, fairness~\cite{CelisSV18}, and voting~\cite{LuB11}, as a method to prevent homogeneity and redundancy in the output.  
Most of these cases consider diversification over an explicitly given set of elements.  
Our setting corresponds to the case where each element is a solution to a combinatorial problem. 
We believe that this assumption captures many realistic scenarios where the output elements are not predefined, but must be computed, e.g., a car navigation system generating and recommending diverse shortest paths.

\subsection{Organization}

The rest of this paper is organized as follows. 
In Section~\ref{sec:usage}, we provide FPT algorithms for solving diversification and clustering problems on $\mathcal{D}$ using a constant-size $d$-limited $k$-max-distance sparsifier of $\mathcal{D}$.
In Section~\ref{sec:framework_small}, we prove Theorem~\ref{thm:framework_small} by providing an FPT oracle algorithm parameterized by $k+\ell $ that computes a constant-size $k$-max-distance sparsifier of $\mathcal{D}$.
In Section~\ref{sec:framework_large}, we prove Theorem~\ref{thm:framework_large} by providing an FPT oracle algorithm parameterized by $k+d$ that computes a constant-size $d$-limited $k$-max-distance sparsifier of $\mathcal{D}$. 
The discussion in Section~\ref{sec:framework_large} internally uses the results from Section~\ref{sec:framework_small}.
Finally, in Section~\ref{sec:application}, we apply the results of Theorems~\ref{thm:framework_small} and~\ref{thm:framework_large} to several domains $\mathcal{D}$ to obtain FPT algorithms for diversification and clustering problems.

\section{From Sparsifier to Diversification and Clustering}\label{sec:usage}

In this section, we provide FPT algorithms for diversification and clustering problems using a $d$-limited $k$-max-distance sparsifier of constant size.

\subsection{Diversification}\label{sec:usage_diverse}
Let $U$ be a finite set, $d\in \mathbb{Z}_{\geq 0}$, $k\in \mathbb{Z}_{\geq 1}$, and $\mathcal{D}\subseteq 2^U$.
For diversification problems, we have the following.
\begin{lemma}\label{lem:diversification_replace}
Let $\mathcal{K}\subseteq \mathcal{D}$ be a $d$-limited $(k-1)$-max-distance sparsifier of $\mathcal{D}$ and $(D_1, \dots, D_k)\in \mathcal{D}^k$.
Then, there is a $k$-tuple $(K_1, \dots, K_k) \in \mathcal{K}^k$ such that $\min(d, |D_i \triangle D_j|) \leq \min(d, |K_i \triangle K_j|)$ holds for all $1 \leq i < j \leq k$.
Particularly, if there exists a solution to the max-min/max-sum diversification problem on $\mathcal{D}$, then there exists a solution consisting only of sets in $\mathcal{K}$.
\end{lemma}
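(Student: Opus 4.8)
The plan is to establish the quantitative statement by replacing the sets $D_1,\dots,D_k$ by members of $\mathcal K$ one at a time, keeping all pairwise distances (truncated at $d$) from decreasing, and then to read off the two consequences for max-min and max-sum solutions. The guiding observation is that when we replace a single entry, the other $k-1$ entries are all held fixed and — because $\mathcal K\subseteq\mathcal D$ — all of them still lie in $\mathcal D=\mathcal F$, so the defining property of a $d$-limited $(k-1)$-max-distance sparsifier applies to them.

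Concretely, I would process the indices $i=k,k-1,\dots,1$ in decreasing order, maintaining a tuple $(A_1,\dots,A_k)$ initialized to $(D_1,\dots,D_k)$, such that after step $i$ the entries $A_i,\dots,A_k$ belong to $\mathcal K$ while $A_1,\dots,A_{i-1}$ still equal $D_1,\dots,D_{i-1}$, and such that the invariant $\min(d,|D_p\triangle D_q|)\le\min(d,|A_p\triangle A_q|)$ holds for all $p<q$. When processing index $i$, the $k-1$ frozen entries $(A_m)_{m\ne i}$ form a tuple in $\mathcal D^{k-1}=\mathcal F^{k-1}$; set $z_m:=\min(d,|A_m\triangle A_i|)\in\{0,\dots,d\}$ for $m\ne i$. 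Since $A_i\in\mathcal D$ witnesses $|A_m\triangle A_i|\ge z_m$ for every $m\ne i$, the first bullet in the definition of a $d$-limited $(k-1)$-max-distance sparsifier is satisfied, so the second bullet yields some $K_i\in\mathcal K$ with $|A_m\triangle K_i|\ge z_m$ for all $m\ne i$; set $A_i:=K_i$. Pairs not involving $i$ are untouched, and for a pair $\{m,i\}$ we get $\min(d,|A_m\triangle A_i|)\ge\min(d,z_m)=z_m=\min(d,|A_m\triangle D_i|)\ge\min(d,|D_m\triangle D_i|)$, the last step being the invariant prior to this step; hence the invariant is preserved. After step $1$, the tuple $(K_1,\dots,K_k):=(A_1,\dots,A_k)\in\mathcal K^k$ satisfies $\min(d,|D_i\triangle D_j|)\le\min(d,|K_i\triangle K_j|)$ for all $i<j$, which is the first claim. (For $k=1$ the claim degenerates to $\mathcal D\ne\emptyset\Rightarrow\mathcal K\ne\emptyset$, the base case of the sparsifier definition.)

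For the ``particularly'' part: if $(D_1,\dots,D_k)$ solves the max-min problem on $\mathcal D$, then $|D_i\triangle D_j|\ge d$, so $\min(d,|D_i\triangle D_j|)=d$, hence $\min(d,|K_i\triangle K_j|)\ge d$, i.e.\ $|K_i\triangle K_j|\ge d$ for all $i<j$, so $(K_1,\dots,K_k)\in\mathcal K^k$ is a solution. If $(D_1,\dots,D_k)$ solves the max-sum problem, I would first note that $\sum_{i<j}\min(d,|D_i\triangle D_j|)\ge d$: if some term is already $\ge d$ this is clear, and otherwise every truncated term equals $|D_i\triangle D_j|$, so the left side equals $\sum_{i<j}|D_i\triangle D_j|\ge d$. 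Then $\sum_{i<j}|K_i\triangle K_j|\ge\sum_{i<j}\min(d,|K_i\triangle K_j|)\ge\sum_{i<j}\min(d,|D_i\triangle D_j|)\ge d$, so $(K_1,\dots,K_k)$ is again a solution.

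I do not expect a genuine obstacle here; the argument is essentially careful bookkeeping. The one point that needs attention is the interplay between the truncation at $d$ and the range restriction of the sparsifier: the demands $z_m$ must be capped at $d$ so that they lie in $\{0,\dots,d\}$, which is precisely why the conclusion is phrased with $\min(d,\cdot)$ rather than with the raw Hamming distances, and why a $(k-1)$-sparsifier (one frozen coordinate fewer than $k$) is exactly what is needed.
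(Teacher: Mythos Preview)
Your proposal is correct and follows essentially the same approach as the paper: replace the $D_i$'s one at a time using the sparsifier property, with the other $k-1$ entries (all in $\mathcal{D}$) serving as the $(F_1,\dots,F_{k-1})$ in the definition. The paper's proof is more terse---it just asserts that it suffices to replace a single $D_i\notin\mathcal{K}$ without decreasing the truncated pairwise distances and leaves the iteration implicit---whereas you spell out the full loop with an explicit invariant and also give the argument for the max-sum ``particularly'' clause, which the paper omits.
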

\begin{proof}
Assume $(D_1, \dots, D_k)\in \mathcal{D}^k \setminus \mathcal{K}^k$ and let $i \in \{1, \dots, k\}$ be an index such that $D_i \not\in \mathcal{K}$.
It is sufficient to prove that there is a set $K_i \in \mathcal{K}$ such that $\min(d, |D_i \triangle D_j|) \leq \min(d, |K_i \triangle D_j|)$ holds for all $j \in \{1, \dots, k\} \setminus \{i\}$.
For $j \in \{1, \dots, k\} \setminus \{i\}$, let $z_j := \min(d, |D_i \triangle D_j|)$. Since $\mathcal{K}$ is a $d$-limited $(k-1)$-max-distance sparsifier of $\mathcal{D}$, there exists $K_i \in \mathcal{K}$ such that $\min(d, |K_i \triangle D_j|) \geq \min(d, z_j) = \min(d, |D_i \triangle D_j|)$ holds for all $j \in \{1, \dots, k\} \setminus \{i\}$.
\end{proof}
Considering an algorithm that exhaustively searches for a subfamily of $\mathcal{K}$ of size $k$, we can state the following.
\begin{lemma}\label{lem:diversification}
Assume there exists an FPT algorithm parameterized by $k+d$ to compute a $d$-limited $(k-1)$-max-distance sparsifier of $\mathcal{D}$ with size bounded by a constant that depends only on $k+d$. 
Then, there exists an FPT algorithm parameterized by $k+d$ for the max-min/max-sum diversification problem on $\mathcal{D}$.
\end{lemma}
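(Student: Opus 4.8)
The plan is to combine Lemma~\ref{lem:diversification_replace} with a brute-force search over size-$k$ sub-multisets of the sparsifier. First I would invoke the hypothesized FPT algorithm to compute a $d$-limited $(k-1)$-max-distance sparsifier $\mathcal{K}\subseteq\mathcal{D}$ with $|\mathcal{K}|\le g(k+d)$ for some computable function $g$; this costs $f(k+d)\cdot n^{O(1)}$ time. By Lemma~\ref{lem:diversification_replace}, the max-min (resp.\ max-sum) diversification problem on $\mathcal{D}$ has a solution if and only if it has a solution $(K_1,\dots,K_k)\in\mathcal{K}^k$, because for any candidate tuple in $\mathcal{D}^k$ there is a tuple in $\mathcal{K}^k$ whose pairwise distances are at least as large once truncated at $d$ — and since we may assume $d\le 2\ell$ is the relevant threshold (any larger $d$ makes the instance trivially infeasible or can be capped), truncation at $d$ does not affect feasibility of either problem.

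Next I would enumerate all $k$-tuples $(K_1,\dots,K_k)\in\mathcal{K}^k$; there are at most $|\mathcal{K}|^k\le g(k+d)^k$ of them, which is a function of $k+d$ only. For each such tuple, I compute $\min_{1\le i<j\le k}|K_i\triangle K_j|$ (for max-min) or $\sum_{1\le i<j\le k}|K_i\triangle K_j|$ (for max-sum); each symmetric difference is computed in $n^{O(1)}$ time, so each tuple is checked in $k^2\cdot n^{O(1)}$ time. We answer ``yes'' iff some tuple achieves value at least $d$. Total running time is $f(k+d)\cdot n^{O(1)} + g(k+d)^k\cdot k^2\cdot n^{O(1)}$, which is FPT parameterized by $k+d$.

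The only subtlety — and the main thing that needs care rather than being a genuine obstacle — is the direction of correctness. Feasibility on $\mathcal{K}$ trivially implies feasibility on $\mathcal{D}$ since $\mathcal{K}\subseteq\mathcal{D}$. For the converse, given a feasible tuple $(D_1,\dots,D_k)\in\mathcal{D}^k$ with $\min(d,|D_i\triangle D_j|)=d$ for all $i<j$ (after capping; if the true min is $\ge d$ then the capped min is exactly $d$), Lemma~\ref{lem:diversification_replace} yields $(K_1,\dots,K_k)\in\mathcal{K}^k$ with $\min(d,|K_i\triangle K_j|)\ge\min(d,|D_i\triangle D_j|)=d$, hence $|K_i\triangle K_j|\ge d$ for all $i<j$, so the tuple is feasible for max-min. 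The max-sum case is analogous, summing the inequalities $\min(d,|D_i\triangle D_j|)\le\min(d,|K_i\triangle K_j|)$: if $\sum_{i<j}|D_i\triangle D_j|\ge d$ but we cannot directly conclude $\sum_{i<j}|K_i\triangle K_j|\ge d$ when individual distances exceed $d$, we instead observe that it suffices to work with the capped objective $\sum_{i<j}\min(d,|D_i\triangle D_j|)$, and note that a max-sum instance is feasible iff the capped sum is at least $d$ — because if any single capped term is less than $d$ we can still have the total reach $d$, but more simply, since each term is a nonnegative integer, $\sum_{i<j}|D_i\triangle D_j|\ge d$ is equivalent to $\sum_{i<j}\min(\lceil d\rceil,|D_i\triangle D_j|)\ge d$ when there are at most $\binom{k}{2}$ terms; I would spell out this capping argument carefully (or, cleanly, apply the sparsifier with parameter $d$ and note Lemma~\ref{lem:diversification_replace}'s guarantee is term-by-term, so summing preserves ``$\ge d$'' provided we first replace $d$ by $d$ throughout and use that capped and uncapped feasibility coincide). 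This bookkeeping around the cap is where I would be most careful, but it poses no real difficulty.
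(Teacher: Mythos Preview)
Your proposal is correct and follows exactly the paper's approach: the paper states this lemma without a formal proof, merely noting ``Considering an algorithm that exhaustively searches for a subfamily of $\mathcal{K}$ of size $k$, we can state the following,'' relying on Lemma~\ref{lem:diversification_replace} (whose last sentence already asserts the max-sum case). Your max-sum bookkeeping is more elaborate than necessary---the clean argument is just a two-case split (either some $|D_i\triangle D_j|\ge d$, forcing the corresponding $|K_i\triangle K_j|\ge d$, or all $|D_i\triangle D_j|<d$, in which case capping does nothing and the term-by-term inequality sums directly)---but it is correct; the aside about $d\le 2\ell$ is irrelevant here and should be dropped.
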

We note that, under the slightly stronger assumption, the discussion in this section can directly be extended to the case where the sets $D_1, \dots, D_k$ are taken from different domains.
Specifically, let $\mathcal{D}_1,\dots, \mathcal{D}_k\subseteq 2^U$ and assume $(k-1)$-max-distance sparsifiers $\mathcal{K}_1,\dots, \mathcal{K}_k$ of these domains with respect to $2^U$ are computed.
Then, we can determine whether there exists a $k$-tuple $(D_1,\dots, D_k)\in \mathcal{D}_1\times \dots\times \mathcal{D}_k$ such that $\min_{1\leq i<j\leq k}|D_i\triangle D_j|\geq d$ (or $\sum_{1\leq i<j\leq k}|D_i\triangle D_j|\geq d$) by exhaustive search on $\mathcal{K}_1\times \dots\times \mathcal{K}_k$.

\subsection{Clustering}\label{sec:clustering}

Let $U$ be a finite set, $d\in \mathbb{Z}_{\geq 0}$, $k\in \mathbb{Z}_{\geq 1}$, and $\mathcal{D} \subseteq 2^U$.
Here, we provide an FPT algorithm for the $k$-center and $k$-sum-of-radii clustering problems using a $(d+1)$-limited $k$-max-distance sparsifier $\mathcal{K}$ of $\mathcal{D}$.
For $Z\subseteq U$ and $r\in \mathbb{Z}_{\geq 0}$, the \emph{ball} of radius $r$ centered at $Z$ is defined as $\mathcal{B}(Z,r):=\{Z'\subseteq U\colon |Z\triangle Z'|\leq r\}$.
The algorithm first guesses a partition of $\mathcal{K}$ into $k$ clusters $\mathcal{K}_1, \dots, \mathcal{K}_k$.
Since $|\mathcal{K}|$ is constant, the cost of this guess is constant.
Then, for each $i \in \{1, \dots, k\}$, the algorithm computes the minimum radius $r_i$ such that there is a set $D_i\in \mathcal{D}$ satisfying $\mathcal{K}_i\subseteq \mathcal{B}(D_i,r_i)$. If $r_i > d$, the algorithm asserts it instead of computing the specific value of $r_i$.
The $k$-center clustering problem and $k$-sum-of-radii clustering problem on $\mathcal{D}$ are solved by checking whether the maximum and sum, respectively, of the $r_i$s is at most $d$.
We show the correctness of this algorithm by proving the following.
\begin{lemma}\label{lem:clustering_kernel}
Let $(D_1, \dots, D_k) \in \mathcal{D}^k$ and $(r_1, \dots, r_k) \in \{0, \dots, d\}^k$.
Assume $\mathcal{K}_i\subseteq \mathcal{B}(D_i,r_i)$ holds for all $i \in \{1, \dots, k\}$. 
Then, for all $D \in \mathcal{D}$, there is an index $i \in \{1, \dots, k\}$ such that $D\in \mathcal{B}(D_i,r_i)$.
\end{lemma}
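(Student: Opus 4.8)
The plan is to argue by contradiction using the definition of $(d+1)$-limited $k$-max-distance sparsifier. Suppose, for contradiction, that there exists $D \in \mathcal{D}$ such that $|D_i \triangle D| > r_i$ for every $i \in \{1, \dots, k\}$; equivalently $|D_i \triangle D| \geq r_i + 1$ for all $i$. We want to derive a contradiction with the hypothesis that each $\mathcal{K}_i \subseteq \mathcal{B}(D_i, r_i)$.

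The key step is to feed the right test tuple into the sparsifier definition. Since the $\mathcal{K}_j$ partition $\mathcal{K}$, for this particular $D$ consider the tuple $(D, D, \dots, D) \in \mathcal{D}^k$ together with the threshold vector $(z_1, \dots, z_k)$ where $z_i := \min(d+1, r_i + 1)$. Note $z_i \in \{0, \dots, d+1\}$, so this is a legal input to the $(d+1)$-limited $k$-max-distance sparsifier property. The set $D \in \mathcal{D}$ witnesses the first condition: $|D \triangle D_i|$... wait — more carefully, I should apply the definition with the roles set up so that $D$ itself is the witness $D$ in the ``there exists $D \in \mathcal{D}$'' clause, with $F_i := D_i$. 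So: $F_i = D_i$ for each $i$, and $z_i = \min(d+1, r_i+1)$. Since $|F_i \triangle D| = |D_i \triangle D| \geq r_i + 1 \geq z_i$, the first condition of the definition holds. Therefore the second condition holds: there exists $K \in \mathcal{K}$ with $|D_i \triangle K| \geq z_i$ for all $i$. Now $K \in \mathcal{K}$, so $K \in \mathcal{K}_{i^*}$ for some $i^* \in \{1,\dots,k\}$, and by hypothesis $\mathcal{K}_{i^*} \subseteq \mathcal{B}(D_{i^*}, r_{i^*})$, giving $|D_{i^*} \triangle K| \leq r_{i^*}$. But we also have $|D_{i^*} \triangle K| \geq z_{i^*} = \min(d+1, r_{i^*}+1)$. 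Since $r_{i^*} \leq d$ we get $z_{i^*} = r_{i^*} + 1$, so $|D_{i^*} \triangle K| \geq r_{i^*} + 1 > r_{i^*}$, a contradiction.

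The argument is essentially a one-shot application of the sparsifier definition, so there is no real technical obstacle; the only subtle point is choosing the threshold vector correctly. One must take $z_i = r_i + 1$ (capped at $d+1$) rather than $r_i$, because the sparsifier preserves lower bounds on distances with non-strict inequalities, and we need to certify a strict excess $|D_i \triangle D| > r_i$; this is exactly why a $(d+1)$-limited sparsifier (rather than a $d$-limited one) is needed for clustering. I would also double-check the boundary behavior: if some $r_i = d$ then $z_i = d+1$ is still in range, and the cap is never actually binding on the contradiction since it is the index $i^*$ with $r_{i^*} \leq d$ that matters and there $z_{i^*} = r_{i^*}+1 \leq d+1$. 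Concluding, $\neg(|D_i \triangle D| > r_i \text{ for all } i)$ means some $i$ has $|D_i \triangle D| \leq r_i$, i.e., $D \in \mathcal{B}(D_i, r_i)$, as required.
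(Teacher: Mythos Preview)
Your proof is correct and follows essentially the same route as the paper: assume a bad $D$ exists, apply the $(d+1)$-limited $k$-max-distance sparsifier property with $F_i=D_i$ and thresholds $z_i=r_i+1$ to obtain $K\in\mathcal{K}$ with $|D_i\triangle K|\ge r_i+1$ for all $i$, and then observe that $K$ lies in none of the $\mathcal{K}_i$, contradicting that they partition $\mathcal{K}$. Your explicit cap $z_i=\min(d+1,r_i+1)$ is harmless but unnecessary, since $r_i\le d$ already gives $r_i+1\le d+1$; the paper simply writes $z_i=r_i+1$.
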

\begin{proof}
Assume the contrary. Then, there is a set $D \in \mathcal{D}$ such that for all $i \in \{1, \dots, k\}$, $|D_i \triangle D| \geq r_i + 1$. Since $\mathcal{K}$ is a $(d+1)$-limited $k$-max-distance sparsifier of $\mathcal{D}$, there is a set $K \in \mathcal{K}$ such that for all $i \in \{1, \dots, k\}$, $|D_i \triangle K| \geq r_i + 1$. Hence, $K \not\in \mathcal{K}_i$ for all $i \in \{1, \dots, k\}$, contradicting the fact that $(\mathcal{K}_1, \dots, \mathcal{K}_k)$ is a partition of $\mathcal{K}$.
\end{proof}

We now provide an algorithm to decide whether there exists $D \in \mathcal{D}$ with $\mathcal{K}_i\subseteq \mathcal{B}(D,r_i)$ for each $i\in \{1,\dots, k\}$ and $r_i\in \{0,\dots, d\}$.
If the domain $\mathcal{D}$ is $2^U$, this problem is equivalent to the \emph{closest string problem} on binary strings, for which a textbook FPT algorithm parameterized by $d+|\mathcal{K}_i|$ is known~\cite{cygan2015parameterized}.
Our algorithm is a modified version of this.
An element $e \in U$ is \emph{bad} if there exist both $K \in \mathcal{K}_i$ with $e \in K$ and $K \in \mathcal{K}_i$ with $e \not \in K$. The following lemma is fundamental.
\begin{lemma}[\rm{\cite{cygan2015parameterized}}]
If there are more than $d|\mathcal{K}_i|$ bad elements, no $D\in \mathcal{D}$ satisfies $\mathcal{K}_i\subseteq \mathcal{B}(D,d)$.
\end{lemma}
Let $B$ be the set of bad elements, and assume $|B|\leq d|\mathcal{K}_i|$. The algorithm first guesses $B'\subseteq B$. The cost of this guess is $2^{d|\mathcal{K}_i|}$. Then, it determines whether there exists $D \in \mathcal{D}$ such that $D\cap B=B'$ and $\mathcal{K}_i\subseteq \mathcal{B}(D,r_i)$. 
Let $K^* = \mathrm{argmax}_{K \in \mathcal{K}_i} |(K\cap B) \triangle B'|$. Then, we can claim the following.
\begin{lemma}
For $D \in \mathcal{D}$ such that $D\cap B=B'$, $\max_{K \in \mathcal{K}_i} |K \triangle D| = |K^* \triangle D|$.
\end{lemma}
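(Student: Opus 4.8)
The plan is to partition the ground set $U$ into the set $B$ of bad elements and its complement $U\setminus B$, and to evaluate $|K\triangle D|$ separately on the two parts. The first key observation is purely structural: by the definition of a bad element, every \emph{good} element $e\in U\setminus B$ belongs either to all members of $\mathcal{K}_i$ or to none of them. Consequently the restriction $K\cap(U\setminus B)$ is one and the same set for every $K\in\mathcal{K}_i$; call it $S$. (If $\mathcal{K}_i=\emptyset$ the claimed identity is vacuous, so we may assume $\mathcal{K}_i\neq\emptyset$.)

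Next I would fix an arbitrary $D\in\mathcal{D}$ with $D\cap B=B'$ and an arbitrary $K\in\mathcal{K}_i$, and split the symmetric difference as $|K\triangle D|=|(K\triangle D)\cap B|+|(K\triangle D)\setminus B|$. On $B$ we have $(K\triangle D)\cap B=(K\cap B)\triangle(D\cap B)=(K\cap B)\triangle B'$, using the hypothesis $D\cap B=B'$. On $U\setminus B$ we have $(K\triangle D)\setminus B=S\triangle(D\setminus B)$, which depends only on $D$ and not on the choice of $K\in\mathcal{K}_i$. Hence $|K\triangle D|=|(K\cap B)\triangle B'|+c$, where $c:=|S\triangle(D\setminus B)|$ is a constant independent of $K$.

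It then follows immediately that maximizing $|K\triangle D|$ over $K\in\mathcal{K}_i$ is equivalent to maximizing $|(K\cap B)\triangle B'|$ over $K\in\mathcal{K}_i$, and the latter maximum is attained at $K^*$ by the definition $K^*=\mathrm{argmax}_{K\in\mathcal{K}_i}|(K\cap B)\triangle B'|$; this yields $\max_{K\in\mathcal{K}_i}|K\triangle D|=|K^*\triangle D|$. The argument involves no real obstacle beyond the bookkeeping of this symmetric-difference decomposition; the one point worth stating explicitly is the structural fact that the good elements form a common ``fixed part'' across all of $\mathcal{K}_i$, which is exactly what makes the contribution of $U\setminus B$ constant in $K$.
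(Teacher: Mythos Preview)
Your proof is correct and follows essentially the same approach as the paper: both decompose $|K\triangle D|$ into the contribution on $B$ and on $U\setminus B$, observe that the latter is constant across $K\in\mathcal{K}_i$ (since good elements have the same membership in every $K\in\mathcal{K}_i$), and conclude that the maximum is attained at the $K$ maximizing $|(K\cap B)\triangle B'|$, namely $K^*$. Your version is slightly more explicit (introducing the common set $S$ and handling the vacuous case $\mathcal{K}_i=\emptyset$), but the argument is the same.
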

\begin{proof}
Let $K \in \mathcal{K}_i$. Then, $|K \triangle D| = |(K \cap B) \triangle (D \cap B)| + |(K \setminus B) \triangle (D \setminus B)|$. From the definition of $B$, the value of $|(K \setminus B) \triangle (D \setminus B)|$ is equal among all $K \in \mathcal{K}_i$. Thus, the maximum value of $|K \triangle D|$ for $K \in \mathcal{K}_i$ is achieved by the set $K$ that maximizes $|(K \cap B) \triangle (D \cap B)| = |(K\cap B) \triangle B'|$.
\end{proof}
Now, it is sufficient to solve the problem of determining whether there exists $D \in \mathcal{D}$ such that $D \cap B = B'$ and $|K^* \triangle D| \leq r_i$. This corresponds to the exact extension oracle on $\mathcal{D}$ with $r = r_i$, $X = B'$, $Y = B \setminus B'$, and $C = K^*$.
Therefore, we can claim the following:
\begin{lemma}\label{lem:clustering}
Assume there exists an FPT algorithm parameterized by $k+d$ that computes a $(d+1)$-limited $k$-max-distance sparsifier of $\mathcal{D}$ with size bounded by a constant that depends only on $k+d$, and the exact extension oracle on $\mathcal{D}$ whose time complexity is FPT parameterized by $r + |X| + |Y|$. Then, there exists an FPT algorithm parameterized by $k+d$ for the $k$-center/$k$-sum-of-radii clustering problem on $\mathcal{D}$.
\end{lemma}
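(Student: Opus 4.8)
The plan is to assemble the ingredients already developed in this section into a single FPT procedure. First I would invoke the hypothesized FPT algorithm to compute a $(d+1)$-limited $k$-max-distance sparsifier $\mathcal{K}\subseteq\mathcal{D}$ whose size is bounded by a constant $c=c(k+d)$. Since $|\mathcal{K}|\le c$, the number of ways to partition $\mathcal{K}$ into an ordered $k$-tuple $(\mathcal{K}_1,\dots,\mathcal{K}_k)$ of (possibly empty) parts is at most $k^{c}$, a constant, and the algorithm enumerates all of them. For each such partition and each index $i$, I would compute the minimum $r_i\in\{0,\dots,d\}$ for which there exists $D_i\in\mathcal{D}$ with $\mathcal{K}_i\subseteq\mathcal{B}(D_i,r_i)$, or detect that no value in $\{0,\dots,d\}$ works (in which case we record the fact $r_i>d$). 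For the $k$-center problem the algorithm then accepts iff some partition yields $\max_i r_i\le d$; for the $k$-sum-of-radii problem it accepts iff some partition yields $\sum_i r_i\le d$ (for a fixed partition the radii may be chosen independently, so minimizing each $r_i$ separately is optimal).

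For the inner subroutine --- deciding, for a fixed $\mathcal{K}_i$ and a fixed target $r_i\le d$, whether some $D\in\mathcal{D}$ has $\mathcal{K}_i\subseteq\mathcal{B}(D,r_i)$ --- I would follow the closest-string-style argument already sketched: compute the set $B$ of bad elements; if $|B|>d|\mathcal{K}_i|$, report infeasibility by the cited lemma; otherwise enumerate all $B'\subseteq B$ (at most $2^{d|\mathcal{K}_i|}\le 2^{dc}$ choices), and for each $B'$ set $K^*=\mathrm{argmax}_{K\in\mathcal{K}_i}|(K\cap B)\triangle B'|$ and call the exact extension oracle on $\mathcal{D}$ with $C=K^*$, $r=r_i$, $X=B'$, $Y=B\setminus B'$. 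By the two lemmas preceding this statement, the answer for the pair $(\mathcal{K}_i,r_i)$ is yes iff the oracle returns a set for some choice of $B'$. To obtain the minimum $r_i$, I would run this test for $r_i=0,1,\dots,d$ in increasing order and take the first success.

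Correctness of the overall reduction follows in both directions. If the algorithm accepts, then for the witnessing partition we have centers $D_1,\dots,D_k$ and radii $r_1,\dots,r_k$ with $\mathcal{K}_i\subseteq\mathcal{B}(D_i,r_i)$ for every $i$, and Lemma~\ref{lem:clustering_kernel} shows every $D\in\mathcal{D}$ lies in some $\mathcal{B}(D_i,r_i)$; the radius bounds ($\max_i r_i\le d$, resp.\ $\sum_i r_i\le d$) are exactly the clustering constraints. Conversely, if $(D_1,\dots,D_k)$ and $(d_1,\dots,d_k)$ form a valid solution, then $d_i\le d$ for all $i$, and assigning each $K\in\mathcal{K}$ to some index $i$ with $|D_i\triangle K|\le d_i$ (one exists since $\mathcal{K}\subseteq\mathcal{D}$) produces a partition the enumeration will try; for that partition the computed $r_i$ satisfies $r_i\le d_i$, so the relevant test accepts.

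Finally, the running-time bound: the number of partitions is constant, for each we loop over $k$ clusters and $d+1$ radius targets, and each inner test makes at most $2^{dc}$ oracle calls with parameter $r+|X|+|Y|=r_i+|B|\le d+dc$, a function of $k+d$; since the oracle runs in FPT time in this parameter and the sparsifier is computed in FPT time by hypothesis, the whole algorithm is FPT parameterized by $k+d$. The step that most needs care is the reverse direction of correctness together with verifying that every quantity handed to the extension oracle ($r$, $|X|$, $|Y|$, and the number of calls) is bounded by a function of $k+d$ only --- everything else is bookkeeping over the constant-size sparsifier.
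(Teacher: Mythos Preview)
Your proposal is correct and follows essentially the same approach as the paper: compute the sparsifier $\mathcal{K}$, enumerate ordered $k$-partitions of it, for each part use the closest-string-style reduction (bad-element set $B$, enumeration of $B'\subseteq B$, choice of $K^*$, and a call to the exact extension oracle) to determine the minimum covering radius up to $d$, and accept according to the max/sum criterion, with Lemma~\ref{lem:clustering_kernel} supplying soundness. Your write-up is in fact more explicit than the paper's on the converse direction of correctness and on the FPT bookkeeping; the one place to tighten is the sentence ``the answer for the pair $(\mathcal{K}_i,r_i)$ is yes iff the oracle returns a set for some choice of $B'$,'' since a single oracle call tests $|K^*\triangle D|=r_i$ rather than $\le r_i$ --- but your surrounding loop over $r_i=0,1,\dots,d$ already absorbs this, so the algorithm is correct as described.
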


\subsection{Modified Hamming Distance}\label{sec:modified_hamming}

In some cases, such as when $\mathcal{D}$ is a edge bipartization domain, sets $Z\in \mathcal{D}$ and $U \setminus Z \in \mathcal{D}$ should be considered equivalent. 
In such cases, it is natural to define the distance between sets $D_1, D_2 \in \mathcal{D}$ using the \emph{modified Hamming distance} $|D_1 \bar{\triangle} D_2| := \min(|D_1 \triangle D_2|, |D_1 \triangle (U \setminus D_2)|)$ instead of the Hamming distance $|D_1 \triangle D_2|$.
We can still design FPT algorithms for diversification and clustering problems for this modified Hamming distance using a $d$-limited $k$-max-distance sparsifier. 
We show that results similar to Lemmas~\ref{lem:diversification}~and~\ref{lem:clustering} also hold for this distance.
We assume $(U\setminus D) \in \mathcal{D}$ holds for all $D \in \mathcal{D}$.
We prove the following lemmas.
\begin{lemma}\label{lem:diversification_sym}
Assume there exists an FPT algorithm parameterized by $k+d$ that computes a $d$-limited $(2k-2)$-max-distance sparsifier of $\mathcal{D}$ with size bounded by a constant that depends only on $k+d$.
Then, there exists an FPT algorithm parameterized by $k+d$ for the max-sum/max-min diversification problem on $\mathcal{D}$ for the modified Hamming distance.
\end{lemma}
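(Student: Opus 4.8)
The plan is to reduce the modified-Hamming diversification problem to the ordinary-Hamming diversification problem handled in Lemma~\ref{lem:diversification}, at the cost of doubling the ``width'' parameter from $k-1$ to $2k-2$. The key observation is that for any $D_1,D_2\in\mathcal{D}$ we have $|D_1\bar\triangle D_2|=\min\bigl(|D_1\triangle D_2|,\,|D_1\triangle(U\setminus D_2)|\bigr)$, and since we assume $\mathcal{D}$ is closed under complementation, a $k$-tuple $(D_1,\dots,D_k)\in\mathcal{D}^k$ together with a choice, for each $i$, of whether to ``flip'' $D_i$ to $U\setminus D_i$, lets us rewrite the modified distance between two members as an ordinary Hamming distance between appropriately flipped representatives. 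So the real content is a replacement lemma analogous to Lemma~\ref{lem:diversification_replace}: given a solution $(D_1,\dots,D_k)$ for the modified-Hamming problem, I want to find a solution consisting only of sets from the sparsifier $\mathcal{K}$.

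First I would set up the right analogue of Lemma~\ref{lem:diversification_replace}. Fix a putative solution $(D_1,\dots,D_k)\in\mathcal{D}^k$. Proceeding as in the proof of Lemma~\ref{lem:diversification_replace}, I replace the sets one at a time: suppose $D_i\notin\mathcal{K}$, and I want to find $K_i\in\mathcal{K}$ with $\min(d,|D_i\bar\triangle D_j|)\le\min(d,|K_i\bar\triangle D_j|)$ for all $j\ne i$. Now $|D_i\bar\triangle D_j|\le\min(d,|K_i\bar\triangle D_j|)$ should be deduced from a statement of the form: there exists $K_i\in\mathcal{K}$ such that $\min(d,|K_i\triangle D_j|)\ge z_j$ \emph{and} $\min(d,|K_i\triangle(U\setminus D_j)|)\ge z_j'$ for all $j\ne i$, where $z_j:=\min(d,|D_i\triangle D_j|)$ and $z_j':=\min(d,|D_i\triangle(U\setminus D_j)|)$. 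This is exactly a constraint against the $2(k-1)=2k-2$ sets $D_j$ and $U\setminus D_j$ for $j\in\{1,\dots,k\}\setminus\{i\}$ — all of which lie in $\mathcal{D}$ by the complement-closure assumption — with thresholds in $\{0,\dots,d\}$. Hence the defining property of a $d$-limited $(2k-2)$-max-distance sparsifier applied to this family of $2k-2$ constraint sets yields the desired $K_i$, and then $|K_i\bar\triangle D_j|=\min(|K_i\triangle D_j|,|K_i\triangle(U\setminus D_j)|)\ge\min(z_j,z_j')$ while $\min(z_j,z_j')=\min(d,|D_i\bar\triangle D_j|)$, so the $i$-th coordinate has been moved into $\mathcal{K}$ without decreasing any pairwise modified distance (capped at $d$). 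Iterating over all $i$ with $D_i\notin\mathcal{K}$ gives a solution entirely within $\mathcal{K}$.

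Once the replacement lemma is in place, the algorithm is the obvious brute force: compute the $d$-limited $(2k-2)$-max-distance sparsifier $\mathcal{K}$ using the assumed FPT algorithm (its size is a constant depending only on $k+d$), and exhaustively check every $k$-tuple from $\mathcal{K}$, evaluating $\min_{i<j}|K_i\bar\triangle K_j|$ or $\sum_{i<j}|K_i\bar\triangle K_j|$ as appropriate. There are $|\mathcal{K}|^k$ tuples, a constant, and each modified distance is computed in time polynomial in $|U|$. Correctness for max-min is immediate from the replacement lemma; for max-sum one needs the slightly more careful bookkeeping that the replacement does not decrease $\sum_{i<j}\min(d,|D_i\bar\triangle D_j|)$ — but since we may assume without loss of generality that $d\le 2\cdot\binom{k}{2}\ell$ (or more simply truncate the target at $d$ coordinatewise as in Lemma~\ref{lem:diversification_replace}), the same one-at-a-time swap argument, which is monotone in each capped pairwise term, carries through.

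The main obstacle I anticipate is purely notational: keeping straight, during the coordinate-by-coordinate replacement, that the ``frame'' against which $D_i$ is being compared consists of both $D_j$ and its complement for every other $j$, and verifying that passing from $\min(d,|\cdot\triangle D_j|)$-type constraints to a $\bar\triangle$-bound is valid even after earlier coordinates have already been swapped into $\mathcal{K}$ (it is, because the swap of coordinate $i'$ only increases the capped $\bar\triangle$-distances involving $i'$, and we re-derive the thresholds $z_j,z_j'$ freshly against the current tuple at each step). There is no genuine mathematical difficulty beyond what already appears in Lemma~\ref{lem:diversification_replace}; the factor-two blowup in the sparsifier width is exactly what accommodates the complement copies.
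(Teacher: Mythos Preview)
Your proposal is correct and follows essentially the same approach as the paper: prove a replacement lemma analogous to Lemma~\ref{lem:diversification_replace} by, for each $D_i\notin\mathcal{K}$, invoking the $d$-limited $(2k-2)$-max-distance sparsifier property against the $2(k-1)$ sets $\{D_j,\,U\setminus D_j : j\ne i\}\subseteq\mathcal{D}$ with thresholds $z_j=\min(d,|D_i\triangle D_j|)$ and $z_j'=\min(d,|D_i\triangle(U\setminus D_j)|)$, then conclude by brute force over $\mathcal{K}^k$. The only cosmetic difference is that the paper first disposes of the subcase $(U\setminus D_i)\in\mathcal{K}$ by taking $K_i:=U\setminus D_i$ directly; this case split is not actually needed, since the sparsifier argument already covers it, so your streamlined version is fine.
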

\begin{lemma}\label{lem:clustering_sym}
Assume there exists an FPT algorithm parameterized by $k+d$ that computes a $(d+1)$-limited $2k$-max-distance sparsifier of $\mathcal{D}$ with size bounded by a constant that depends only on $k+d$, and the exact extension oracle on $\mathcal{D}$ whose time complexity is FPT parameterized by $r + |X| + |Y|$. 
Then, there exists an FPT algorithm parameterized by $k+d$ for the $k$-center/$k$-sum-of-radii clustering problem on $\mathcal{D}$ for the modified Hamming distance.
\end{lemma}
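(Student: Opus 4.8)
The plan is to reduce the modified-Hamming-distance problems to the ordinary Hamming-distance problems already handled in Lemmas~\ref{lem:diversification}~and~\ref{lem:clustering}, at the cost of doubling the number of ``reference sets'' that the sparsifier must cope with. The key observation is that for $D_1,D_2\in\mathcal{D}$ we have $|D_1\bar{\triangle}D_2|=\min(|D_1\triangle D_2|,\,|D_1\triangle(U\setminus D_2)|)$, and under the standing assumption $U\setminus D\in\mathcal{D}$ for all $D\in\mathcal{D}$, the set $U\setminus D_2$ is itself a legal member of $\mathcal{D}$. Hence a constraint ``$|D_i\bar{\triangle}D_j|\ge z$'' is equivalent to the conjunction ``$|D_i\triangle D_j|\ge z$ and $|D_i\triangle(U\setminus D_j)|\ge z$'', i.e.\ two ordinary distance constraints against two elements of $\mathcal{D}$. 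So a feasible $k$-tuple for the modified distance induces, for each index $i$, up to $2(k-1)$ ordinary lower-bound constraints from the other $k-1$ sets and their complements; this is exactly why a $d$-limited $(2k-2)$-max-distance sparsifier (rather than $(k-1)$) is invoked in Lemma~\ref{lem:diversification_sym}, and a $(d+1)$-limited $2k$-sparsifier in Lemma~\ref{lem:clustering_sym}.

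For the diversification case, I would first prove the analogue of Lemma~\ref{lem:diversification_replace}: given a feasible tuple $(D_1,\dots,D_k)\in\mathcal{D}^k$ and a $d$-limited $(2k-2)$-max-distance sparsifier $\mathcal{K}$, replace the $D_i$'s one at a time. When replacing $D_i$, set $z_j:=\min(d,|D_i\triangle D_j|)$ and $z'_j:=\min(d,|D_i\triangle(U\setminus D_j)|)$ for $j\ne i$; these are at most $2(k-1)$ values in $\{0,\dots,d\}$, so by the sparsifier property applied to the tuple of $D_j$'s and $U\setminus D_j$'s there is $K_i\in\mathcal{K}$ achieving all of them simultaneously. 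One must check that $U\setminus D_j\in\mathcal{D}$ so that the sparsifier — defined with respect to $\mathcal{D}=\mathcal{F}$ — can be queried against it; this is the standing assumption. A subtlety is that after substituting $K_i$ for $D_i$, later substitutions (replacing $D_j$) must preserve the bound on $|K_i\bar{\triangle}D_j|$ — but since $\min(d,|K_i\bar{\triangle}K_j|)=\min\bigl(\min(d,|K_i\triangle K_j|),\min(d,|K_i\triangle(U\setminus K_j)|)\bigr)$ and each inner quantity is only increased by the sparsifier step, monotonicity carries through exactly as in the unmodified proof. Then the brute-force search over $\mathcal{K}^k$ (evaluating the modified distance, which is a trivial post-processing step) gives the FPT algorithm.

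For the clustering case, observe that a ball in the modified metric, $\{Z:|D\bar{\triangle}Z|\le r\}$, equals $\mathcal{B}(D,r)\cup\mathcal{B}(U\setminus D,r)$; equivalently, $|D\bar{\triangle}Z|>r$ means $|D\triangle Z|>r$ \emph{and} $|(U\setminus D)\triangle Z|>r$. So I would mirror the argument of Lemma~\ref{lem:clustering_kernel}: guess a partition $\mathcal{K}_1,\dots,\mathcal{K}_k$ of $\mathcal{K}$, and for each part compute whether there is $D_i\in\mathcal{D}$ with $\mathcal{K}_i$ inside the modified ball of radius $r_i$. If a point $D\in\mathcal{D}$ were uncovered, then $|D_i\triangle D|>r_i$ and $|(U\setminus D_i)\triangle D|>r_i$ for every $i$; feeding the $2k$ sets $D_1,\dots,D_k,U\setminus D_1,\dots,U\setminus D_k$ and the radii $r_i+1$ (each twice) into the $(d+1)$-limited $2k$-max-distance sparsifier produces $K\in\mathcal{K}$ with the same $2k$ strict inequalities, so $K$ lies in no modified ball, contradicting the partition. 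For the per-part feasibility subproblem one adapts the closest-string routine: the notion of bad element is unchanged, and after guessing $B'\subseteq B$ one is left with deciding whether some $D\in\mathcal{D}$ with $D\cap B=B'$ satisfies $|K\bar{\triangle}D|\le r_i$ for all $K\in\mathcal{K}_i$; expanding each modified distance into two ordinary ones and taking, as before, the two worst witnesses (for $K\triangle D$ and for $(U\setminus K)\triangle D$) reduces this to $O(1)$ calls of the exact extension oracle on $\mathcal{D}$ with $r\le d$, $X=B'$, $Y=B\setminus B'$.

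The main obstacle I anticipate is purely bookkeeping: making sure that when modified-distance constraints are unfolded into pairs of ordinary constraints, the \emph{same} element $K_i\in\mathcal{K}$ (resp.\ $D_i\in\mathcal{D}$) is used on both sides, rather than accidentally choosing different witnesses for $D_i$ and for $U\setminus D_i$. This is handled correctly because the sparsifier guarantees a single $K$ meeting an arbitrary tuple of lower bounds, so one simply packs both $z_j$ and $z'_j$ into one query. No new ideas beyond Lemmas~\ref{lem:diversification}~and~\ref{lem:clustering} and the identity $|D_1\bar\triangle D_2|=\min(|D_1\triangle D_2|,|D_1\triangle(U\setminus D_2)|)$ are required.
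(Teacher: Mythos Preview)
Your overall strategy matches the paper's: unfold each modified-distance constraint into a pair of ordinary constraints against $D_j$ and $U\setminus D_j$, and exploit the doubled sparsifier parameter to handle both at once. The correctness argument for the clustering case (your analogue of Lemma~\ref{lem:clustering_kernel}) is exactly what the paper does.

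The gap is in your per-part feasibility subroutine. You assert that ``the notion of bad element is unchanged,'' but the bad-element bound from the ordinary case (more than $d|\mathcal{K}_i|$ bad elements implies infeasibility) relies on every $K\in\mathcal{K}_i$ lying within \emph{ordinary} Hamming distance $d$ of the sought $D$. In the modified setting this fails: take $\mathcal{K}_i=\{K,\,U\setminus K\}$ with $|U|$ large; then every element of $U$ is bad, yet $D:=K$ satisfies $|K\bar\triangle D|=|(U\setminus K)\bar\triangle D|=0$. So $|B|$ need not be bounded by $d|\mathcal{K}_i|$, and guessing $B'\subseteq B$ is not FPT. Your subsequent ``two worst witnesses'' step does not rescue this, because the per-$K$ constraint $|K\bar\triangle D|\le r_i$ is a \emph{disjunction} $|K\triangle D|\le r_i$ or $|(U\setminus K)\triangle D|\le r_i$; the conjunction over $K\in\mathcal{K}_i$ of such disjunctions does not reduce to checking one extremal $K$ on each side.

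The paper's fix is simpler than what you attempt: before invoking the closest-string routine, guess for each $K\in\mathcal{K}_i$ a representative $\bar K\in\{K,\,U\setminus K\}$ (cost $2^{|\mathcal{K}_i|}$, which is bounded because $|\mathcal{K}|$ is). After this guess the per-part problem becomes ``find $D\in\mathcal{D}$ with $|D\triangle\bar K|\le r_i$ for all $K$,'' which is literally the ordinary case of Section~\ref{sec:clustering}, with bad elements now defined relative to the $\bar K$'s.
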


First, we prove Lemma~\ref{lem:diversification_sym}.
We prove a corresponding result to Lemma~\ref{lem:diversification_replace}. The remaining part proceeds similarly as in Lemma~\ref{lem:diversification}.
\begin{lemma}
Let $U$ be a finite set, $k \in \mathbb{Z}_{\geq 1}$, $d \in \mathbb{Z}_{\geq 0}$, and $\mathcal{D} \subseteq 2^U$.
Let $\mathcal{K}$ be a $d$-limited $(2k-2)$-max-distance sparsifier of $\mathcal{D}$ and $(D_1, \dots, D_k) \in \mathcal{D}^k$.
Then, there is a $k$-tuple $(K_1, \dots, K_k) \in \mathcal{K}^k$ such that $\min(d, |D_i \bar{\triangle} D_j|) \leq \min(d, |K_i \bar{\triangle} K_j|)$ holds for all $1 \leq i < j \leq k$.
\end{lemma}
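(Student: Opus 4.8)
The plan is to run the one-at-a-time replacement argument of Lemma~\ref{lem:diversification_replace}, but with the following twist: controlling the \emph{modified} Hamming distance $|K_i\,\bar{\triangle}\,E|$ to some fixed set $E$ amounts to controlling the ordinary Hamming distances $|K_i\triangle E|$ \emph{and} $|K_i\triangle(U\setminus E)|$ simultaneously, so each of the $k-1$ ``anchor'' sets now imposes two constraints rather than one. This is exactly why a $(2k-2)$-max-distance sparsifier, rather than a $(k-1)$-max-distance one, is the right object here.

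Concretely, I would iterate over $i=1,\dots,k$, maintaining a tuple $(E_1,\dots,E_k)\in\mathcal{D}^k$ initialised to $(D_1,\dots,D_k)$, with the invariant that at the start of step $i$ we still have $E_i=D_i$ and $E_j=D_j$ for all $j>i$, while $E_j\in\mathcal{K}$ for all $j<i$. At step $i$: if $E_i\in\mathcal{K}$ set $K_i:=E_i$; otherwise I claim there is $K_i\in\mathcal{K}$ with $\min(d,|E_i\,\bar{\triangle}\,E_j|)\le\min(d,|K_i\,\bar{\triangle}\,E_j|)$ for every $j\ne i$, and we replace $E_i$ by $K_i$. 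Since only coordinate $i$ is touched, no pairwise distance avoiding index $i$ changes.

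To produce $K_i$, I would apply the $d$-limited $(2k-2)$-max-distance sparsifier property to the length-$(2k-2)$ anchor tuple listing $E_j$ and $U\setminus E_j$ for each $j\ne i$, with the integer $z_j:=\min(d,|D_i\,\bar{\triangle}\,E_j|)\in\{0,\dots,d\}$ attached to both of these entries. All anchors lie in $\mathcal{D}$: the $E_j$ do by the invariant (they are original $D_j$'s or elements of $\mathcal{K}\subseteq\mathcal{D}$), and the $U\setminus E_j$ do by the standing assumption that $\mathcal{D}$ is closed under complement. The first bullet of the sparsifier definition is witnessed by $D=E_i=D_i$, since $|E_j\triangle D_i|\ge|D_i\,\bar{\triangle}\,E_j|\ge z_j$ and likewise $|(U\setminus E_j)\triangle D_i|\ge z_j$. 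Hence the second bullet yields $K_i\in\mathcal{K}$ with $|E_j\triangle K_i|\ge z_j$ and $|(U\setminus E_j)\triangle K_i|\ge z_j$, so $|K_i\,\bar{\triangle}\,E_j|=\min(|K_i\triangle E_j|,|K_i\triangle(U\setminus E_j)|)\ge z_j$, and therefore $\min(d,|K_i\,\bar{\triangle}\,E_j|)\ge z_j=\min(d,|D_i\,\bar{\triangle}\,E_j|)$, which is what step $i$ needs (recall $E_i=D_i$ at this moment).

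It remains to chain the per-step inequalities. Writing $K_j$ for the final value of coordinate $j$, fix a pair $i<j$. At step $i$ coordinate $j$ still equals $D_j$, so step $i$ gives $\min(d,|K_i\,\bar{\triangle}\,D_j|)\ge\min(d,|D_i\,\bar{\triangle}\,D_j|)$; at step $j$ coordinate $i$ already equals $K_i$, so step $j$ gives $\min(d,|K_i\,\bar{\triangle}\,K_j|)\ge\min(d,|K_i\,\bar{\triangle}\,D_j|)$; composing the two proves the claim. The only thing that needs care is this bookkeeping about which coordinates have been updated at which step, which the stated invariant makes routine; the rest is the sparsifier definition applied verbatim. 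After this lemma, Lemma~\ref{lem:diversification_sym} follows exactly as Lemma~\ref{lem:diversification} followed from Lemma~\ref{lem:diversification_replace}, by brute-force search over $k$-subfamilies of the constant-size $\mathcal{K}$.
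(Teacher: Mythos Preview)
Your proof is correct and follows essentially the same one-coordinate-at-a-time replacement as the paper, invoking the sparsifier with the $(2k-2)$ anchors $E_j,\,U\setminus E_j$ for $j\ne i$. The only cosmetic differences are that the paper attaches possibly different thresholds $z_j=\min(d,|D_i\triangle D_j|)$ and $z'_j=\min(d,|D_i\triangle(U\setminus D_j)|)$ to the two anchors (whose minimum equals your single $z_j$, so nothing is lost) and separately disposes of the easy case $U\setminus D_i\in\mathcal{K}$ by taking $K_i:=U\setminus D_i$.
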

\begin{proof}
Assume $(D_1, \dots, D_k) \in \mathcal{D}^k \setminus \mathcal{K}^k$ and let $i$ be an index such that $D_i \not \in \mathcal{K}$.
It is sufficient to prove that there is a set $K_i \in \mathcal{K}$ such that $\min(d, |D_i \bar{\triangle} D_j|) \leq \min(d, |K_i \bar{\triangle} D_j|)$ holds for all $j \in \{1, \dots, k\} \setminus \{i\}$.
If $(U \setminus D_i) \in \mathcal{K}$, $K_i := U \setminus D_i$ satisfies this condition.
Assume otherwise.
For $j \in \{1, \dots, k\} \setminus \{i\}$, let $z_j:=\min(d, |D_i \triangle D_j|)$ and $z'_j := \min(d, |D_i \triangle (U\setminus D_j)|)$.
Since $\mathcal{K}$ is a $d$-limited $(2k-2)$-max-distance sparsifier of $\mathcal{D}$, there exists $K_i \in \mathcal{K}$ such that for all $j \in \{1, \dots, k\} \setminus \{i\}$,
\begin{align*}
\min(d, |K_i \bar{\triangle} D_j|) 
&= \min\left(\min(d, |K_i \triangle D_j|), \min(d, |K_i \triangle (U \setminus D_j)|)\right) \geq \min(\min(d, z_j), \min(d, z'_j))\\
&= \min(d, |D_i \triangle D_j|, |D_i \triangle (U \setminus D_j)|) = \min(d, |D_i \bar{\triangle} D_j|). \qedhere
\end{align*}
\end{proof}

Next, we prove Lemma~\ref{lem:clustering_sym}.
Let $\mathcal{K}$ be a $(d+1)$-limited $2k$-max-distance sparsifier of $\mathcal{D}$.
Similar to Lemma~\ref{lem:clustering}, the algorithm first guesses a partition of $\mathcal{K}$ into $k$ clusters $\mathcal{K}_1, \dots, \mathcal{K}_k$.
Subsequently, for each $i \in \{1, \dots, k\}$, the algorithm determines whether the minimum radius $r_i$ of a ball (in the sense of modified Hamming distance) centered at a set of $\mathcal{D}$ that covers $\mathcal{K}_i$ is at most $d$, and if so, computes this value.
We prove a result corresponding to Lemma~\ref{lem:clustering_kernel}.
\begin{lemma}
Let $(D_1, \dots, D_k) \in \mathcal{D}^k$ and $(r_1, \dots, r_k) \in \{0, \dots, d\}^k$.
Assume $|D_i \bar{\triangle} K| \leq r_i$ holds for all $i \in \{1, \dots, k\}$ and $K \in \mathcal{K}_i$. Then, for all $D \in \mathcal{D}$, there is an index $i \in \{1, \dots, k\}$ such that $|D_i \bar{\triangle} D| \leq r_i$.
\end{lemma}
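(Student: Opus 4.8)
The plan is to mirror the proof of Lemma~\ref{lem:clustering_kernel}, replacing the Hamming distance $\triangle$ by the modified Hamming distance $\bar\triangle$ and keeping careful track of the fact that each constraint ``$|D_i\bar\triangle D|\ge r_i+1$'' unfolds into \emph{two} ordinary Hamming constraints, which is exactly why the sparsifier needs to be $2k$-max-distance rather than $k$-max-distance. So I would argue by contradiction: suppose there is $D\in\mathcal{D}$ with $|D_i\bar\triangle D|\ge r_i+1$ for every $i\in\{1,\dots,k\}$. By the definition of $\bar\triangle$, this means $|D_i\triangle D|\ge r_i+1$ \emph{and} $|D_i\triangle(U\setminus D)|\ge r_i+1$ for every $i$.

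Now I would apply the $(d+1)$-limited $2k$-max-distance sparsifier property to the $2k$-tuple $(D_1,\dots,D_k,D_1,\dots,D_k)\in\mathcal{D}^{2k}$ (listing each $D_i$ twice) together with the target vector $(r_1+1,\dots,r_k+1,r_1+1,\dots,r_k+1)\in\{0,\dots,d+1\}^{2k}$; the witness on the $\mathcal{D}$ side is the single set $D$, which satisfies $|D_i\triangle D|\ge r_i+1$ on the first block and $|D_i\triangle(U\setminus D)|\ge r_i+1$ on the second block — here I use the assumption $(U\setminus D)\in\mathcal{D}$, so applying the definition of the sparsifier with the set $D$ on the first $k$ coordinates and the set $U\setminus D$ on the last $k$ coordinates is legitimate. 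Wait — the sparsifier condition is about a \emph{single} set $K\in\mathcal{K}$ simultaneously, so I should instead feed the tuple $(D_1,\dots,D_k)$ with target $(z_1,\dots,z_k)$ where $z_i:=\min(d+1,\ |D_i\triangle D|,\ |D_i\triangle(U\setminus D)|)=\min(d+1,|D_i\bar\triangle D|)$... but that only needs a $k$-max-distance sparsifier. The cleaner route, and the one the doubled parameter is signalling, is: by $(d+1)$-limited $2k$-max-distance sparsification applied to $(D_1,\dots,D_k,D_1,\dots,D_k)$ and $(r_1{+}1,\dots,r_k{+}1,r_1{+}1,\dots,r_k{+}1)$ with the two-set witness collection realized inside the definition (the definition quantifies over $(F_1,\dots,F_{2k})$ and asks for one $K$ with $|F_i\triangle K|\ge z_i$ for all $i$, and the ``$D$ side'' condition is witnessed by $D$ for the first block and by $U\setminus D$ for the second block — but the definition only gives \emph{one} $D$...). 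I would therefore actually use the formulation: there exists $K\in\mathcal{K}$ with $|D_i\triangle K|\ge r_i+1$ and $|D_i\triangle(U\setminus K)|\ge r_i+1$ for all $i$, obtained by taking $F_{i}:=D_i$ and $F_{k+i}:=U\setminus D_i$ in the sparsifier definition and noting the witness $D$ on the $\mathcal{D}$-side works for all $2k$ coordinates simultaneously (since $|(U\setminus D_i)\triangle D| = |D_i\triangle(U\setminus D)|\ge r_i+1$).

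From such a $K\in\mathcal{K}$ we get $|D_i\bar\triangle K|=\min(|D_i\triangle K|,|D_i\triangle(U\setminus K)|)\ge r_i+1>r_i$ for every $i$, hence $K\notin\mathcal{K}_i$ for every $i$ by the hypothesis $|D_i\bar\triangle K|\le r_i$ for $K\in\mathcal{K}_i$. This contradicts the fact that $(\mathcal{K}_1,\dots,\mathcal{K}_k)$ is a partition of $\mathcal{K}$, completing the proof. The one step needing care — the main obstacle — is precisely the reduction from a single $\bar\triangle$-constraint to the two $\triangle$-constraints and the bookkeeping that shows the $2k$-max-distance sparsifier (with threshold $d+1$) is both necessary and sufficient: one must make sure the doubled tuple $(D_1,\dots,D_k,U\setminus D_1,\dots,U\setminus D_k)$ lies in $\mathcal{F}^{2k}$ (here $\mathcal{F}=\mathcal{D}$, using closure under complementation) and that the target values $r_i+1$ stay within $\{0,\dots,d+1\}$, which holds since $r_i\le d$. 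Everything else is a routine unpacking of definitions parallel to Lemma~\ref{lem:clustering_kernel}.
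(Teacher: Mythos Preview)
Your final argument is correct and essentially identical to the paper's proof: assume a bad $D$, unfold $|D_i\bar\triangle D|\ge r_i+1$ into the two constraints $|D_i\triangle D|\ge r_i+1$ and $|(U\setminus D_i)\triangle D|\ge r_i+1$, apply the $(d+1)$-limited $2k$-max-distance sparsifier to the tuple $(D_1,\dots,D_k,U\setminus D_1,\dots,U\setminus D_k)$ (legitimate since $\mathcal{D}$ is closed under complement) with thresholds $r_i+1\le d+1$, and deduce a $K\in\mathcal{K}$ violating membership in every $\mathcal{K}_i$. The exploratory detours in your write-up should be trimmed, but the eventual route matches the paper exactly.
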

\begin{proof}
Assume the contrary. Then, there is a set $D \in \mathcal{D}$ such that for all $i \in \{1, \dots, k\}$, $|D \bar{\triangle} D_i| \geq r_i + 1$, which means $|D \triangle D_i| \geq r_i + 1$ and $|D \triangle (U \setminus D_i)| \geq r_i + 1$. Since $\mathcal{K}$ is a $(d+1)$-limited $2k$-max-distance sparsifier of $\mathcal{D}$, there is a set $K \in \mathcal{K}$ such that for all $i \in \{1, \dots, k\}$, $|K \bar{\triangle} D_i| = \min(|K \triangle D_i|, |K \triangle (U \setminus D_i)|) \geq r_i + 1$. Hence, $K \not \in \mathcal{K}_i$ for all $i \in \{1, \dots, k\}$, contradicting the fact that $(\mathcal{K}_1, \dots, \mathcal{K}_k)$ is a partition of $\mathcal{K}$.
\end{proof}

We now provide an algorithm to evaluate the minimum radius of the ball containing $\mathcal{K}_i$ in terms of the modified Hamming distance.
To determining whether there is $D_i \in \mathcal{D}$ such that $|D_i \bar{\triangle} K| \leq r_i$ for all $K \in \mathcal{K}_i$, we guess $\bar{K} \in \{K, U \setminus K\}$ for each $K \in \mathcal{K}_i$ and determine whether there is $D_i\in \mathcal{D}$ such that $|D_i\triangle \bar{K}|\leq r_i$ for all $K\in \mathcal{K}_i$. After this guessing, the rest are the same as the case discussed in Section~\ref{sec:clustering}.

\section{Framework for $k$-Max-Distance Sparification}\label{sec:framework_small}

In this section, we complete the proof of Theorem~\ref{thm:framework_small} by providing an FPT algorithm that uses the exact empty extension oracle on $\mathcal{D}$ to obtain a $k$-max-distance sparsifier of $\mathcal{D}$. 
For further use, we show a slightly more extended result. 
Let $r \in \mathbb{Z}_{\geq 0}$. 
We construct a $k$-max-distance sparsifier of $\mathcal{D}$ with respect to $\mathcal{B}(\emptyset, r)$ for $r\geq \ell $. 
Since $\mathcal{D} \subseteq \mathcal{B}(\emptyset, \ell) \subseteq \mathcal{B}(\emptyset, r)$ for $r \geq \ell $, this is also a $k$-max-distance sparsifier of $\mathcal{D}$ (with respect to $\mathcal{D}$).
A set family $\mathcal{S}:=\{S_1, \dots, S_t\}$ is called a \emph{sunflower} if there exists a set called \emph{core} $C$ such that for any $1 \leq i < j \leq t$, $S_i \cap S_j = C$.
The following is well-known.
\begin{lemma}[\rm{Sunflower Lemma~\cite{cygan2015parameterized, erdos1960intersection}}]\label{lem:sunflower}
Let $U$ be a finite set, $\ell, t \in \mathbb{Z}_{\geq 0}$, and $\mathcal{K} \subseteq 2^U$ be a family consisting only of sets of size at most $\ell $. If $|\mathcal{K}| > \ell ! (t - 1)^\ell $, then $\mathcal{K}$ contains a sunflower of size $t$.
\end{lemma}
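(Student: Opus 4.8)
The plan is to prove the Sunflower Lemma by induction on $\ell$, following the classical argument of Erd\H{o}s and Rado. First I would dispose of the degenerate cases: a sunflower of size $0$ is the empty family and a sunflower of size $1$ is any single set, so we may assume $t\ge 2$; and if $\ell=0$ then every set in $\mathcal{K}$ is empty, forcing $|\mathcal{K}|\le 1=0!(t-1)^0$ and contradicting the hypothesis, so the base case is vacuous and we may take $\ell\ge 1$. Under these assumptions $\ell!(t-1)^\ell\ge 1$, hence $|\mathcal{K}|\ge 2$, which is convenient below.

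For the inductive step I would assume the statement for families of sets of size at most $\ell-1$ and take $\mathcal{K}$ with all members of size at most $\ell$ and $|\mathcal{K}|>\ell!(t-1)^\ell$. Choose a \emph{maximal} subfamily $\{S_1,\dots,S_m\}\subseteq\mathcal{K}$ of pairwise disjoint sets. If $m\ge t$ then $\{S_1,\dots,S_t\}$ is a sunflower with core $\emptyset$ and we are done, so suppose $m\le t-1$. Let $Y:=S_1\cup\dots\cup S_m$, so that $|Y|\le m\ell\le(t-1)\ell$; a short check using $|\mathcal{K}|\ge 2$ together with maximality shows $Y\ne\emptyset$. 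By maximality every $K\in\mathcal{K}$ meets $Y$ (otherwise $\{S_1,\dots,S_m,K\}$ would be a larger pairwise-disjoint subfamily), so an averaging argument yields an element $y\in Y$ contained in at least
\[
\frac{|\mathcal{K}|}{|Y|}\;>\;\frac{\ell!(t-1)^\ell}{(t-1)\ell}\;=\;(\ell-1)!\,(t-1)^{\ell-1}
\]
sets of $\mathcal{K}$. I would then apply the induction hypothesis to $\mathcal{K}_y:=\{K\setminus\{y\}: K\in\mathcal{K},\ y\in K\}$, whose members are distinct, have size at most $\ell-1$, and number more than $(\ell-1)!(t-1)^{\ell-1}$; it therefore contains a sunflower $\{T_1,\dots,T_t\}$ with some core $C$. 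Since $y\notin T_i$ for every $i$, the sets $T_i\cup\{y\}$ lie in $\mathcal{K}$, are distinct, and satisfy $(T_i\cup\{y\})\cap(T_j\cup\{y\})=(T_i\cap T_j)\cup\{y\}=C\cup\{y\}$ for $i\ne j$, so they form a sunflower of size $t$ in $\mathcal{K}$ with core $C\cup\{y\}$, completing the induction.

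I expect the only real difficulty to be bookkeeping rather than any conceptual hurdle: one must keep the inequalities strict so that the strict bound on $|\mathcal{K}|$ transfers to a strict bound on $|\mathcal{K}_y|$; one must ensure the quantities divided by, namely $|Y|$ and $(t-1)\ell$, are positive, which is precisely why the degenerate cases are cleared away first; and one must respect that a sunflower here is a \emph{set} family, so check that passing to $\mathcal{K}_y$ preserves distinctness of members and that re-attaching $y$ does too. All of this is routine once the case analysis is set up correctly.
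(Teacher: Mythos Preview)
The paper does not supply its own proof of this lemma; it is quoted as a classical result with citations to Erd\H{o}s--Rado and to the textbook of Cygan et al., and is used as a black box. Your proposal reproduces the standard Erd\H{o}s--Rado induction on $\ell$ and is correct, so there is nothing in the paper to compare it against.
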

For $t\in \mathbb{Z}_{\geq 0}$, $\mathcal{T}\subseteq 2^U$, and $Z\in 2^U \setminus \mathcal{T}$, a sunflower $\mathcal{S}\subseteq \mathcal{T}$ is a \emph{$(Z,t)$-sunflower of $\mathcal{T}$} if it satisfies the following three conditions.
\begin{itemize}
    \item $|\mathcal{S}|=t$,
    \item For each $S\in \mathcal{S}$, $|S|=|Z|$, and
    \item The core of $\mathcal{S}$ is a subset of $Z$.
\end{itemize}
The following lemma is the core of our framework.
\begin{lemma}\label{lem:previouspaper}
Let $U$ be a finite set, $\mathcal{D}\subseteq 2^U$, and $\mathcal{K}\subseteq \mathcal{D}$ be a $k$-max-distance sparsifier of $\mathcal{D}$ with respect to $\mathcal{B}(\emptyset, r)$.
Let $Z\in \mathcal{K}$ and assume there is a $(Z, kr+1)$-sunflower $\mathcal{S}$ of $\mathcal{K} \setminus \{Z\}$.
Then, $\mathcal{K}\setminus \{Z\}$ is also a $k$-max-distance sparsifier of $\mathcal{D}$ with respect to $\mathcal{B}(\emptyset, r)$.
\end{lemma}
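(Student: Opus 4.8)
The plan is to show that for every choice of $(F_1,\dots,F_k)\in\mathcal{B}(\emptyset,r)^k$ and $(z_1,\dots,z_k)\in\mathbb{Z}_{\geq 0}^k$, if some $D\in\mathcal{D}$ satisfies $|F_i\triangle D|\geq z_i$ for all $i$, then some $K\in\mathcal{K}\setminus\{Z\}$ does too; the converse direction is trivial since $\mathcal{K}\setminus\{Z\}\subseteq\mathcal{K}$ and $\mathcal{K}$ is already a $k$-max-distance sparsifier of $\mathcal{D}$ with respect to $\mathcal{B}(\emptyset,r)$. So fix such an $(F_i)$, $(z_i)$, and $D$. Since $\mathcal{K}$ is a $k$-max-distance sparsifier, there is some $K\in\mathcal{K}$ with $|F_i\triangle K|\geq z_i$ for all $i$. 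If $K\neq Z$ we are done, so assume $K=Z$, i.e.\ $|F_i\triangle Z|\geq z_i$ for all $i$. It now suffices to produce a petal $S\in\mathcal{S}$ with $|F_i\triangle S|\geq z_i$ for all $i$, because $S\in\mathcal{K}\setminus\{Z\}$.

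The key step is a counting argument over the $kr+1$ petals of the sunflower $\mathcal{S}$, using that the petals pairwise intersect exactly in a common core $C\subseteq Z$, and that all petals have size $|Z|$. For a fixed index $i$, I will argue that at most $r$ petals $S\in\mathcal{S}$ can "fail" the constraint, i.e.\ satisfy $|F_i\triangle S| < z_i$. Suppose $S$ fails for index $i$. Since $|F_i\triangle Z|\geq z_i > |F_i\triangle S|$, moving from $Z$ to $S$ strictly decreases the symmetric difference with $F_i$; intuitively, $S$ must "agree with $F_i$" on some element where $Z$ disagrees, or vice versa, and that witnessing element lies in the part of $S$ outside the core $C$ (since all petals share $C$ and have the same size, the symmetric difference restricted to $C$ is the same for every petal, and likewise $Z$ and each $S$ behave identically on $C$). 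Concretely, write $|F_i\triangle S| = |(F_i\triangle S)\cap C| + |(F_i\triangle S)\setminus C|$ and similarly for $Z$; the first terms coincide, so failure forces $|(F_i\triangle S)\setminus C| < |(F_i\triangle Z)\setminus C|$, hence $S$ differs from $Z$ on an element relevant to $F_i$ outside $C$. Because $F_i\subseteq\mathcal{B}(\emptyset,r)$ has $|F_i|\leq r$, and the petals are disjoint outside the core, only at most $r$ petals can contain (or be "corrected by") an element of $F_i$ lying in $F_i\setminus C$ — I will make this precise by charging each failing petal to a distinct element witnessing the decrease and bounding the number of such elements by $|F_i|\le r$. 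Thus at most $r$ petals fail for index $i$.

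Summing over the $k$ indices, at most $kr$ petals fail for some index, so among the $kr+1$ petals of $\mathcal{S}$ at least one petal $S^\star$ fails for no index; that is, $|F_i\triangle S^\star|\geq z_i$ for all $i\in\{1,\dots,k\}$. Since $S^\star\in\mathcal{S}\subseteq\mathcal{K}\setminus\{Z\}$, this gives the desired witness in $\mathcal{K}\setminus\{Z\}$ and completes the argument.

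I expect the main obstacle to be the "at most $r$ failing petals per index" claim: making rigorous exactly which element of $F_i$ to charge a failing petal to, and confirming that two distinct failing petals for the same $i$ cannot be charged to the same element. The core property $S_a\cap S_b = C$ together with $|S_a|=|S_b|=|Z|$ should force the charged elements into the pairwise-disjoint petal remainders $S\setminus C$, so distinctness of charges will follow; but I will need to handle carefully both the case where $S$ newly includes an element of $F_i$ that $Z$ omitted and the case where $S$ newly omits an element outside $F_i$ that $Z$ included (only the former is bounded by $|F_i|$, so I should verify the latter cannot cause a decrease relative to the core-adjusted count, or reorganize the charging so that only $|F_i|\le r$ elements are ever used). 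Once the charging scheme is pinned down, the rest is the clean pigeonhole over $kr+1 > kr$ petals.
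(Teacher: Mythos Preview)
Your approach is correct and is essentially the paper's proof; the only difference is packaging. Your charging worry dissolves once you use $|S|=|Z|$ to compute $|F_i\triangle S|-|F_i\triangle Z|=2(|F_i\cap Z|-|F_i\cap S|)$: a petal $S$ fails for index $i$ only if $|F_i\cap S|>|F_i\cap Z|\geq |F_i\cap C|$, hence $F_i\cap(S\setminus C)\neq\emptyset$, and disjointness of the remainders $S\setminus C$ gives at most $|F_i|\le r$ such petals. The paper states this even more directly: since $\left|\bigcup_i F_i\right|\le kr$, some petal $S$ has $(S\setminus C)\cap F_i=\emptyset$ for every $i$, and then $|F_i\triangle S|=|F_i|+|Z|-2|F_i\cap C|\geq |F_i|+|Z|-2|F_i\cap Z|=|F_i\triangle Z|\geq z_i$.
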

\begin{proof}
Let $(F_1, \dots, F_k) \in \mathcal{B}(\emptyset, r)^k$ and $(z_1, \dots, z_k) \in \mathbb{Z}_{\geq 0}^k$. We show the equivalence of the following two conditions:
\begin{itemize}
    \item There exists $K \in \mathcal{K}$ such that for each $i \in \{1, \dots, k\}$, $|F_i \triangle K| \geq z_i$.
    \item There exists $K \in \mathcal{K} \setminus \{Z\}$ such that for each $i \in \{1, \dots, k\}$, $|F_i \triangle K| \geq z_i$.
\end{itemize}
Since $\mathcal{K} \setminus \{Z\} \subseteq \mathcal{K}$, the latter implies the former. We assume the former and prove the latter.
Take the $K$ that satisfies the former condition. If $K\neq Z$, the claim is obvious, so assume $K=Z$. Then, there exists a $(Z,kr+1)$-sunflower $\mathcal{S}$ of $\mathcal{K}\setminus \{Z\}$. Let $C$ be the core of $\mathcal{S}$.
Since $\left|\bigcup_{i\in \{1,\dots, k\}}F_i\right|\leq \sum_{i\in \{1,\dots, k\}}|F_i|\leq kr$, there exists an $S\in \mathcal{S}$ such that $F_i \cap (S\setminus C) = \emptyset$ for all $i\in \{1, \dots, k\}$.
In this case, for each $i \in \{1, \dots, k\}$, $|F_i \triangle S| = |F_i| + |S| - 2|F_i \cap S| = |F_i| + |Z| - 2|F_i \cap C| \geq |F_i| + |Z| - 2|F_i \cap Z| = |F_i \triangle Z| \geq z_i$, where the first inequality follows from $C \subseteq Z$.
Thus, the lemma is proved.
\end{proof}

\begin{algorithm}[t!]
\caption{$k$-max-distance sparsification of $\mathcal{D}$.}\label{alg:k-sparsify}
\Procedure{\emph{\Call{KSparsify}{$k,r$}}}{
    \KwIn{$k\in \mathbb{Z}_{\geq 1}$, $r \in \mathbb{Z}_{\geq 0}$}
    Let $\mathcal{K} := \emptyset$\;
    \While{$\mathsf{true}$}{\label{line:k-sparsify-loop}
        Let $R:= \bigcup_{K\in \mathcal{K}}K$, $f := \mathsf{false}$\;
        \For{$\ell'\in \{0,\dots, l\}$}{
            Let $\mathfrak{S}$ be the family of all sunflowers $\mathcal{S} \subseteq \mathcal{K}$ such that $|\mathcal{S}| = kr + 1$ and each $S \in \mathcal{S}$ satisfies $|S| = l'$\;
            \For{$Y \subseteq R$ that intersects with all $K \in \mathcal{K}$ with $|K| = l'$ and the cores of all sunflowers of $\mathfrak{S}$}{\label{line:setting_y}
                Let $D = \Call{ExactEmptyExtension}{l', Y}$\;
                \If{$D \neq \bot$ and $f = \mathsf{false}$}{
                    Add $D$ to $\mathcal{K}$ and $f := \mathsf{true}$\;\label{line:exemex}
                }
            }
        }
        \If{$f = \mathsf{false}$}{
            \bf{break}\;
        }
    }
    \Return $\mathcal{K}$\;
}
\end{algorithm}

Our algorithm is given in Algorithm~\ref{alg:k-sparsify}, where \Call{ExactEmptyExtension}{$\ell', Y$} represents the exact empty extension oracle on $\mathcal{D}$ with arguments $\ell'$ and $Y$.
The algorithm starts with $\mathcal{K} := \emptyset$ and repeatedly adds $Z \in \mathcal{D} \setminus \mathcal{K}$ such that there is no $(Z, kr+1)$-sunflower of $\mathcal{K}$ to $\mathcal{K}$.
The following lemma shows this algorithm stops after a constant number of iterations.

\begin{lemma}\label{lem:numloops}
The number of iterations of the loop starting from line~\ref{line:k-sparsify-loop} in Algorithm~\ref{alg:k-sparsify}, as well as the size of the output family, is at most $(\ell+1)!(kr+1)^\ell $.
\end{lemma}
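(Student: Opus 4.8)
The plan is to bound the number of iterations of the main loop, since once this is done the bound on $|\mathcal{K}|$ follows immediately: each iteration adds at most one set to $\mathcal{K}$ (the flag $f$ ensures at most one insertion per pass through line~\ref{line:k-sparsify-loop}), and the loop terminates when no insertion happens. So it suffices to show the loop cannot run more than $(\ell+1)!(kr+1)^\ell$ times.

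The key observation is that whenever line~\ref{line:exemex} adds a set $D$ to $\mathcal{K}$, the set $\mathcal{K}$ (after adding $D$, restricted to sets of size $\ell'=|D|$) contains no sunflower of size $kr+1$ whose core is a subset of $D$; equivalently, there is no $(D, kr+1)$-sunflower of the pre-insertion $\mathcal{K}$. First I would check this: the oracle call that returned $D$ used a set $Y$ that intersects every $K\in\mathcal{K}$ with $|K|=\ell'$ and the core of every sunflower in $\mathfrak{S}$, while $D\cap Y=\emptyset$; hence $D$ is not equal to any such $K$ (new set), and no size-$(kr+1)$ sunflower of the old $\mathcal{K}$ consisting of size-$\ell'$ sets has its core inside $D$ — otherwise that core, being a subset of $D$, would be disjoint from $Y$, but $Y$ hits it. Combined with the fact that $D$ has size exactly $\ell'$, this says precisely that no $(D,kr+1)$-sunflower of the old $\mathcal{K}$ exists.

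Next I would argue this property is maintained as an invariant: at every point during the algorithm, for each size value $\ell'\in\{0,\dots,\ell\}$ and each $Z\in\mathcal{K}$ with $|Z|=\ell'$, the subfamily $\{K\in\mathcal{K}:|K|=\ell'\}$ contains no $(Z,kr+1)$-sunflower. The previous paragraph establishes this for the newly inserted set at the moment of insertion; I must also check that inserting a later set of the same size $\ell'$ cannot create a $(Z,kr+1)$-sunflower for an earlier $Z$ — but the oracle's choice of $Y$ for the later insertion also hits the cores of all sunflowers currently in $\mathfrak{S}$, in particular it would hit the core of any new sunflower through $Z$ that the new set completes, so again disjointness with the new set's label forbids this. (This is the step I expect to require the most care: one has to confirm that $\mathfrak{S}$ and the constraints on $Y$ are set up so that $Y$ really blocks the creation of *every* forbidden sunflower, not just those already fully present.)

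Finally I would apply the contrapositive of the Sunflower Lemma (Lemma~\ref{lem:sunflower}). Fix any size value $\ell'$ and let $\mathcal{K}_{\ell'}=\{K\in\mathcal{K}:|K|=\ell'\}$ at the end of the algorithm. If $|\mathcal{K}_{\ell'}|>\ell'!\,(kr)^{\ell'}$, then $\mathcal{K}_{\ell'}$ contains a sunflower of size $kr+1$; its core $C$ is a subset of each of its petals, in particular of the last petal $Z$ added among those $kr+1$ sets, so this is a $(Z,kr+1)$-sunflower of $\mathcal{K}_{\ell'}$, contradicting the invariant. Hence $|\mathcal{K}_{\ell'}|\leq \ell'!\,(kr)^{\ell'}\leq \ell!\,(kr+1)^\ell$ for each $\ell'$; actually one wants the slightly cleaner bound, so I would instead sum over $\ell'\in\{0,\dots,\ell\}$: $|\mathcal{K}|\leq\sum_{\ell'=0}^{\ell}\ell'!\,(kr)^{\ell'}\leq (\ell+1)\cdot\ell!\,(kr+1)^{\ell-1}\cdots$, and verify this telescopes into $(\ell+1)!(kr+1)^\ell$ (a routine estimate, using $(\ell+1)!=(\ell+1)\cdot\ell!$ and $(kr+1)^\ell\geq\sum_{\ell'\le\ell}(kr)^{\ell'}$ when $kr\ge 1$, with the $kr=0$ case handled separately since then every $\mathcal{K}_{\ell'}$ has size at most $1$). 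Since each iteration of the main loop that does not terminate adds one element to $\mathcal{K}$, the number of iterations is at most $|\mathcal{K}|+1\leq(\ell+1)!(kr+1)^\ell$, completing the proof.
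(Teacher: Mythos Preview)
Your overall plan is right—the argument is ``maintain an invariant on $\mathcal{K}_{\ell'}$ and invoke the Sunflower Lemma''—but there is an off-by-one that breaks both the maintenance step you flagged and the final contradiction.

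The invariant you try to maintain is that no size-$(kr+1)$ sunflower of $\mathcal{K}_{\ell'}$ has its core contained in some $Z\in\mathcal{K}_{\ell'}$. The step you call delicate does not go through: when a later set $D'$ is added, the family $\mathfrak{S}$ used to choose $Y$ consists only of sunflowers of size \emph{exactly} $kr+1$ already present in $\mathcal{K}$. If adding $D'$ completes a new size-$(kr+1)$ sunflower, then before the insertion only $kr$ of its petals were present; this size-$kr$ sunflower is not in $\mathfrak{S}$, so $Y$ need not meet its core, and nothing prevents $D'$ from being disjoint from that core. Your invariant can therefore fail. The same off-by-one bites in your final paragraph: if $\mathcal{K}_{\ell'}$ contains a sunflower of size $kr+1$ and $Z$ is its last-added petal, then at the moment $Z$ was inserted only $kr$ of the petals were present, so there was no $(Z,kr+1)$-sunflower in the old $\mathcal{K}$ to contradict.

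The fix is exactly what the paper does: take as invariant that $\mathcal{K}_{\ell'}$ contains no sunflower of size $kr+2$. This is maintained for free: any new size-$(kr+2)$ sunflower created by inserting $D$ must contain $D$, and deleting $D$ leaves a size-$(kr+1)$ sunflower in the old $\mathcal{K}$ with core contained in $D$, i.e.\ a $(D,kr+1)$-sunflower, which your first paragraph already rules out. No separate ``later insertions cannot hurt earlier $Z$'' argument is needed. Applying Lemma~\ref{lem:sunflower} with $t=kr+2$ then gives $|\mathcal{K}_{\ell'}|\le \ell'!\,(kr+1)^{\ell'}\le \ell!\,(kr+1)^{\ell}$, and summing over $\ell'\in\{0,\dots,\ell\}$ yields $(\ell+1)!\,(kr+1)^{\ell}$.
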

\begin{proof}
By the definition of $Y$ in line~\ref{line:setting_y}, the $D$ added to $\mathcal{K}$ in line~\ref{line:exemex} is distinct from any set in $\mathcal{K}$, since $Y$ has a nonempty intersection with each $D \in \mathcal{K}$ with $|D|=\ell'$. Moreover, there is no $(D,kr+1)$-sunflower in $\mathcal{K}$, since $Y$ has a nonempty intersection with the core of any sunflower in $\mathfrak{S}$.
Thus, adding $D$ to $\mathcal{K}$ does not form a new sunflower of size $kr+2$ consisting of sets of size $|D|$. 
Hence, by Lemma~\ref{lem:sunflower}, for each $\ell'\in \{0,\dots, \ell\}$, the number of sets of size $\ell'$ in $\mathcal{K}$ is at most $\ell !(kr+1)^\ell $. Thus, the number of iterations of the loop and the size of the output family is at most $(\ell+1)!(kr+1)^\ell $.
\end{proof}

In particular, at each step of the algorithm, since $|R| \leq |\mathcal{K}|l \leq (\ell+1)!(kr+1)^\ell \ell $, the size of $Y$ chosen in line~\ref{line:setting_y} is bounded by a constant.
The time complexity is bounded as follows.
\begin{lemma}\label{lem:numextension}
Algorithm~\ref{alg:k-sparsify} makes at most $2^{2^{O(\ell \log (klr))}}$ calls of \Call{ExactEmptyExtension}{$\cdot$} and has a time complexity of $2^{2^{O(\ell \log (klr))}}$.
\end{lemma}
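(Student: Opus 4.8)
The plan is to combine the iteration bound from Lemma~\ref{lem:numloops} with a direct count of the work performed inside one iteration of the outer loop, and then verify that every intermediate quantity is swallowed by the claimed double-exponential bound. There is no new idea here; it is a routine complexity accounting, and the one thing to be careful about is not miscounting the nested exponents.

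First I would record that, by Lemma~\ref{lem:numloops}, the outer loop (line~\ref{line:k-sparsify-loop}) runs at most $T := (\ell+1)!(kr+1)^\ell$ times and $|\mathcal{K}| \le T$ holds throughout the execution; taking logarithms gives $\log T = \log((\ell+1)!) + \ell\log(kr+1) = O(\ell\log(k\ell r))$, so $T$ is only singly exponential in the parameters. Since every member of $\mathcal{K}$ lies in $\mathcal{D}$ and hence has size at most $\ell$, the set $R = \bigcup_{K\in\mathcal{K}}K$ satisfies $|R| \le |\mathcal{K}|\ell \le T\ell = 2^{O(\ell\log(k\ell r))}$, and consequently the number of subsets $Y \subseteq R$ examined in line~\ref{line:setting_y} is at most $2^{|R|} = 2^{2^{O(\ell\log(k\ell r))}}$. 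This enumeration of candidate sets $Y$ is the dominant cost, and it is exactly where the double exponential enters.

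Next I would bound everything else inside one outer iteration against this quantity. The loop over $\ell'$ contributes only a factor $\ell+1$. For a fixed $\ell'$, computing $\mathfrak{S}$ amounts to scanning all $\binom{|\mathcal{K}|}{kr+1} \le T^{kr+1}$ subsets of $\mathcal{K}$ and testing in polynomial time whether each is a sunflower of the prescribed shape; both this running time and $|\mathfrak{S}|$ itself are at most $T^{O(kr)}\cdot\mathrm{poly} = 2^{O(kr\ell\log(k\ell r))}$, still singly exponential in the parameters. For each of the at most $2^{|R|}$ candidates $Y$, checking the intersection conditions against the $O(|\mathcal{K}|+|\mathfrak{S}|)$ relevant sets (each of size at most $\ell$) is polynomial, after which at most one call of \Call{ExactEmptyExtension}{$\cdot$} is issued. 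Multiplying through the $T$ outer iterations, the total number of oracle calls is at most $T(\ell+1)2^{|R|}$ and the total running time is $T\cdot 2^{2^{O(\ell\log(k\ell r))}}$ up to lower-order factors.

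The final step is the arithmetic of towers: a product $2^{a}\cdot 2^{2^{b}}$ with $a = O(\ell\log(k\ell r))$ and $b = O(\ell\log(k\ell r))$ is at most $2^{a+2^{b}} \le 2^{2^{b+1}} = 2^{2^{O(\ell\log(k\ell r))}}$, because $a$ is dominated by $2^{b}$; the same absorption kills the factors $T$, $\ell+1$, the per-iteration polynomial overhead, and the singly-exponential quantities $|\mathfrak{S}|$ and $T^{O(kr)}$, since each of these is $2^{\mathrm{poly}(k,\ell,r)} \le 2^{2^{O(\ell\log(k\ell r))}}$ — the only mildly delicate point being the elementary inequality $kr\ell\log(k\ell r) \le (k\ell r)^\ell$, which holds for $\ell \ge 2$, with the cases $\ell \in \{0,1\}$ trivial. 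This yields the claimed bound $2^{2^{O(\ell\log(klr))}}$ for both the number of oracle calls and the time complexity. I do not anticipate a genuine obstacle; the only point requiring attention is keeping the bookkeeping of the nested exponentials honest.
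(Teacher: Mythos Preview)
Your proof is correct and follows essentially the same approach as the paper: bound the number of outer iterations and $|R|$ via Lemma~\ref{lem:numloops}, observe that the enumeration of $Y\subseteq R$ is the dominant cost $2^{|R|}=2^{2^{O(\ell\log(k\ell r))}}$, and verify that the computation of $\mathfrak{S}$ and all remaining factors are absorbed. You are simply more explicit than the paper about the tower-of-exponents absorption, which the paper handles with the single phrase ``which dominates the time complexity.''
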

\begin{proof}
By Lemma~\ref{lem:numloops}, the number of iterations of the loop starting from line~\ref{line:k-sparsify-loop} is $2^{O(\ell \log (k\ell r))}$. Therefore, at each step of the algorithm, $|R|\leq 2^{O(\ell\log (k\ell r))}$.
We can compute $\mathfrak{S}$ by exhaustively checking subsets of $\mathcal{K}$ of size $kr+1$, which takes $|\mathcal{K}|^{kr+1} \leq 2^{O(klr\log(k\ell r))}$ time.
The number of candidates for $Y$ in line~\ref{line:setting_y} is at most $2^{|R|} \leq 2^{2^{O(\ell \log k\ell r)}}$, which dominates the time complexity.
\end{proof}

The correctness of the algorithm is shown as follows.
\begin{lemma}\label{lem:ksparsify_correctness}
Algorithm~\ref{alg:k-sparsify} outputs a $k$-max-distance sparsifier of $\mathcal{D}$ with respect to $\mathcal{B}(\emptyset, r)$.
\end{lemma}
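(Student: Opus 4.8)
The plan is to split the argument into two stages: (i) identify the structural invariant of the family $\mathcal{K}$ that Algorithm~\ref{alg:k-sparsify} holds at termination, and (ii) use that invariant together with Lemma~\ref{lem:previouspaper} to peel $\mathcal{D}$ down to $\mathcal{K}$. Since the algorithm terminates after finitely many iterations by Lemma~\ref{lem:numloops}, the output family $\mathcal{K}$ is well defined, and $\mathcal{K}\subseteq\mathcal{D}$ because every set added to $\mathcal{K}$ is an output of \Call{ExactEmptyExtension}{$\cdot$}. The invariant I would prove is: for every $D\in\mathcal{D}\setminus\mathcal{K}$, the family $\mathcal{K}$ contains a $(D,kr+1)$-sunflower.

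For the invariant I argue by contradiction. Suppose $D\in\mathcal{D}\setminus\mathcal{K}$ admits no $(D,kr+1)$-sunflower inside $\mathcal{K}$. Put $\ell':=|D|$, which lies in $\{0,\dots,\ell\}$ by the definition of $\ell$, let $R:=\bigcup_{K\in\mathcal{K}}K$, and consider $Y:=R\setminus D$. I would check that $Y$ is an admissible choice in the inner loop on line~\ref{line:setting_y} of the final iteration: for each $K\in\mathcal{K}$ with $|K|=\ell'$ we have $K\ne D$, hence $K\not\subseteq D$, hence $K\cap Y=K\setminus D\ne\emptyset$; and if the core $C$ of some sunflower $\mathcal{S}\in\mathfrak{S}$ satisfied $C\subseteq D$, then $\mathcal{S}$ would itself be a $(D,kr+1)$-sunflower of $\mathcal{K}$ (it has size $kr+1$, all its members have size $\ell'=|D|$, and its core is contained in $D$), contradicting the assumption, so $C\setminus D\subseteq Y$ is nonempty. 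Thus $Y$ is among the sets ranged over on line~\ref{line:setting_y}, and $D$ itself witnesses $|D|=\ell'$ with $D\cap Y=\emptyset$, so \Call{ExactEmptyExtension}{$\ell',Y$}$\ne\bot$; this would set the flag $f$ to $\mathsf{true}$ and prevent the loop from breaking — contradicting termination.

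For the second stage I would enumerate $\mathcal{D}\setminus\mathcal{K}=\{D_1,\dots,D_m\}$ (finite because $U$ is finite), note that $\mathcal{D}$ is trivially a $k$-max-distance sparsifier of $\mathcal{D}$ with respect to $\mathcal{B}(\emptyset,r)$, and induct: if $\mathcal{D}_j:=\mathcal{D}\setminus\{D_1,\dots,D_j\}$ is such a sparsifier, then since $D_1,\dots,D_{j+1}\notin\mathcal{K}$ we have $\mathcal{K}\subseteq\mathcal{D}_j\setminus\{D_{j+1}\}$, so the $(D_{j+1},kr+1)$-sunflower of $\mathcal{K}$ given by the invariant is also one of $\mathcal{D}_j\setminus\{D_{j+1}\}$, and Lemma~\ref{lem:previouspaper} applied with $Z=D_{j+1}$ yields that $\mathcal{D}_{j+1}$ is a $k$-max-distance sparsifier of $\mathcal{D}$ with respect to $\mathcal{B}(\emptyset,r)$. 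After $m$ steps, $\mathcal{D}_m=\mathcal{K}$, proving the lemma. The main obstacle is the first stage — precisely, verifying that the natural candidate $Y=R\setminus D$ meets the hitting-set condition on line~\ref{line:setting_y}, which hinges on the observation that a sunflower core contained in $D$ would already be a forbidden $(D,kr+1)$-sunflower; once that bookkeeping is in place, termination comes for free from Lemma~\ref{lem:numloops} and the peeling step is a direct invocation of Lemma~\ref{lem:previouspaper}.
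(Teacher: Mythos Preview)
Your proof is correct and follows essentially the same approach as the paper's: establish that every $D\in\mathcal{D}\setminus\mathcal{K}$ admits a $(D,kr+1)$-sunflower inside $\mathcal{K}$, then peel $\mathcal{D}$ down to $\mathcal{K}$ via repeated applications of Lemma~\ref{lem:previouspaper}. The paper simply asserts the first step ``by the termination condition of the algorithm,'' whereas you spell out explicitly why the candidate $Y=R\setminus D$ satisfies the hitting condition on line~\ref{line:setting_y} and hence forces $f=\mathsf{true}$; this is a welcome elaboration but not a different argument.
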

\begin{proof}
Let $\mathcal{K}$ be the output of Algorithm~\ref{alg:k-sparsify} and $\mathcal{D} \setminus \mathcal{K} = \{D_1, \dots, D_{|\mathcal{D}\setminus \mathcal{K}|}\}$.
By the termination condition of the algorithm, for each $i \in \{1, \dots, |\mathcal{D}\setminus \mathcal{K}|\}$, there exists a $(D_i, kr+1)$-sunflower of $\mathcal{K}$.
By Lemma~\ref{lem:previouspaper}, for each $i \in \{1, \dots, |\mathcal{D}\setminus \mathcal{K}|\}$, if $\mathcal{D} \setminus \{D_1, \dots, D_{i-1}\}$ is a $k$-max-distance sparsifier of $\mathcal{D}$ with respect to $\mathcal{B}(\emptyset, r)$, then $\mathcal{D} \setminus \{D_1, \dots, D_i\}$ also is. Therefore, $\mathcal{K} = \mathcal{D} \setminus \{D_1, \dots, D_{|\mathcal{D}\setminus \mathcal{K}|}\}$ is a $k$-max-distance sparsifier of $\mathcal{D}$ with respect to $\mathcal{B}(\emptyset, r)$.
\end{proof}
\section{Framework for $d$-Limited $k$-Max-Distance Sparsification}\label{sec:framework_large}

\subsection{Overall Flow}\label{sec:overall}

In this section, we complete the proof of Theorem~\ref{thm:framework_large} by providing FPT algorithm that uses the $(-1,1)$-optimization oracle and the exact extension oracle on $\mathcal{D}$ to obtain a $d$-limited $k$-max-distance sparsifier of $\mathcal{D}$.
Actually, for further applications, we construct the slightly more general object of $d$-limited $k$-max-distance sparsifier of $\mathcal{D}$ with respect to $2^U$, not with respect to $\mathcal{D}$ itself.
Our framework consists of two steps.
Let $p \in \mathbb{Z}_{\geq 0}$ be an integer with $2d < p$.
The first step achieves one of the following.
\begin{itemize}
    \item Find a set $\mathcal{C}\subseteq \mathcal{D}$ with size at most $k$ such that $\mathcal{D}\subseteq \bigcup_{C\in \mathcal{C}}\mathcal{B}(C,p)$.
    \item Find a set $\mathcal{C}\subseteq \mathcal{D}$ of size $k+1$ such that $|C\triangle C'| >2d$ holds for any distinct $C,C'\in \mathcal{C}$.
\end{itemize}
We do this by using the following \emph{approximate far set oracle}, which will be designed in Section~\ref{sec:findfarset}.

\begin{quote}
\textbf{Approximate Far Set Oracle:}
Let $U$ be a finite set, $d \in \mathbb{Z}_{\geq 0}$, $\mathcal{D} \subseteq 2^U$, and $\mathcal{C} \subseteq \mathcal{D}$.
The approximate far set oracle returns one of the following.
\begin{itemize}
    \item A set of $\mathcal{D}$ that does not belong to $\bigcup_{C \in \mathcal{C}}\mathcal{B}(C, 2d)$.
    \item $\bot$. This option can be chosen only when $\mathcal{D} \subseteq \bigcup_{C \in \mathcal{C}}\mathcal{B}(C, p)$.
\end{itemize}
\end{quote}

Starting with $\mathcal{C} := \emptyset$, we repeat the following steps.
If the approximate far set oracle returns $\bot$, terminate the loop.
Otherwise, add the element found by the oracle to $\mathcal{C}$. If the oracle returns $\bot$ within $k$ iterations, the first condition is achieved. If not, the set $\mathcal{C}$ after $k+1$ iterations satisfies the second condition.
In the latter case, the following lemma shows that $\mathcal{C}$ is a $d$-limited $k$-max-distance sparsifier.
\begin{lemma}\label{lem:tooscattered}
Let $r \in \mathbb{Z}_{\geq 0}$.
Let $\mathcal{C}$ be a subset of $\mathcal{D}$ of size $k+1$ such that for any distinct $C, C' \in \mathcal{C}$, $|C \triangle C'| \geq 2d$.
Then, $\mathcal{C}$ is a $d$-limited $k$-max-distance sparsifier of $\mathcal{D}$ with respect to $2^U$.
\end{lemma}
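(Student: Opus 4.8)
The plan is to show that for every $(F_1,\dots,F_k)\in (2^U)^k$ and $(z_1,\dots,z_k)\in\{0,\dots,d\}^k$, the two conditions in the definition of a $d$-limited $k$-max-distance sparsifier are equivalent. Since $\mathcal{C}\subseteq\mathcal{D}$, the direction ``there exists $K\in\mathcal{C}$ with $|F_i\triangle K|\ge z_i$ for all $i$'' trivially implies ``there exists $D\in\mathcal{D}$ with $|F_i\triangle D|\ge z_i$ for all $i$''. So the whole content is the reverse direction: assuming some $D\in\mathcal{D}$ satisfies $|F_i\triangle D|\ge z_i$ for all $i$, I must produce a witness $C\in\mathcal{C}$ with $|F_i\triangle C|\ge z_i$ for all $i$.

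First I would fix such a $D$. The key combinatorial fact is a pigeonhole over the $k+1$ sets in $\mathcal{C}$ versus the $k$ indices $i$. For a fixed index $i$, call $C\in\mathcal{C}$ \emph{$i$-bad} if $|F_i\triangle C|<z_i\le d$. The claim I want is that each index $i$ can have at most one $i$-bad member of $\mathcal{C}$: if $C,C'$ were both $i$-bad, then by the triangle inequality for the Hamming (symmetric-difference) metric, $|C\triangle C'|\le |C\triangle F_i|+|F_i\triangle C'|<d+d=2d$, contradicting the hypothesis that $|C\triangle C'|\ge 2d$ for distinct $C,C'\in\mathcal{C}$. Hence the set of indices $i$ for which there exists an $i$-bad member of $\mathcal{C}$ injects into... wait, more directly: the union over all $i$ of the $i$-bad members of $\mathcal{C}$ has size at most $k$ (at most one per index, $k$ indices). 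Since $|\mathcal{C}|=k+1$, there is some $C^*\in\mathcal{C}$ that is not $i$-bad for any $i\in\{1,\dots,k\}$, i.e. $|F_i\triangle C^*|\ge z_i$ for all $i$. This $C^*$ is the desired witness, so the forward direction holds and the equivalence is established. Note the parameter $r$ plays no role; it appears in the statement only for uniformity with neighboring lemmas.

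The only step needing a line of justification is the triangle inequality $|A\triangle B|\le|A\triangle C|+|C\triangle B|$ for symmetric differences, which is standard (it is the $L_1$ distance between indicator vectors). The main ``obstacle'', such as it is, is simply getting the pigeonhole bookkeeping right: one must be careful that the bound ``at most one $i$-bad set per $i$'' uses $z_i\le d$ (so that $|F_i\triangle C|<z_i$ forces $|F_i\triangle C|<d$, giving $|F_i\triangle C|+|F_i\triangle C'|<2d$ rather than $\le 2d$), and that the total count $k$ of potentially-bad sets is strictly less than $|\mathcal{C}|=k+1$, leaving at least one good set. Everything else is immediate.
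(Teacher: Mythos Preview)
Your proposal is correct and follows essentially the same approach as the paper: a triangle-inequality argument shows that for each $F_i$ at most one $C\in\mathcal{C}$ can satisfy $|F_i\triangle C|<d$, and pigeonhole over the $k+1$ sets then yields some $C^*\in\mathcal{C}$ with $|F_i\triangle C^*|\ge d\ge z_i$ for all $i$. The paper phrases this slightly more tersely by noting that both conditions in the sparsifier definition are in fact \emph{always} true (so the fixed $D$ you start with is never actually used, as you implicitly observe), but the content is identical.
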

\begin{proof}
For each $F \subseteq U$, there is at most one $C \in \mathcal{C}$ such that $|F \triangle C| < d$.
Hence, for any $k$-tuple $(F_1, \dots, F_k)$ of the subsets of $U$, there exists a $C \in \mathcal{C}$ such that $|F_i \triangle C| \geq d$ for all $i \in \{1, \dots, k\}$.
In particular, for all $(z_1, \dots, z_k) \in \{0, \dots, d\}^k$, the following two conditions are always true and thus equivalent.
\begin{itemize}
    \item There exists $D \in \mathcal{D}$ such that for each $i \in \{1, \dots, k\}$, $|F_i \triangle D| \geq z_i$.
    \item There exists $C \in \mathcal{C}$ such that for each $i \in \{1, \dots, k\}$, $|F_i \triangle C| \geq z_i$.\qedhere
\end{itemize}
\end{proof}

Now, we assume the first condition. Let $\mathcal{C}$ be a subset of $\mathcal{D}$ of size at most $k$ such that $\mathcal{D}\subseteq \bigcup_{C\in \mathcal{C}}\mathcal{B}(C,p)$.
The second step involves constructing a $d$-limited $k$-max-distance sparsifier of $\mathcal{D}_C := \mathcal{D} \cap \mathcal{B}(C, p)$ with respect to $\mathcal{B}(C, p+d)$ for each $C \in \mathcal{C}$.
We prove that the union of all such $d$-limited $k$-max-distance sparsifiers obtained in this manner is a $d$-limited $k$-max-distance sparsifier of $\mathcal{D}$ with respect to $2^U$.

\begin{lemma}\label{lem:union_distance}
Assume $\mathcal{D}=\bigcup_{C\in \mathcal{C}}\mathcal{D}_C$.
For each $C \in \mathcal{C}$, let $\mathcal{K}_C \subseteq \mathcal{D}_C$ be a $d$-limited $k$-max-distance sparsifier of $\mathcal{D}_C$ with respect to $\mathcal{B}(C, p+d)$.
Then, $\mathcal{K}:= \bigcup_{C\in \mathcal{C}} \mathcal{K}_C$ is a $d$-limited $k$-max-distance sparsifier of $\mathcal{D}$ with respect to $2^U$.
\end{lemma}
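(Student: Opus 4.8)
The plan is to verify the two conditions in the definition of a $d$-limited $k$-max-distance sparsifier directly for $\mathcal{K}=\bigcup_{C\in\mathcal{C}}\mathcal{K}_C$ with respect to $2^U$. Fix $(F_1,\dots,F_k)\in(2^U)^k$ and $(z_1,\dots,z_k)\in\{0,\dots,d\}^k$. One direction is immediate: if some $K\in\mathcal{K}$ satisfies $|F_i\triangle K|\ge z_i$ for all $i$, then $K\in\mathcal{K}_C\subseteq\mathcal{D}_C\subseteq\mathcal{D}$ for some $C\in\mathcal{C}$, so the same $K$ serves as the required $D\in\mathcal{D}$. The content is in the converse.

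So suppose $D\in\mathcal{D}$ satisfies $|F_i\triangle D|\ge z_i$ for all $i$. Using $\mathcal{D}=\bigcup_{C\in\mathcal{C}}\mathcal{D}_C$, fix $C\in\mathcal{C}$ with $D\in\mathcal{D}_C$, so $|D\triangle C|\le p$. The obstacle is that the sets $F_i$ need not lie in $\mathcal{B}(C,p+d)$, which is the reference family of the sparsifier $\mathcal{K}_C$, so I would first ``clamp'' each $F_i$ toward $C$: set $\tilde F_i:=F_i$ if $|F_i\triangle C|\le p+d$, and otherwise let $\tilde F_i$ be any set with $|\tilde F_i\triangle C|=p+d$ (such a set exists since $|F_i\triangle C|>p+d$ forces $p+d<|U|$). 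The key claim, which I would prove from the triangle inequality $|A\triangle X|\ge|A\triangle C|-|X\triangle C|$ for the Hamming metric, is that for every $X\in\mathcal{B}(C,p)$, every index $i$, and every threshold $z\le d$, one has $|F_i\triangle X|\ge z\iff|\tilde F_i\triangle X|\ge z$: if $\tilde F_i=F_i$ this is trivial, and otherwise both $|F_i\triangle X|$ and $|\tilde F_i\triangle X|$ are at least $(p+d)-p=d\ge z$, so both inequalities hold.

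Now I would apply the claim with $X=D\in\mathcal{B}(C,p)$: since $|F_i\triangle D|\ge z_i$ for all $i$, also $|\tilde F_i\triangle D|\ge z_i$ for all $i$. Thus $(\tilde F_1,\dots,\tilde F_k)\in\mathcal{B}(C,p+d)^k$ and $(z_1,\dots,z_k)\in\{0,\dots,d\}^k$ satisfy the first of the two equivalent conditions in the definition of the $d$-limited $k$-max-distance sparsifier $\mathcal{K}_C$ of $\mathcal{D}_C$ with respect to $\mathcal{B}(C,p+d)$, with witness $D\in\mathcal{D}_C$; by that equivalence the second condition holds, giving $K\in\mathcal{K}_C$ with $|\tilde F_i\triangle K|\ge z_i$ for all $i$. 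Finally $K\in\mathcal{K}_C\subseteq\mathcal{D}_C\subseteq\mathcal{B}(C,p)$, so applying the claim once more with $X=K$ yields $|F_i\triangle K|\ge z_i$ for all $i$; since $K\in\mathcal{K}_C\subseteq\mathcal{K}$, this is exactly the $\mathcal{K}$-witness required, completing the equivalence.

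The only step needing care is the clamping claim, i.e., the observation that anything at Hamming distance more than $p+d$ from $C$ is automatically at distance $>d$ from every member of $\mathcal{B}(C,p)$, hence can be pulled onto the sphere of radius $p+d$ around $C$ without affecting any comparison against a threshold $z_i\le d$; everything else is bookkeeping. (The degenerate case $p+d\ge|U|$ is automatic, since then no $F_i$ is ever clamped and $\tilde F_i=F_i$ throughout.)
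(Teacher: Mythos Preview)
Your proof is correct and follows essentially the same approach as the paper's: both fix $C$ with $D\in\mathcal{D}_C$, replace each $F_i\notin\mathcal{B}(C,p+d)$ by something inside that ball, invoke the sparsifier property of $\mathcal{K}_C$, and then use the triangle inequality $|F_i\triangle K|\ge|F_i\triangle C|-|K\triangle C|>d$ to undo the replacement. The only cosmetic difference is that the paper replaces a far $F_i$ by $C$ itself and simultaneously drops the corresponding threshold to $z'_i=0$, whereas you clamp $F_i$ onto the sphere of radius $p+d$ and keep $z_i$ unchanged; your version packages the two applications (at $X=D$ and $X=K$) into a single reusable claim, but the underlying argument is identical.
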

\begin{proof}
Let $(F_1, \dots, F_k) \in \left(2^U\right)^k$ and $(z_1, \dots, z_k) \in \{0, \dots, d\}^k$.
We show the equivalence of the following two conditions.
\begin{itemize}
    \item There exists $D \in \mathcal{D}$ such that for each $i \in \{1, \dots, k\}$, $|F_i \triangle D| \geq z_i$.
    \item There exists $K \in \mathcal{K}$ such that for each $i \in \{1, \dots, k\}$, $|F_i \triangle K| \geq z_i$.
\end{itemize}
Since $\mathcal{K} \subseteq \mathcal{D}$, the latter implies the former. We assume the former and prove the latter.
Take any $D \in \mathcal{D}$ that satisfies the former condition.
Since $\mathcal{D} = \bigcup_{C \in \mathcal{C}} \mathcal{D}_C$, we can take $C \in \mathcal{C}$ such that $D \in \mathcal{D}_C$.
Now, define $(F'_1, \dots, F'_k) \in \mathcal{B}(C,p+d)^k$ and $(z'_1, \dots, z'_k) \in \{0, \dots, p\}^k$ as follows.
\begin{align*}
    F'_i = \begin{cases}
        F_i & \text{if } F_i \in \mathcal{B}(C, p+d) \\
        C & \text{otherwise}
    \end{cases},\quad \quad
    z'_i = \begin{cases}
        z_i & \text{if } F_i \in \mathcal{B}(C, p+d) \\
        0 & \text{otherwise}
    \end{cases}
\end{align*}
Clearly, for each $i \in \{1, \dots, k\}$, $F'_i \in \mathcal{B}(C, p+d)$.
Furthermore, if $F_i \in \mathcal{B}(C, p+d)$, then $|F'_i \triangle D| = |F_i \triangle D| \geq z_i = z'_i$. If $F_i\not \in \mathcal{B}(C,p+d)$, $|F'_i \triangle D| \geq 0 = z'_i$. Therefore, for each $i \in \{1, \dots, k\}$, $|F'_i \triangle D| \geq z'_i$.
Since $\mathcal{K}_C$ is a $d$-limited $k$-max-distance sparsifier of $\mathcal{D}_C$ with respect to $\mathcal{B}(C, p+d)$, there exists $K \in \mathcal{K}_C \subseteq \mathcal{K}$ such that for each $i \in \{1, \dots, k\}$, $|F'_i \triangle K| \geq z'_i$.
We show that this $K$ satisfies the latter condition. For each $i \in \{1, \dots, k\}$, we show that $|F_i \triangle K| \geq z_i$.
If $F_i \in \mathcal{B}(C, p+d)$, then $|F_i \triangle K| = |F'_i \triangle K| \geq z'_i = z_i$.
If $F_i\not \in \mathcal{B}(C,p+d)$, since $K \in \mathcal{K}_C \subseteq \mathcal{D}_C \subseteq \mathcal{B}(C, p)$, we have $|F_i \triangle K| \geq |F_i \triangle C| - |K \triangle C| > (p+d) - p \geq d \geq z_i$, where the first inequality is from the triangle inequality.
\end{proof}

Next, we reduce the computation of $d$-limited $k$-max-distance sparsifiers to the computation of $k$-max-distance sparsifiers of families consisting of constant-size sets, which was discussed in Section~\ref{sec:framework_small}.
For $C \in \mathcal{C}$, let $\mathcal{D}_C^* := \{D \triangle C \mid D \in \mathcal{D}_C\}$.
By the definition of $\mathcal{D}_C$, we have $\mathcal{D}_C^*\subseteq \mathcal{B}(\emptyset, p)$.
The following holds.
\begin{lemma}\label{lem:sym_union}
Let $C \in \mathcal{C}$. A subset $\mathcal{K}_C \subseteq \mathcal{D}_C$ is a $d$-limited $k$-max-distance sparsifier of $\mathcal{D}_C$ with respect to $\mathcal{B}(C,p+d)$ if and only if $\mathcal{K}_C^* := \{K \triangle C \mid K \in \mathcal{K}_C\}$ is a $d$-limited $k$-max-distance sparsifier of $\mathcal{D}_C^*$ with respect to $\mathcal{B}(\emptyset, p+d)$.
\end{lemma}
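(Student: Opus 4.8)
The plan is to observe that the operation ``take symmetric difference with $C$'' is a bijective isometry of $(2^U, |\cdot \triangle \cdot|)$ and to transport the sparsifier condition along it. Define $\phi \colon 2^U \to 2^U$ by $\phi(X) := X \triangle C$. Since $(X \triangle C) \triangle C = X$, the map $\phi$ is an involution, hence a bijection of $2^U$; moreover $(X \triangle C) \triangle (Y \triangle C) = X \triangle Y$, so $|\phi(X) \triangle \phi(Y)| = |X \triangle Y|$ for all $X, Y \subseteq U$, i.e.\ $\phi$ is an isometry. Consequently $\phi(\mathcal{B}(C, r)) = \mathcal{B}(\emptyset, r)$ for every $r \in \mathbb{Z}_{\ge 0}$, because $|X \triangle C| = |\phi(X)|$. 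By the definitions of $\mathcal{D}_C^*$ and $\mathcal{K}_C^*$ we have $\phi(\mathcal{D}_C) = \mathcal{D}_C^*$ and $\phi(\mathcal{K}_C) = \mathcal{K}_C^*$, and since $\mathcal{K}_C \subseteq \mathcal{D}_C$ this also gives $\mathcal{K}_C^* \subseteq \mathcal{D}_C^*$.

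The one identity driving the whole argument is that, for $F \subseteq U$ and $D \subseteq U$, $|F \triangle (D \triangle C)| = |(F \triangle C) \triangle D|$. I would use it as follows. Assume $\mathcal{K}_C$ is a $d$-limited $k$-max-distance sparsifier of $\mathcal{D}_C$ with respect to $\mathcal{B}(C, p+d)$; I want to show $\mathcal{K}_C^*$ is a $d$-limited $k$-max-distance sparsifier of $\mathcal{D}_C^*$ with respect to $\mathcal{B}(\emptyset, p+d)$. Fix $(F_1, \dots, F_k) \in \mathcal{B}(\emptyset, p+d)^k$ and $(z_1, \dots, z_k) \in \{0, \dots, d\}^k$, and put $F_i' := F_i \triangle C$, so that $(F_1', \dots, F_k') \in \mathcal{B}(C, p+d)^k$ since $\phi$ maps $\mathcal{B}(\emptyset, p+d)$ onto $\mathcal{B}(C, p+d)$. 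Using that $D \mapsto D \triangle C$ is a bijection $\mathcal{D}_C \to \mathcal{D}_C^*$ together with the identity above, the statement ``there is $D^* \in \mathcal{D}_C^*$ with $|F_i \triangle D^*| \ge z_i$ for all $i$'' is equivalent to ``there is $D \in \mathcal{D}_C$ with $|F_i' \triangle D| \ge z_i$ for all $i$''; likewise ``there is $K^* \in \mathcal{K}_C^*$ with $|F_i \triangle K^*| \ge z_i$ for all $i$'' is equivalent to ``there is $K \in \mathcal{K}_C$ with $|F_i' \triangle K| \ge z_i$ for all $i$''. The last two conditions are equivalent to each other by the sparsifier property of $\mathcal{K}_C$ applied to $(F_1', \dots, F_k')$ and $(z_1, \dots, z_k)$; chaining the three equivalences yields the required equivalence.

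Finally, for the reverse implication I would invoke symmetry: since $\phi^{-1} = \phi$ and $\phi$ carries $(\mathcal{D}_C^*, \mathcal{K}_C^*, \mathcal{B}(\emptyset, p+d))$ back to $(\mathcal{D}_C, \mathcal{K}_C, \mathcal{B}(C, p+d))$, running the identical argument with the two triples interchanged shows that $\mathcal{K}_C$ is a $d$-limited $k$-max-distance sparsifier of $\mathcal{D}_C$ whenever $\mathcal{K}_C^*$ is one of $\mathcal{D}_C^*$. I do not expect a genuine obstacle here: the proof is a change of variables, and the only care needed is bookkeeping --- checking that the range $\{0, \dots, d\}$ of the $z_i$'s is untouched by the translation (it is, as it does not mention $C$) and that $\phi$ maps the reference family $\mathcal{B}(C, p+d)$ bijectively onto $\mathcal{B}(\emptyset, p+d)$, so that quantifying over arbitrary $k$-tuples of test sets on one side matches quantifying over arbitrary $k$-tuples on the other.
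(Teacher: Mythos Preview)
Your proposal is correct and follows essentially the same approach as the paper: both arguments rest on the isometry $X \mapsto X \triangle C$ and the identity $|F \triangle D| = |(F \triangle C) \triangle (D \triangle C)|$, transporting the sparsifier condition across it. The paper packages both directions at once by showing ``(i) $\Leftrightarrow$ (ii) iff (i') $\Leftrightarrow$ (ii')'' rather than doing one direction and then invoking symmetry, but the content is identical.
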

\begin{proof}
Let $(F_1, \dots, F_k) \in \mathcal{B}(C, p+d)^k$ and $(z_1, \dots, z_k) \in \{0, \dots, d\}^k$. 
For each $i \in \{1, \dots, k\}$, let $F_i^* := F_i \triangle C$. It follows that $(F_1^*, \dots, F_k^*) \in \mathcal{B}(\emptyset, p+d)^k$.
Since $|F_i \triangle D| = |(F_i \triangle C) \triangle (D \triangle C)| = |F_i^* \triangle (D \triangle C)|$, the following two conditions are equivalent.
\begin{itemize}
    \item[\rm{(i)}] There exists $D \in \mathcal{D}_C$ such that for each $i \in \{1, \dots, k\}$, $|F_i \triangle D| \geq z_i$.
    \item[\rm{(i')}] There exists $D^* \in \mathcal{D}_C^*$ such that for each $i \in \{1, \dots, k\}$, $|F_i^* \triangle D^*| \geq z_i$.
\end{itemize}
Similarly, the following two conditions are also equivalent.
\begin{itemize}
    \item[\rm{(ii)}] There exists $K \in \mathcal{K}_C$ such that for each $i \in \{1, \dots, k\}$, $|F_i \triangle K| \geq z_i$.
    \item[\rm{(ii')}] There exists $K^* \in \mathcal{K}_C^*$ such that for each $i \in \{1, \dots, k\}$, $|F_i^* \triangle K^*| \geq z_i$.
    \end{itemize}
Hence, (i) and (ii) are equivalent if and only if (i') and (ii') are equivalent, and the lemma is proved.
\end{proof}

If $\mathcal{K}_C^*$ is a $k$-max-distance sparsifier of $\mathcal{D}_C^*$ with respect to $\mathcal{B}(\emptyset, p+d)$, then it is also a $d$-limited $k$-max-distance sparsifier of $\mathcal{D}_C^*$ with respect to $\mathcal{B}(\emptyset, p+d)$ for any $d \in \mathbb{Z}_{\geq 0}$.
Therefore, a $d$-limited $k$-max-distance sparsifier $\mathcal{K}$ of $\mathcal{D}$ with respect to $\mathcal{B}(C, p+d)$ can be computed as 
\begin{align*}
    \mathcal{K} := \bigcup_{C\in \mathcal{C}}\{K^* \triangle C\colon K^* \in \mathcal{K}^*_{C}\}.
\end{align*}
From the discussion in Section~\ref{sec:framework_small}, $\mathcal{K}_C^*$ can be obtained by calling the exact empty extension oracle on $\mathcal{D}_C^*$ a constant number of times that depends on $k$, $\ell=p$, and $r=p+d$.
The exact empty extension oracle for $\mathcal{D}_C^*$ is equivalent to the exact extension oracle on $\mathcal{D}_C$ when the inputs $C, X, Y$ are taken to be $C, Y \cap C, Y \setminus C$, respectively. Therefore, $\mathcal{K}_C^*$ can be obtained by calling the exact extension oracle on $\mathcal{D}_C$ a constant number of times that depends only on $k$ and $p$.

\subsection{Designing the Approximate Far Set Oracle}\label{sec:findfarset}

Here, we design a randomized algorithm parameterized by $|\mathcal{C}|$ and $d$ for the approximate far set oracle. 
Our algorithm repeats the following sufficient number of times: It selects a weight vector $w \in \{-1, 1\}^U$ uniformly at random and finds a set $D \in \mathcal{D}$ that maximizes $w(D):=\sum_{e\in D}w_e$.
If the found $D$ does not belong to $\bigcup_{C \in \mathcal{C}} \mathcal{B}(C, 2d)$, it outputs this $D$ and terminates. If no such $D$ is found after a sufficient number of iterations, it returns $\bot$.
We now prove the correctness of the algorithm.
We can claim the following.
\begin{lemma}\label{lem:toomanyclusters}
Assume $\max_{D \in \mathcal{D}} w(D) > \max_{C \in \mathcal{C}} w(C) + 2d$. Then, the $D$ that attains the maximum on the left-hand side does not belong to $\bigcup_{C \in \mathcal{C}} \mathcal{B}(C, 2d)$.
\end{lemma}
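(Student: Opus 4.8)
The plan is to argue by contradiction, exploiting the simple fact that a weight vector with entries in $\{-1,1\}$ changes by at most one unit per element of a symmetric difference. Fix $D^* \in \mathcal{D}$ attaining $\max_{D \in \mathcal{D}} w(D)$, and suppose toward a contradiction that $D^* \in \mathcal{B}(C, 2d)$ for some $C \in \mathcal{C}$, i.e.\ $|D^* \triangle C| \leq 2d$.

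First I would expand the weight difference along the symmetric difference. Writing $w(D^*) - w(C) = \sum_{e \in D^*} w_e - \sum_{e \in C} w_e$, the contributions of the elements in $D^* \cap C$ cancel exactly, leaving $w(D^*) - w(C) = \sum_{e \in D^* \setminus C} w_e - \sum_{e \in C \setminus D^*} w_e$. Each surviving summand lies in $\{-1,1\}$, and the number of such summands is $|D^* \setminus C| + |C \setminus D^*| = |D^* \triangle C|$, so $|w(D^*) - w(C)| \leq |D^* \triangle C| \leq 2d$.

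Then I would conclude: since $C \in \mathcal{C} \subseteq \mathcal{D}$, this gives $w(D^*) \leq w(C) + 2d \leq \max_{C' \in \mathcal{C}} w(C') + 2d$, which contradicts the hypothesis $\max_{D \in \mathcal{D}} w(D) = w(D^*) > \max_{C' \in \mathcal{C}} w(C') + 2d$. Hence no such $C$ exists, i.e.\ $D^* \notin \bigcup_{C \in \mathcal{C}} \mathcal{B}(C, 2d)$, as claimed.

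The argument is entirely elementary and I do not expect a substantive obstacle; the only point needing a little care is the sign bookkeeping when splitting $w(D^*) - w(C)$ over $D^* \setminus C$ and $C \setminus D^*$, where one must observe that the $D^* \cap C$ contributions cancel precisely so that exactly $|D^* \triangle C|$ many $\pm 1$ terms remain. Equivalently, one may phrase the key inequality abstractly: the map $Z \mapsto w(Z)$ is $1$-Lipschitz from the Hamming metric on $2^U$ to $\mathbb{R}$, since flipping the membership of a single element changes $w(Z)$ by exactly $\pm 1$, so $|w(D^*) - w(C)| \leq |D^* \triangle C|$ follows by a telescoping chain of single-element flips.
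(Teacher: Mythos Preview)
Your proof is correct and takes essentially the same approach as the paper: both rest on the elementary inequality $|w(D)-w(C)|\le |D\triangle C|$ for $w\in\{-1,1\}^U$, with the paper phrasing it directly (showing $|D\triangle C|\ge w(D)-w(C)>2d$ for every $C\in\mathcal{C}$) and you phrasing the contrapositive. The mention of $\mathcal{C}\subseteq\mathcal{D}$ is harmless but unnecessary---only $C\in\mathcal{C}$ is used.
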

\begin{proof}
For each $C \in \mathcal{C}$, $|D \triangle C| = \sum_{e \in D \triangle C} |w_e| \geq w(D) - w(C) > 2d$.
\end{proof}

The following lemma is the core of the analysis.
\begin{lemma}\label{lem:new_prob}
Assume $p \geq (4d+2)^2 \cdot 2^{k-1}$ and $|\mathcal{C}|\leq k$.
Let $D \in \mathcal{D}$ and assume $D \not\in \bigcup_{C \in \mathcal{C}} \mathcal{B}(C, p)$.
Then,
\begin{align*}
    \Pr\left[w(D) > \max_{C \in \mathcal{C}} w(C) + 2d\right] \geq 2^{-2^{O(k)}}.
\end{align*}
\end{lemma}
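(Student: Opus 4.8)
The plan is to analyze the random weight vector $w\in\{-1,1\}^U$ by conditioning on its restriction to the "active" part of the universe, namely $A := D \cup \bigcup_{C\in\mathcal C} C$, and arguing that once the relative values of $w(D)$ and each $w(C)$ are pinned down favorably, the event in question holds. The key quantitative input is that $D$ is very far (distance $> p$) from every $C\in\mathcal C$, while $|\mathcal C|\le k$, so the symmetric differences $D\triangle C$ are large; the bound $p \ge (4d+2)^2\cdot 2^{k-1}$ is exactly what one needs to make a union-bound-over-$\mathcal C$ argument survive after losing a factor of $2$ per element of $\mathcal C$.

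First I would set up the right conditioning. Outside $A$ the coordinates of $w$ are irrelevant, so work entirely inside $A$. Consider the $2^{|\mathcal C|}$ "atoms" of the Venn diagram generated by $D$ and the sets $C\in\mathcal C$ inside $A$. Let me focus on the atoms $D \setminus C$ for each $C$: since $|D\triangle C| > p$, at least one of $|D\setminus C|$, $|C\setminus D|$ exceeds $p/2$. I want to control $w(D) - w(C) = \sum_{e\in D\setminus C} w_e - \sum_{e\in C\setminus D} w_e$. The natural approach is a hybrid/telescoping argument over an ordering of $\mathcal C$: reveal $w$ on the atoms one $C$ at a time, and at each step show that with probability bounded below by a constant (depending only on $d$), the partial sum $w(D)-w(C)$ moves to be $> 2d$, and moreover stays controlled on the part already revealed. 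Because each atom is large (total size of the atoms touching any particular $C$ is $> p$), a standard anticoncentration / random-walk estimate (e.g. via the $\Omega(1/\sqrt{N})$ lower bound on $\Pr[\text{sum of $N$ Rademachers} \ge c]$ for $N \gtrsim c^2$) gives that $w$ restricted to $D\setminus C$ (or $C\setminus D$) exceeds, say, $2d + (\text{slack})$ in absolute value with probability $\Omega(1/(d+1))$; the quadratic dependence $(4d+2)^2$ in the hypothesis on $p$ is precisely the threshold $N \gtrsim (4d+2)^2$ needed here, and the extra $2^{k-1}$ factor accounts for repeating this for up to $k$ sets while keeping the atoms disjoint so the events on different atoms are independent.

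The cleanest way to organize the constant-probability bound: group the $k$ sets and, for each, identify a private block of $\ge p/2^{k-1}\ge (4d+2)^2/... $ coordinates lying in $D\setminus C$ or $C\setminus D$ and disjoint across the different $C$'s — this disjointness is where the $2^{k-1}$ budget is spent, since in the worst case the large atoms nest and one has to pass to sub-atoms. On each private block of size $N\ge (4d+2)^2$, the Rademacher sum is $\ge 2d+1$ (with the correct sign, pushing $w(D)-w(C)$ up) with probability at least some absolute constant $c_0>0$; these events are independent across the $k$ blocks, and conditioned on all of them, together with the (constant-probability) event that $w$ summed over all the remaining coordinates of $A$ is nonnegative — or more carefully, that $w(D)$ restricted to the shared part beats each $w(C)$ on the shared part, which follows from the block events already — we get $w(D) > \max_C w(C) + 2d$. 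Multiplying the $k$ independent constant probabilities yields $c_0^{k}\ge 2^{-O(k)} \ge 2^{-2^{O(k)}}$, which is the claimed bound (the statement is deliberately lossy).

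The main obstacle I anticipate is the bookkeeping to make the blocks genuinely disjoint and the "sign pushes the right way" accounting uniform over all $C$ simultaneously: the atoms of $D$ versus $\{C\}_{C\in\mathcal C}$ can overlap in complicated ways, and one must choose, for each $C$, a region where $D$ and $C$ genuinely differ, carve out disjoint sub-blocks, and ensure that the favorable event on $C$'s block controls $w(D)-w(C)$ without being spoiled by the coordinates shared between $C$ and some other $C'$. Handling the shared coordinates — showing they contribute the same to $w(D)$ and to $w(C)$ restricted appropriately, or that their contribution is dominated by the private blocks — is the delicate part; the factor $2^{k-1}$ in the hypothesis is the signal that the intended argument recursively splits atoms and loses a factor of $2$ each level, so I would structure the proof as an induction on $|\mathcal C|$, peeling off one set at a time and applying the anticoncentration bound on a block of size $\ge (4d+2)^2$ that is private to that peeling step.
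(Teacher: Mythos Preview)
Your high-level intuition is right—exploit $|D\triangle C| > p$ for each $C$, apply Rademacher anticoncentration on blocks of size $\ge (4d+2)^2$, and spend the factor $2^{k-1}$ passing to sub-atoms. But the plan has a genuine gap exactly where you flag it. After carving out $k$ disjoint ``private'' blocks $B_C \subseteq D\triangle C$ and conditioning on each block's signed sum exceeding $2d$ in the right direction, you still need the contribution of $(D\triangle C)\setminus B_C$ to $w(D)-w(C)$ to be nonnegative, \emph{simultaneously for every} $C\in\mathcal C$. Your claim that this ``follows from the block events already'' is not correct—the block events say nothing about coordinates outside $B_C$—and the fallback (a single constant-probability event that the sum over all remaining coordinates is nonnegative) controls one aggregate, not the $k$ separate restrictions you need. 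The proposed induction on $|\mathcal C|$ runs into the same wall: the event for the newly peeled $C$ and the inductive event share coordinates and are not independent. So the $c_0^{k} = 2^{-O(k)}$ bound is not established by this argument.

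The paper avoids private-block bookkeeping entirely. After the change of variables $w^*_e := w_e$ for $e\in D$ and $w^*_e := -w_e$ otherwise (so that $w(D)-w(C) = w^*(C\triangle D)$), it partitions $U$ into all $\le 2^{|\mathcal C|}$ atoms $U_{\mathcal X}$ of the Venn diagram of $\{C\triangle D : C\in\mathcal C\}$. For each $C$, pigeonhole gives some atom contained in $C\triangle D$ of size $> p/2^{k-1}\ge (4d+2)^2$. The paper then conditions on the event that \emph{every} atom has at least half of its $w^*$-values equal to $+1$, and that each of the at most $k$ ``large'' atoms so identified has more than half plus $d$. These events are independent because the atoms are disjoint, with probability $\ge 1/2$ and $\ge 1/4$ respectively, and under this conditioning every $w^*(C\triangle D)$—being a sum over the atoms it contains—automatically exceeds $2d$. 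The product of probabilities is at least $4^{-|\mathcal C|}\cdot 2^{-2^{|\mathcal C|}} = 2^{-2^{O(k)}}$, which is exactly where the double exponential in the statement originates. In particular the ``large'' atoms need not be disjoint across different $C$'s (one atom can serve several $C$'s at once), so no carving into private blocks is needed; the global conditioning on all $2^{|\mathcal C|}$ atoms is precisely what dissolves the shared-coordinate difficulty you anticipated.
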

\begin{proof}
Let $\mathcal{C}^* := \{C \triangle D \colon C \in \mathcal{C}\}$ and $w^* \in \{-1, 1\}^U$ be a weight vector such that $w^*_e = w_e$ for $e \in D$ and $w^*_e = -w_e$ otherwise. 
Then, for all $C \in \mathcal{C}$, $w(C) = w(D) - w(D \setminus C) + w(C \setminus D) = w(D) - w^*(C \triangle D)$. 
Therefore, it is sufficient to prove that 
\begin{align*}
    \Pr\left[\min_{C^* \in \mathcal{C}^*} w^*(C^*) > 2d\right] \geq 2^{-2^{O(k)}}.
\end{align*}
Since $D \not \in \bigcup_{C \in \mathcal{C}} \mathcal{B}(C, p)$, $|C^*| > p$ holds for all $C^* \in \mathcal{C}^*$. Moreover, $w^*$ follows a uniform distribution over $\{-1, 1\}^U$.

Two elements $e, e' \in U$ are \emph{equivalent under $\mathcal{C}^*$} if $e \in C^*$ and $e' \in C^*$ are equivalent for all $C^* \in \mathcal{C}^*$.
We partition the ground set $U$ into $2^{|\mathcal{C}^*|}$ subsets such that elements in the same subset are equivalent under $\mathcal{C}^*$.
Specifically, for each $\mathcal{X} \subseteq \mathcal{C}^*$, define $U_{\mathcal{X}} := \{e \in U \colon \{C^* \in \mathcal{C}^* \colon C^* \ni e\} = \mathcal{X}\}$.
For each $C^* \in \mathcal{C}^*$, define $\mathcal{X}_{C^*} \subseteq \mathcal{C}^*$ as the family where $|U_{\mathcal{X}}|$ is maximized among all subsets $\mathcal{X} \subseteq \mathcal{C}^*$ with $C^* \in \mathcal{X}$.
Each $C^* \in \mathcal{C}^*$ satisfies $|C^*| > p$, and there are $2^{|\mathcal{C}^*| - 1}$ subfamilies $\mathcal{X}\subseteq \mathcal{C}^*$ with $C^* \in \mathcal{X}$, thus by the pigeonhole principle, $|U_{\mathcal{X}}| > \frac{p}{2^{|\mathcal{C}^*| - 1}} = \frac{p}{2^{|\mathcal{C}| - 1}} \geq (4d+2)^2$.

Now, assume the weight vector $w^*$ satisfies the following two conditions.
\begin{itemize}
    \item For $\mathcal{X}\subseteq \mathcal{C}^*$ such that $\mathcal{X} = \mathcal{X}_{C^*}$ holds for some $C^*\in \mathcal{C}^*$, $\left|\{e \in U_{\mathcal{X}} \colon w^*_e = 1\}\right| > \frac{|U_{\mathcal{X}}|}{2} + d$.
    \item For all other $\mathcal{X}\subseteq \mathcal{C}^*$, $\left|\{e \in U_{\mathcal{X}} \colon w^*_e = 1\}\right| \geq \frac{|U_{\mathcal{X}}|}{2}$.
\end{itemize}
Then, for each $C^* \in \mathcal{C}^*$,
\begin{align*}
    w^*(C^*) = \sum_{\mathcal{X} \ni C^*} \left(\left|\{e \in U_{\mathcal{X}} \colon w^*_e = 1\}\right| - \left|\{e \in U_{\mathcal{X}} \colon w^*_e = -1\}\right|\right) 
    > 0 \cdot (2^{|\mathcal{C}^*| - 1} - 1) + 2d \cdot 1 = 2d.
\end{align*}
Thus, we need to show that the probability that $w^*$ satisfies these two conditions is at least $2^{-2^{O(k)}}$. Each of the conditions for $\mathcal{X}\subseteq \mathcal{C}^*$ is independent, so we evaluate the probability for a fixed $\mathcal{X}$. The probability that a fixed $\mathcal{X}$ satisfies the second condition is clearly at least $\frac{1}{2}$.
Assume there exists $C^*\in \mathcal{C}^*$ with $\mathcal{X} = \mathcal{X}_{C^*}$. Let $N := |U_{\mathcal{X}}| > (4d+2)^2$.
Then,
\begin{align*}
    \Pr\left[\left|\{e \in U_{\mathcal{X}} \colon w^*_e = 1\}\right| > \frac{N}{2} + d\right] &= \frac{1}{2} \cdot \left(1 - \frac{\sum_{p = \ceil{\frac{N}{2} - d}}^{\floor{\frac{N}{2} + d}} \binom{N}{p}}{2^N}\right)\\
    &\geq \frac{1}{2} \cdot \left(1 - \frac{(2d+1) \cdot \frac{2^N}{\sqrt{N}}}{2^N}\right)
    \geq \frac{1}{2} \cdot \left(1 - \frac{2d+1}{4d+2}\right) = \frac{1}{4},
\end{align*}
where the first inequality follows from $\binom{N}{p}\leq \frac{2^N}{\sqrt{N}}$.
Thus, the probability that $w^*$ satisfies the two conditions is at least $2^{-2^{|\mathcal{C}^*|}} \cdot 4^{-|\mathcal{C}^*|} \geq 2^{-2^{O(k)}}$.
\end{proof}
By repeating the sampling of $w$ a sufficient number of times, we can state the following.
\begin{lemma}\label{lem:farsetoracle}
Let $\epsilon > 0$, $\mathcal{D}, \mathcal{C} \subseteq 2^U$, $k \in \mathbb{Z}_{\geq 1}$, and $d \in \mathbb{Z}_{\geq 0}$. Assume $|\mathcal{C}| \leq k$.
Then, there exists a randomized algorithm that runs in time $2^{2^{O(k)}} \log \epsilon^{-1}$ and satisfies the following.
\begin{itemize}
    \item If there exists $D \in \mathcal{D}$ such that $D \not\in \bigcup_{C \in \mathcal{C}} \mathcal{B}(C, (4d+2)^2 \cdot 2^k)$, the algorithm returns a set $D' \in \mathcal{D}$ satisfying $D' \not\in \bigcup_{C \in \mathcal{C}} \mathcal{B}(C, 2d)$ with probability at least $1 - \epsilon$.
    \item If not, the algorithm returns either $\bot$ or a set $D' \in \mathcal{D}$ satisfying $D' \not\in \bigcup_{C \in \mathcal{C}} \mathcal{B}(C, 2d)$.
\end{itemize}
\end{lemma}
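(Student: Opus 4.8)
The plan is to analyze precisely the sampling algorithm already described, taking the detection radius to be $p := (4d+2)^2 \cdot 2^{k}$ (which certainly satisfies $p > 2d$). First I would study a single round in isolation: sample $w \in \{-1,1\}^U$ uniformly and let $D^\star \in \mathcal{D}$ be a set returned by the $(-1,1)$-optimization oracle, so that $w(D^\star) = \max_{D \in \mathcal{D}} w(D)$. Suppose the hypothesis of the first bullet holds, i.e.\ there is some $D \in \mathcal{D}$ with $D \notin \bigcup_{C \in \mathcal{C}} \mathcal{B}(C, p)$. Since $|\mathcal{C}| \le k$ we have $p \ge (4d+2)^2 \cdot 2^{k-1}$, so Lemma~\ref{lem:new_prob} applies to this $D$ and yields $\Pr\!\left[w(D) > \max_{C \in \mathcal{C}} w(C) + 2d\right] \ge q$ for some $q = 2^{-2^{O(k)}}$. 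On this event $\max_{D' \in \mathcal{D}} w(D') \ge w(D) > \max_{C \in \mathcal{C}} w(C) + 2d$, so Lemma~\ref{lem:toomanyclusters} guarantees that the oracle's output $D^\star$ satisfies $D^\star \notin \bigcup_{C \in \mathcal{C}} \mathcal{B}(C, 2d)$. Hence a single round outputs a set valid for the first bullet with probability at least $q$ whenever that bullet's hypothesis holds.

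Next I would amplify by independent repetition: run $N := \lceil q^{-1}\ln \epsilon^{-1}\rceil$ rounds, returning the first set found that is verified to lie outside every $\mathcal{B}(C, 2d)$, and returning $\bot$ if no round produces one. Under the first bullet's hypothesis, the probability that all $N$ rounds fail is at most $(1-q)^N \le e^{-qN} \le \epsilon$, so the algorithm returns a valid $D'$ with probability at least $1-\epsilon$. The second bullet is then immediate, and in fact holds unconditionally: the algorithm only ever returns $\bot$ or a set $D'$ it has explicitly checked to satisfy $D' \notin \bigcup_{C \in \mathcal{C}} \mathcal{B}(C, 2d)$, so whenever the first bullet's hypothesis fails the output is necessarily $\bot$ or such a $D'$.

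For the running time, each round is one call to the $(-1,1)$-optimization oracle plus an $O(|\mathcal{C}|\,|U|)$-time membership check against the balls $\mathcal{B}(C, 2d)$; with $N = q^{-1}\ln\epsilon^{-1} = 2^{2^{O(k)}}\log\epsilon^{-1}$ rounds this gives the stated bound, absorbing the $\mathrm{poly}(|U|)$ factor as is standard for oracle algorithms. There is no genuine obstacle here since the probabilistic heart is already carried by Lemma~\ref{lem:new_prob}; the only point requiring care is the bookkeeping that the radius $p$ at which we can detect a far set and the radius $2d$ at which we can certify one are separated exactly as Lemmas~\ref{lem:new_prob} and~\ref{lem:toomanyclusters} demand, and that certification is performed by a direct distance check on the oracle's maximizer rather than through the random weight $w$ — Lemma~\ref{lem:toomanyclusters} being precisely the bridge between the two.
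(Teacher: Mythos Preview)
Your proposal is correct and follows exactly the approach the paper intends: the paper's proof is literally the single sentence ``By repeating the sampling of $w$ a sufficient number of times, we can state the following,'' and you have spelled out precisely those details using Lemmas~\ref{lem:toomanyclusters} and~\ref{lem:new_prob} in the intended way. Your bookkeeping on the radius $p=(4d+2)^2\cdot 2^k \ge (4d+2)^2\cdot 2^{k-1}$, the amplification bound $(1-q)^N\le\epsilon$, and the unconditional correctness of the second bullet are all accurate.
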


Combining Lemma~\ref{lem:farsetoracle} with the results from Sections~\ref{sec:framework_small} and~\ref{sec:overall}, we have the following.
\begin{lemma}\label{lem:framework_analysis}
Let $\epsilon > 0$, $\mathcal{D} \subseteq 2^U$, $k \in \mathbb{Z}_{\geq 1}$, and $d \in \mathbb{Z}_{\geq 0}$.
Then, there exists a randomized algorithm that, with probability $1 - \epsilon$, computes a $d$-limited $k$-max-distance sparsifier of $\mathcal{D}$ with respect to $2^U$ with size at most $2^{2^{O(k + \log d)}}$ in time $\left(2^{2^{2^{O(k + \log d)}}}+2^{2^{O(k)}}\log \epsilon^{-1}\right)poly(|U|)$.
It uses at most $2^{2^{O(k)}} \log \epsilon^{-1}$ calls to the $(-1, 1)$-optimization oracle on $\mathcal{D}$ and at most $2^{2^{2^{O(k + \log d)}}}$ calls to the exact extension oracle on $\mathcal{D}$ such that $r\leq (4d+2)^2 2^{k-1}$ and $|X|, |Y|\leq 2^{2^{O(k + \log d)}}$.
\end{lemma}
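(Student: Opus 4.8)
The plan is to instantiate the two-step construction of Section~\ref{sec:overall} with the explicit choice $p:=(4d+2)^{2}\cdot 2^{k-1}$ and then aggregate the guarantees already established, so the argument is mostly assembly plus bookkeeping. This $p$ satisfies $2d<p$ and is exactly the threshold at which Lemma~\ref{lem:new_prob} applies; since the only probabilistic ingredient behind Lemma~\ref{lem:farsetoracle} is Lemma~\ref{lem:new_prob}, running the sampling procedure of Section~\ref{sec:findfarset} yields an approximate far set oracle whose $\bot$-output certifies $\mathcal{D}\subseteq\bigcup_{C\in\mathcal{C}}\mathcal{B}(C,p)$ and which, per call, is correct with probability at least $1-\epsilon/(k+1)$ using $2^{2^{O(k)}}\log\epsilon^{-1}$ calls of the $(-1,1)$-optimization oracle. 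Starting from $\mathcal{C}:=\emptyset$, the algorithm iterates at most $k+1$ times: it calls this oracle, and on a returned set (which the oracle verifies to be more than $2d$ from every member of $\mathcal{C}$) adds it to $\mathcal{C}$, and on $\bot$ it stops. After the loop either $|\mathcal{C}|=k+1$ with all pairwise distances above $2d$, or $|\mathcal{C}|\le k$ with the last call returning $\bot$; in the former case no randomness is used.

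In the first case the algorithm outputs $\mathcal{C}$, which by Lemma~\ref{lem:tooscattered} is a $d$-limited $k$-max-distance sparsifier of $\mathcal{D}$ with respect to $2^{U}$ of size $k+1$. In the second case, conditioning on the at most $k+1$ far set oracle calls all being correct (probability at least $1-\epsilon$ by a union bound) gives $\mathcal{D}=\bigcup_{C\in\mathcal{C}}\mathcal{D}_C$ with $\mathcal{D}_C:=\mathcal{D}\cap\mathcal{B}(C,p)$. For each $C$ I run Algorithm~\ref{alg:k-sparsify} on $\mathcal{D}_C^{*}:=\{D\triangle C\colon D\in\mathcal{D}_C\}\subseteq\mathcal{B}(\emptyset,p)$ with parameters $k$ and $r:=p+d\ge p$ to obtain, by Lemma~\ref{lem:ksparsify_correctness}, a $k$-max-distance sparsifier $\mathcal{K}_C^{*}$ of $\mathcal{D}_C^{*}$ with respect to $\mathcal{B}(\emptyset,p+d)$; as noted in Section~\ref{sec:overall}, each exact empty extension query on $\mathcal{D}_C^{*}$ with first argument $\ell'\le p$ is answered by one exact extension oracle call on $\mathcal{D}$ with center $C$, radius $\ell'$, $X=Y'\cap C$, $Y=Y'\setminus C$ (queries with $\ell'>p$ are answered $\bot$ directly). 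Putting $\mathcal{K}_C:=\{K^{*}\triangle C\colon K^{*}\in\mathcal{K}_C^{*}\}$, Lemma~\ref{lem:sym_union} makes $\mathcal{K}_C$ a $d$-limited $k$-max-distance sparsifier of $\mathcal{D}_C$ with respect to $\mathcal{B}(C,p+d)$ (using that a $k$-max-distance sparsifier is in particular $d$-limited), and Lemma~\ref{lem:union_distance} then shows $\mathcal{K}:=\bigcup_{C\in\mathcal{C}}\mathcal{K}_C$ is a $d$-limited $k$-max-distance sparsifier of $\mathcal{D}$ with respect to $2^{U}$, which is the output.

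For the quantitative bounds I would substitute $p=(4d+2)^{2}2^{k-1}=2^{\Theta(k+\log d)}$ into Section~\ref{sec:framework_small}: Lemma~\ref{lem:numloops} bounds $|\mathcal{K}_C^{*}|$ by $(p+1)!\,(k(p+d)+1)^{p}=2^{2^{O(k+\log d)}}$, Lemma~\ref{lem:numextension} (with $\ell=p$, $r=p+d$) bounds the time and number of exact extension calls per center by $2^{2^{2^{O(k+\log d)}}}$, and in each such call $r\le p=(4d+2)^{2}2^{k-1}$ and $|X|+|Y|\le|Y'|\le|R|\le 2^{2^{O(k+\log d)}}$; summing over $|\mathcal{C}|\le k$ preserves all orders, and the first step contributes $k+1$ far set oracle calls totalling $2^{2^{O(k)}}\log\epsilon^{-1}$ optimization oracle calls, giving overall running time $\left(2^{2^{2^{O(k+\log d)}}}+2^{2^{O(k)}}\log\epsilon^{-1}\right)\mathrm{poly}(|U|)$ after a $\mathrm{poly}(|U|)$ factor for set manipulations. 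I do not expect a single hard step; the two points needing care are the probability bookkeeping---the far set oracle is invoked up to $k+1$ times, its sole failure mode is returning $\bot$ incorrectly (a returned set is always genuinely $2d$-far), so each call runs at failure $\epsilon/(k+1)$ and the bounds are combined by a union bound---and keeping every exact extension query's radius within $(4d+2)^{2}2^{k-1}$, which is exactly why $p$ is set to this value and why one must recheck that Lemma~\ref{lem:new_prob}'s hypothesis $p\ge(4d+2)^{2}2^{k-1}$ is met with equality, along with the routine tracking of the three nested exponentials through $\ell=p$ and $r=p+d$.
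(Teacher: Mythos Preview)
Your proposal is correct and follows essentially the same assembly as the paper's proof: set $p=(4d+2)^{2}2^{k-1}$, run the far-set loop up to $k+1$ times with error $\epsilon/(k+1)$ per call via Lemma~\ref{lem:farsetoracle}, invoke Lemma~\ref{lem:tooscattered} if $|\mathcal{C}|=k+1$, and otherwise apply Algorithm~\ref{alg:k-sparsify} to each $\mathcal{D}_C^{*}$ and combine via Lemmas~\ref{lem:sym_union} and~\ref{lem:union_distance}, plugging $\ell=p$ and $r=p+d$ into Lemmas~\ref{lem:numloops} and~\ref{lem:numextension} for the counts. The only quibble is the phrase ``no randomness is used'' in the $|\mathcal{C}|=k+1$ branch---randomness is used, but since a returned set is always verified to be $2d$-far the output is correct with probability~$1$ in that branch, which is what you need.
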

\begin{proof}
Since the algorithm calls the approximate far set oracle at most $k+1$ times, applying Lemma~\ref{lem:farsetoracle} with an error $\frac{\epsilon}{k+1}$ claims that the algorithm computes $\mathcal{C}$ correctly with probability $1 - \epsilon$ by making $2^{2^{O(k)}} \log \epsilon^{-1}$ calls of the $(-1, 1)$-optimization oracle.
If $\mathcal{C}$ is correctly computed, applying Algorithm~\ref{alg:k-sparsify} on $\mathcal{D}^*_{C}$ for each $C \in \mathcal{C}$ computes the $d$-limited $k$-max-distance sparsifier of $\mathcal{D}$ with respect to $2^U$. 
Applying Lemma~\ref{lem:numloops} for $\ell=p$ and $r=p+d$ bounds the size of the output sparsifier by $(p+1)!(k(p+d)+1)^p\cdot k\leq 2^{O(p\log (kp(p+d)))}\leq 2^{O(2^k d^2 \log(4^d d^4 k))}\leq 2^{2^{O(k+\log d)}}$.
Applying Lemma~\ref{lem:numextension} for $\ell=p$ and $r=p+d$ bounds the number of calls to the exact extension oracle by $2^{2^{O(p \log (kp(p+d)))}} \leq 2^{2^{O(2^k d^2 \log (4^d d^4 k))}}\leq 2^{2^{2^{O(k + \log d)}}}$.
Moreover, for each call of the exact extension oracle, we have $r\leq p = (4d+2)^2 2^{k-1}$ and $|X|,|Y|\leq 2^{2^{O(k+\log d)}}\cdot p\leq 2^{2^{O(k+\log d)}}$.
\end{proof}

\section{Application}\label{sec:application}

In this section, we show that Theorems~\ref{thm:framework_small}~and~\ref{thm:framework_large} actually yield algorithms for diversification and clustering problems on several domains by constructing the required oracles for these domains.

We remark on a simple but useful fact. Assume a domain $\mathcal{D}$ can be written as the union of domains $\mathcal{D}_1, \dots, \mathcal{D}_t$. 
Then, all oracles required by Theorems~\ref{thm:framework_small}~or~\ref{thm:framework_large} for the domain $\mathcal{D}$ can be implemented by calling a corresponding oracle for each domain $\mathcal{D}_1, \dots, \mathcal{D}_t$. 
Specifically, the $(-1,1)$-optimization oracle on $\mathcal{D}$ can be realized by returning the maximum of the outputs among each call, and the exact (empty) extension oracle on $\mathcal{D}$ can be realized by returning any solution returned by a call if any call returns a non-$\bot$ solution. 
Henceforth, for any domain that can be written as a union of an FPT number of domains to which we provide FPT algorithms using our framework, our framework automatically yields the same results.

\subsection{$k+\ell$ as Parameters}\label{sec:paramkl}

Here we apply Theorem~\ref{thm:framework_small} to specific domains $\mathcal{D}$.
We design an exact empty extension oracle on $\mathcal{D}$ that runs in FPT time.
As mentioned in Section~\ref{sec:intro_app_small}, in most cases, the exact empty extension oracle on $\mathcal{D}$ can be designed almost directly from an algorithm that finds a set of size $r$ in $\mathcal{D}$. Therefore, most designs for the exact empty extension oracle in this section are straightforward.

\subsubsection{Vertex Cover}
Let $G=(V,E)$ be an undirected graph and $\ell \in \mathbb{Z}_{\geq 0}$.
A vertex subset $Z \subseteq V$ is a \emph{vertex cover} of $G$ if for each $e \in E$, $e \cap Z \neq \emptyset$.
Let $\mathcal{D} \subseteq 2^V$ be the set of all vertex covers of $G$ of size at most $\ell$.
Baste et al.~\cite{baste2019fpt} provided an FPT algorithm parameterized by $k+\ell$ for the max-min/max-sum diversification problems on $\mathcal{D}$.

We design an exact empty extension oracle on $\mathcal{D}$.
Let $r \in \mathbb{Z}_{\geq 0}$ and $Y \subseteq V$.
Define $\mathcal{D}_{Y,r}$ as the family of all vertex covers of size $r$ of $G$ disjoint from $Y$. We construct an algorithm to find a set in $\mathcal{D}_{Y,r}$.
If $Y$ includes two adjacent vertices, $\mathcal{D}_{Y,r}$ is obviously empty.
Otherwise, let $Z$ be the set of vertices adjacent to at least one vertex in $Y$, and let $\mathcal{D}'$ be the family of all vertex covers of size $r-|Z|$ of the subgraph $G'$ induced by $V \setminus (Y \cup Z)$. Then $\mathcal{D}_{Y,r} = \{Z \cup D \colon D \in \mathcal{D}'\}$.
Therefore, it suffices to find a vertex cover of $G'$ of size $r - |Z|$, which can be implemented in FPT time parameterized by $r - |Z|\leq \ell$.

\subsubsection{$t$-Hitting Set}
Let $U$ be a finite set, $\ell, t \in \mathbb{Z}_{\geq 0}$, and $\mathcal{S} \subseteq 2^U$ be a set family such that each $S \in \mathcal{S}$ satisfies $|S| \leq t$.
A subset $Z$ of $U$ is a \emph{hitting set} of $\mathcal{S}$ if for each $S \in \mathcal{S}$, $S \cap Z \neq \emptyset$.
Let $\mathcal{D} \subseteq 2^U$ be the family of all hitting sets of $\mathcal{S}$ of size at most $\ell $.
Baste et al.~\cite{baste2019fpt} provided an FPT algorithm parameterized by $k+\ell+t$ for the max-min/max-sum diversification problems on $\mathcal{D}$.
We design an exact empty extension oracle on $\mathcal{D}$.
Let $Y \subseteq 2^U$.
The problem of finding a hitting set disjoint from $Y$ of size $r \in \mathbb{Z}_{\geq 0}$ is equivalent to finding a hitting set of size $r$ for the set family $\{S \setminus Y \colon S \in \mathcal{S}\}$. Therefore, the oracle can be designed using any FPT algorithm for the hitting set problem.

\subsubsection{Feedback Vertex Set}
Let $G=(V,E)$ be an undirected graph.
A vertex subset $Z \subseteq V$ is a \emph{feedback vertex set} of $G$ if the graph induced by $V \setminus Z$ is acyclic.
Let $\mathcal{D} \subseteq 2^V$ be the set of all feedback vertex sets of $G$.
Baste et al.~\cite{baste2019fpt} provided an FPT algorithm parameterized by $k+\ell $ for the max-min/max-sum diversification problems on $\mathcal{D}$.

We design an exact empty extension oracle on $\mathcal{D}$.
Let $r \in \mathbb{Z}_{\geq 0}$ and $Y \subseteq V$.
Since adding vertices to a feedback vertex set keeps it to be a feedback vertex set, it suffices to find a feedback vertex set of size at most $r$ disjoint from $Y$.
We use the \emph{compact representation} of feedback vertex sets~\cite{guo2006compression} to achieve this.
A family $\mathcal{C}$ of pairwise disjoint subsets of $V$ is said to \emph{represent} a set $Z \subseteq V$ if for each $C \in \mathcal{C}$, $|Z \cap C| = 1$.
When all sets represented by $\mathcal{C}$ are feedback vertex sets of $G$, we call $\mathcal{C}$ a \emph{compact representation of feedback vertex sets}.
A list $\mathfrak{L}$ consisting of compact representations of feedback vertex sets, each of size at most $r$, is called a \emph{compact list of $r$-minimal feedback vertex sets} if, for every inclusion-wise minimal feedback vertex set $Z$ of $G$, there exists $\mathcal{C} \in \mathfrak{L}$ that represents $Z$.
The following is known.
\begin{lemma}[\rm{\cite{guo2006compression}}]
There exists an algorithm that computes a compact list of $r$-minimal feedback vertex sets of size $2^{O(r)}poly(|V|)$ in $2^{O(r)}poly(|V|)$ time.
\end{lemma}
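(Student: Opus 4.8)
The plan is to re-derive this via the classical iterative-compression plus bounded-search-tree approach for feedback vertex set, augmented with a ``choice set'' mechanism that turns certain reduction steps into entries of a compact representation.

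First, I would invoke the classical FPT algorithm for feedback vertex set to either find a single feedback vertex set $W_0$ of $G$ with $|W_0|\le r$, or conclude none exists; in the latter case the empty list is a valid output, since then there is no feedback vertex set of size at most $r$ at all. This preprocessing costs $2^{O(r)}\mathrm{poly}(|V|)$ time. Next, fix an inclusion-minimal feedback vertex set $Z$ with $|Z|\le r$. Because $W_0$ is a feedback vertex set, $G-W_0$ is a forest, so after guessing $W_1:=Z\cap W_0$ we are left with the ``disjoint'' task: in $G':=G-W_1$, which has the forest modulator $W_2:=W_0\setminus W_1$, enumerate compactly all feedback vertex sets of $G'$ of size at most $r-|W_1|$ that are disjoint from $W_2$ and inclusion-minimal among such. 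One checks that $Z\setminus W_1$ is exactly such a set, and conversely that prepending the singletons $\{w\}$ for $w\in W_1$ to any compact representation produced recovers a compact representation of a feedback vertex set of $G$; since there are at most $2^{|W_0|}\le 2^r$ guesses for $W_1$, it suffices to solve this disjoint problem in $2^{O(r)}\mathrm{poly}(|V|)$ time, producing $2^{O(r)}$ compact representations each with at most $r$ choice sets.

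The heart of the argument is a branching procedure for the disjoint problem that maintains a current graph $H$, a partition of $V(H)$ into a modulator part $W$ (with $H[F]$ a forest, where $F:=V(H)\setminus W$), a budget $k$, and a partial compact representation. The reduction rules are: if $H[W]$ contains a cycle (self-loop, parallel edge, or longer) the branch dies; delete degree-$\le 1$ vertices of $F$; if $v\in F$ carries a self-loop or a parallel edge to $W$ it is forced, so append the singleton $\{v\}$, delete $v$, and decrement $k$; and a maximal path of internal degree-$2$ vertices of $F$ joining two distinct non-path vertices becomes a choice set, after which it is contracted into a single edge --- the point being that all vertices of such a path lie on exactly the same set of cycles, so by inclusion-minimality at most one of them can be in $Z$ and any one is interchangeable; the analogous rule turns a cycle component of $H$ into a choice set. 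When no reduction applies and $H$ is not a forest, pick a leaf $v$ of $H[F]$ (after the reductions $v$ has at least two neighbours in $W$) and branch on whether $v\in Z$: in the ``$v\in Z$'' branch delete $v$ and decrement $k$; in the ``$v\notin Z$'' branch move $v$ into $W$, which either merges at least two components of $H[W]$ or creates a cycle in $H[W]$, in which case the branch dies. Taking the measure $\mu:=k+\mathrm{cc}(H[W])$, where $\mathrm{cc}$ counts connected components, every reduction leaves $\mu$ non-increasing and each branch is binary with $\mu$ dropping by at least $1$ in each surviving child; since initially $\mu\le(r-|W_1|)+|W_2|\le 2r$, the search tree has $2^{O(r)}$ nodes, each costing $\mathrm{poly}(|V|)$, and each leaf (reached when $H$ is a forest) yields a compact representation with at most $r$ choice sets. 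Summing over the $2^r$ guesses of $W_1$ gives a list of size $2^{O(r)}$ computed in $2^{O(r)}\mathrm{poly}(|V|)$ time; correctness --- that every set represented in the list is a feedback vertex set, and that every inclusion-minimal feedback vertex set of size at most $r$ is represented --- follows from the standard safeness of the reduction and branching rules together with the minimality observations used to justify the choice-set rules.

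The step I expect to be the main obstacle is the careful treatment of degree-$2$ vertices and paths inside $F$: naively suppressing them is safe for the decision version but can destroy minimal solutions, so they must be routed through the choice-set mechanism, and one has to verify both that this never increases $\mu$ and that inclusion-minimality genuinely forces at most one vertex of each such path into $Z$ (equivalently, that all vertices of an internal degree-$2$ path lie on precisely the same cycles, and likewise for a cycle component). Making the list of reduction rules complete enough that the invariant ``some leaf of $H[F]$ with at least two neighbours in $W$ always exists to branch on'' holds, while keeping the branching vector $(1,1)$ with respect to $\mu$, is the delicate bookkeeping that the whole bound rests on.
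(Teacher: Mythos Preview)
The paper does not prove this lemma; it is quoted as a black box from Guo et al.\ (the reference \texttt{guo2006compression}), so there is no in-paper argument to compare your attempt against. Your sketch is essentially a reconstruction of the proof in that cited source: iterative compression to obtain a size-$r$ feedback vertex set $W_0$, a $2^r$ guess of $Z\cap W_0$, and the disjoint-FVS branching governed by the measure $k+\mathrm{cc}(H[W])$, with degree-$2$ chains supplying the choice sets of the compact representation. So in that sense your outline is on target and matches the original.

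One caution on your degree-$2$ path rule. As phrased (``the path becomes a choice set, after which it is contracted into a single edge''), it commits to placing exactly one path vertex into $Z$ before you know whether $Z$ meets the path at all --- and under the paper's definition of ``represents'', every block $C$ of a compact representation must satisfy $|Z\cap C|=1$. A minimal $Z$ may miss such a path entirely when the cycles through it are broken at the endpoints; in that case your rule would fail to represent $Z$. The sound formulation is to suppress the degree-$2$ chain first (tagging the resulting edge with the chain) and promote the chain to an actual choice set only when that edge is subsequently \emph{forced} --- e.g., when it becomes a self-loop on an $F$-vertex or a multi-edge into $W$. You yourself single out exactly this step as the delicate one in your final paragraph, so I read this as an imprecision in wording rather than a conceptual gap, but the order matters for correctness.
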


Our algorithm first computes a compact list $\mathfrak{L}$ of $r$-minimal feedback vertex sets.
Then, for each $\mathcal{C} \in \mathfrak{L}$, the algorithm determines whether $\mathcal{C}$ represents a feedback vertex set disjoint from $Y$. This can be acheived by checking the non-emptiness of $C \setminus Y$ for each $C \in \mathcal{C}$. Since every minimal feedback vertex set is represented by some $\mathcal{C} \in \mathfrak{L}$, this procedure gives the desired oracle.

\subsubsection{Matroid Intersection}\label{sec:app_matroid_intersection}
Let $U$ be a finite set, $\ell \in \mathbb{Z}_{\geq 0}$, and $\mathcal{M}_1 := (U, \mathcal{I}_1)$, $\mathcal{M}_2 := (U, \mathcal{I}_2)$ be two matroids.
A subset $Z$ of $U$ is a \emph{common independent set} of $\mathcal{M}_1$ and $\mathcal{M}_2$ if $Z \in \mathcal{I}_1 \cap \mathcal{I}_2$.
Let $\mathcal{D} \subseteq 2^U$ be the set of all common independent sets of $\mathcal{M}_1$ and $\mathcal{M}_2$. 
Fomin et al.~\cite{fomin2024diversecollection} provided an FPT algorithm parameterized by $k+\ell $ for the max-min diversification problems on $\mathcal{D}$.

We design an exact empty extension oracle on $\mathcal{D}$.
Let $Y \subseteq U$.
The problem of finding a common independent set of size $r \in \mathbb{Z}_{\geq 0}$ disjoint from $Y$ is equivalent to finding a common independent set in matroids $\mathcal{M}_1' := (U \setminus Y, \{I \colon I \cap Y = \emptyset \land I \in \mathcal{I}_1\})$ and $\mathcal{M}_2' := (U \setminus Y, \{I \colon I \cap Y = \emptyset \land I \in \mathcal{I}_2\})$. The oracle can thus be designed using any polynomial-time algorithm for the matroid intersection problem.
Furthermore, the following result~\cite{marx2009parameterized} allows extending this result to when $\mathcal{D}$ is a family of common independent sets of $t \in \mathbb{Z}_{\geq 1}$ represented linear matroids.
\begin{lemma}[\rm{\cite{marx2009parameterized}}]
There is an FPT algorithm parameterized by $t$ and $\ell $ for finding a common independent set of size $\ell $ of $t$ given represented linear matroids.
\end{lemma}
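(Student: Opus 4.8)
The plan is to build common independent sets one element at a time, keeping after each step only a bounded–size \emph{representative} subfamily of the partial solutions, and to encode linear independence in all $t$ matroids at once via a single exterior–algebra construction, so that ``representative'' becomes ordinary linear (in)dependence in one space of dimension $2^{O(\ell t)}$. I would begin with the standard preprocessing: assume all $M_i$ are given over a common field $\mathbb{F}$ (otherwise apply the usual field–alignment tricks), and assume each $M_i$ has rank exactly $\ell$ --- if $\operatorname{rank}(M_i)<\ell$ there is no solution, and if $\operatorname{rank}(M_i)>\ell$ replace $M_i$ by its $\ell$-truncation, which is again representable over a (possibly small extension of) $\mathbb{F}$ and computable efficiently. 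Fix column vectors $a^i_e\in\mathbb{F}^{\ell}$ so that $\{a^i_e:e\in X\}$ is linearly independent iff $X\in\mathcal{I}_i$.

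For the encoding, associate to each subset $X\subseteq U$ its \emph{profile} $\rho(X):=\bigotimes_{i=1}^{t}\bigl(\bigwedge_{e\in X}a^i_e\bigr)\in\bigotimes_{i=1}^{t}\Lambda^{|X|}(\mathbb{F}^{\ell})$. A wedge of degree-$1$ vectors is nonzero iff they are linearly independent, and a tensor of homogeneous elements is nonzero iff every factor is, so $\rho(X)\neq 0$ iff $X$ is independent in all $M_i$. Moreover, in the graded–commutative tensor product of the $t$ exterior algebras, $\rho(X\cup\{e\})=\pm\,\rho(X)\cdot(a^1_e\otimes\cdots\otimes a^t_e)$, so profiles extend cleanly by one element; and the ambient space for $j$-sets has dimension $\binom{\ell}{j}^{t}\le 2^{\ell t}$.

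The algorithm then maintains families $\hat{\mathcal{S}}_0,\hat{\mathcal{S}}_1,\dots,\hat{\mathcal{S}}_\ell$ where $\hat{\mathcal{S}}_j$ is a set of common–independent $j$-subsets whose profiles form a basis of $\operatorname{span}\{\rho(X):X\text{ a common–independent }j\text{-subset}\}$, so $|\hat{\mathcal{S}}_j|\le 2^{\ell t}$. Start with $\hat{\mathcal{S}}_0=\{\emptyset\}$; from $\hat{\mathcal{S}}_j$ form all extensions $\hat X\cup\{e\}$ ($\hat X\in\hat{\mathcal{S}}_j$, $e\notin\hat X$) that remain common independent (at most $2^{\ell t}|U|$ of them) and let $\hat{\mathcal{S}}_{j+1}$ be a maximal linearly independent subfamily of their profiles, extracted by Gaussian elimination in the $2^{\ell t}$-dimensional coordinate space; output ``yes'' iff $\hat{\mathcal{S}}_\ell\neq\emptyset$. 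Correctness is the standard representative–set argument: a basis of the span of profiles is $q$-representative for every $q$ (if $\rho(X)\cdot v\neq 0$ and $\rho(X)=\sum_j c_j\rho(\hat X_j)$ then some $\rho(\hat X_j)\cdot v\neq 0$), and $q$-representativeness is preserved under ``extend by one element then re-extract a basis''; hence if $Z$ is a common–independent $\ell$-set, induction on $j$ keeps some $\hat X\in\hat{\mathcal{S}}_j$ compatible with a completion to $Z$, so a solution survives to step $\ell$. The running time is $\ell$ rounds of $2^{O(\ell t)}\operatorname{poly}(|U|)$, i.e.\ FPT in $\ell+t$; this is essentially the representative–sets machinery of \cite{marx2009parameterized}. (A randomized alternative: decide whether $\bigl(\sum_{e}x_e(a^1_e\otimes\cdots\otimes a^t_e)\bigr)^{\ell}$ is a nonzero polynomial in fresh commuting variables $x_e$ --- it is iff a common–independent $\ell$-set exists, since distinct $\ell$-sets contribute distinct squarefree $x$-monomials --- via Schwartz--Zippel plus repeated squaring in the same $2^{O(\ell t)}$-dimensional algebra, handling odd $t$ by appending a dummy uniform matroid and small characteristic, where $\ell!$ is not invertible, by lifting to characteristic $0$ or by a color-coding extraction of the squarefree degree-$\ell$ part.)

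The main obstacle I expect is making ``simultaneously representative for all $t$ matroids'' precise together with a size bound that stays FPT: the clean resolution is exactly the tensor–of–exterior–powers encoding above, which collapses the multi-matroid condition into plain linear (in)dependence in one $2^{O(\ell t)}$-dimensional space, after which the single-matroid basis-extraction-plus-exchange argument applies verbatim. The remaining technical points --- the $\ell$-truncation construction, aligning possibly different fields, field size for identity testing in the randomized variant, and the characteristic/parity issues there --- are all standard but must be discharged.
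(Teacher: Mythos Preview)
The paper does not prove this lemma at all: it is quoted verbatim as a black-box result from \cite{marx2009parameterized} and used only to justify that the exact empty extension oracle for $t$-matroid intersection can be implemented in FPT time. So there is nothing to compare against on the paper's side.

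Your sketch is a faithful reconstruction of the representative-sets method behind that reference. The key step---encoding simultaneous independence in $t$ linear matroids by the tensor of exterior powers, so that ``common independent'' becomes ``profile nonzero'' in a space of dimension $\binom{\ell}{j}^t\le 2^{\ell t}$, and then maintaining a linear basis of profiles level by level---is exactly the idea. The correctness argument you give (bilinearity of the product implies that a basis of the profile span is $q$-representative for all $q$, and that one-element extensions of a spanning family again span the next level) is sound; in particular, $\rho(\hat X)\cdot\rho(Y)\neq 0$ automatically forces $\hat X\cap Y=\emptyset$ and $\hat X\cup Y$ common independent, so the disjointness condition in the definition of representativeness comes for free. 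The preprocessing caveats you flag (truncating each matroid to rank $\ell$, aligning fields) are indeed the standard technicalities and are handled in the literature. The randomized alternative via polynomial identity testing is also a known route, with the characteristic issues you mention.
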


\subsubsection{Almost $2$-SAT}\label{sec:app_almostsat}

Let $\varphi$ be a $2$-CNF formula with the set of clauses denoted by $C$ and $\ell\in \mathbb{Z}_{\geq 0}$. 
Let $\mathcal{D} \subseteq 2^C$ be the set of clauses with size at most $\ell $ such that removing those clauses makes $\varphi$ satisfiable. \cite{razgon2009almost} provided an FPT algorithm parameterized by $\ell $ for the \emph{almost $2$-SAT problem}, which is a problem of determining non-emptiness of $\mathcal{D}$. 
We design an exact empty extension oracle on $\mathcal{D}$. Let $Y \subseteq C$. Let $Z$ be the set of variables contained in the clauses of $Y$. We guess the assignments for the variables in $Z$. The cost of this guess is $2^{|Z|} \leq 2^{2|Y|}$. After guessing, the problem is reduced to the original almost $2$-SAT problem.

\subsubsection{Independent Set on Certain Graph Classes}\label{sec:app_mis}

Let $G=(V,E)$ be a graph. A vertex subset $Z \subseteq V$ is an \emph{independent set} of $G$ if $Z$ contains no pair of adjacent vertices. Let $\mathcal{D} \subseteq 2^V$ be the set of all independent sets of $G$ with size at least $\ell \in \mathbb{Z}_{\geq 0}$. The problem of determining the non-emptiness of $\mathcal{D}$ is generally W[1]-hard, but polynomial-time or FPT algorithms are known for several graph classes such as chordal graphs~\cite{gavril1972algorithms}, claw-free graphs~\cite{minty1980maximal}, even-hole-free graphs~\cite{HusicTT19}, and $H$-free graphs for several specific $H$~\cite{bonnet2020parameterized}. 

Let $\mathcal{G}$ be a graph class closed under vertex deletion, and assume an FPT algorithm is known for determining the non-emptiness of $\mathcal{D}$ on $\mathcal{G}$. Let $G \in \mathcal{G}$ and $Y \subseteq V$. Then, the exact empty extension oracle on $\mathcal{D}$ is equivalent to the problem of finding an independent set of size $r$ in the graph $G'$ obtained by removing $Y$ and all edges incident to $Y$ from $G$. Since $\mathcal{G}$ is closed under vertex deletion, we have $G' \in \mathcal{G}$, and thus we have an FPT algorithm for the oracle.

\subsubsection{Reduction from Loss-Less Kernel-Based Framework}

Baste et al.~\cite{baste2022diversity} constructed a framework using loss-less kernels~\cite{carbonnel2016propagation,carbonnel2017kernelization} for the max-sum diversification problem parameterized by $k+\ell $, which also straightforwardly applies to the max-min diversification problem. Here, we introduce a rephrased version of the loss-less kernel using our terminology.
Let $U$ be a finite set and $\mathcal{D} \subseteq 2^U$ be a family such that $|D| \leq \ell $ holds for all $D \in \mathcal{D}$. 
A tuple $(\mathcal{D}', P, Q)$ consisting of two disjoint subsets $P, Q \subseteq U$ and $\mathcal{D}' \subseteq 2^{U \setminus (P \cup Q)}$ is a \emph{loss-less kernel} of $\mathcal{D}$ if $\mathcal{D} = \{D' \cup P \cup Q' \colon D' \in \mathcal{D}' \land Q' \subseteq Q \land |D \cup P \cup Q'| \leq \ell\}$ and $|\mathcal{D}'|$ is bounded by a constant that depends only on $\ell $.

Here, we observe that our result generalizes this framework. We design an exact empty extension oracle using the loss-less kernel.
Let $r \in \mathbb{Z}_{\geq 0}$ and $Y \subseteq U$. 
Let $(\mathcal{D}', P, Q)$ be the loss-less kernel of $\mathcal{D}$.
The oracle exhaustively searches for $D' \in \mathcal{D}'$ and determines whether $Y \cap (D' \cup P) = \emptyset$. If so, and if $|D' \cup P| \leq r \leq |D' \cup P \cup (Q \setminus Y)|$, a desired set is found. If no desired set is found for any $D' \in \mathcal{D}'$, the oracle returns $\bot$.

\subsubsection{Reduction from Color-Coding-Based Framework}

Hanaka et al.~\cite{hanaka2021finding} constructed a color-coding~\cite{alon1995color} based framework to provide FPT algorithms for the max-min diversification problems parameterized by $k+\ell $. Specifically, they used the following oracle to solve the max-min diversification problem.
\begin{quote}
\textbf{Colorful Set Oracle on $\mathcal{D}$:}
Let $U$ be a finite set, $s, t \in \mathbb{Z}_{\geq 1}$ with $s \leq t$, $c \in \{1, \dots, t\}^U$, and $\mathcal{D} \subseteq 2^U$.
It determines whether there exists a set $D \in \mathcal{D}$ of size $s$ such that $\{c_e \colon e \in D\} = \{1, \dots, s\}$, and if so, finds one.
\end{quote}
They proved that there is an oracle algorithm for the max-min diversification problem on $\mathcal{D}$ using the colorful set oracle on $\mathcal{D}$, where the number of oracle calls and the time complexity are both FPT parameterized by $k+\ell $. 
They applied this framework on several domains $\mathcal{D}$, including the family of paths, interval schedulings, and matchings of size $\ell $.

Here, we observe that our framework generalizes theirs. It suffices to design an exact empty extension oracle on $\mathcal{D}$ using the colorful set oracle on $\mathcal{D}$.
Let $U$ be a finite set, $r \in \mathbb{Z}_{\geq 0}$, and $Y \subseteq U$.
Let $s = r, t = r+1$, and for each $e \not\in Y$, take $c_e \in \{1, \dots, r\}$ independently and uniformly at random. For $e \in Y$, let $c_e = r+1$.
If $D\in \mathcal{D}$ satisfies $|D|=r$ and $D\cap Y=\emptyset$, $\{c_e \colon e \in D\} = \{1, \dots, s\}$ holds with probability $\frac{1}{r!}$. Thus, by calling the colorful set oracle on $\mathcal{D}$ $O(r!)$ times, we can implement the exact empty extension oracle that works correctly with sufficiently high probability.
Furthermore, we can derandomize this algorithm using a perfect hash family~\cite{alon1995color}.

\subsection{$k+d$ as Parameters}\label{sec:paramkd}
Here we apply Theorem~\ref{thm:framework_large} to specific domains $\mathcal{D}$.
For each domain $\mathcal{D}$, we design the $(-1,1)$-optimization oracle on $\mathcal{D}$ and the exact extension oracle on $\mathcal{D}$.

\subsubsection{Matroid Base and Branching}\label{sec:app_branch}

Let $\mathcal{M} = (U, \mathcal{I})$ be a matroid and $\mathcal{D}$ be the family of bases of $\mathcal{M}$.
Fomin et al.~\cite{fomin2024diversecollection} provided an FPT algorithm parameterized by $k+d$ for the max-min diversification problem on $\mathcal{D}$.
We design the $(-1, 1)$-optimization oracle and the exact extension oracle on $\mathcal{D}$ to obtain an FPT algorithm for diversification and clustering problems on $\mathcal{D}$.

The $(-1,1)$-optimization oracle on $\mathcal{D}$ is just a weighted maximization problem on matroid bases and can be implemented in polynomial time using a greedy algorithm.
We provide a polynomial-time algorithm for the exact extension oracle on $\mathcal{D}$.
Let $r \in \mathbb{Z}_{\geq 0}$, $X, Y \subseteq U$, and $C \in \mathcal{D}$.
If $r$ is odd, there is clearly no solution, so we assume $r$ is even.
The algorithm first computes a set $D_{\min} \in \mathcal{D}$ (resp. $D_{\max}\in \mathcal{D}$) that minimizes (resp. maximizes) $|D \triangle C|$ among $D\in \mathcal{D}$ that contain $X$ and disjoint from $Y$. 
This can be formulated as a weighted maximization problem on a matroid and implemented using a greedy algorithm. If $r < |D_{\min} \triangle C|$ or $|D_{\max} \triangle C| < r$, there is no solution.
Otherwise, we can find a solution using \emph{strong exchange property} of $\mathcal{D}$.
Here, a set family $\mathcal{Z}$ satisfies the \emph{strong exchange property} if for any distinct $Z_1, Z_2 \in \mathcal{Z}$ and any $e_1 \in Z_1 \setminus Z_2$, there exists $e_2 \in Z_2 \setminus Z_1$ such that $Z_1 \setminus \{e_1\} \cup \{e_2\} \in \mathcal{Z}$.
It is well-known that the base family of a matroid satisfies the strong exchange property.
Starting from $D_1 = D_{\min}$ and $D_2 = D_{\max}$, the algorithm repeatedly applies the strong exchange property on $D_1$ and $D_2$ and replaces $D_1$ by the obtained solution.
Then, $|D_1\triangle C|=r$ eventually holds after at most $\frac{|D_{\min}\triangle D_{\max}|}{2}$ iterations.

We can easily see that this algorithm actually works under a weaker assumption, \emph{weak exchange property} of $\mathcal{D}$.
Here, $\mathcal{Z}$ satisfies the \emph{weak exchange property} if for any distinct $Z_1, Z_2 \in \mathcal{Z}$, there exist elements $e_1 \in Z_1 \setminus Z_2$ and $e_2 \in Z_2 \setminus Z_1$ such that $Z_1 \setminus \{e_1\} \cup \{e_2\} \in \mathcal{Z}$.
Particularly, consider the case that $\mathcal{D}$ is the family of \emph{$r$-branching} in a graph. 
Here, for a directed graph $G = (V, E)$ and $r \in V$, an edge subset $T \subseteq E$ is an \emph{$r$-branching} if $|T| = |V| - 1$ and for all vertices $v \in V$, there is an $r-v$ path consisting only of edges in $T$.
The $(-1,1)$-optimization oracle on $\mathcal{D}$ can be implemented using any polynomial-time algorithm for finding a maximum weight $r$-branching.
Since it is well-known that $\mathcal{D}$ satisfies weak exchange property~\cite{ito2023reconfiguring}, we can design an exact extension oracle that works in polynomial time.
Furthermore, when $\mathcal{D}$ is the family of \emph{branchings} in a graph $G$, that is, the family of edge subsets forming an $r$-branching for some $r \in V$, $\mathcal{D}$ can be written as the union of the families of $r$-branchings for each $r \in V$. Hence, by the observation at the beginning of this section, the same result holds for $\mathcal{D}$ as well.

Given a directed graph $G=(V,E)$ and $s,t \in V$, the \emph{$d$-distinct branchings} problem is to determine whether there exist an $s$-branching $B_1$ and a $t$-in-branching $B_2$ of $G$ such that $|B_1 \triangle B_2| \geq d$, where an edge subset is an \emph{$r$-in-branching} if it is an $r$-branching in the graph obtained by reversing the direction of every edge in $G$.
Bang-Jensen et al.~\cite{bang2016parameterized} provided an FPT algorithm for this problem when $G$ is strongly connected. Gutin et al.~\cite{gutin2018k} extended their results to general graphs, Bang-Jensen et al.~\cite{bang2021k} obtained a polynomial kernel and improved the time complexity, and Eiben et al.~\cite{eiben2024determinantal} further improved the time complexity.
By the discussion in Section~\ref{sec:usage_diverse}, we can obtain another FPT algorithm for this problem using $d$-limited $k$-max-distance sparsifiers. 
Specifically, the algorithm computes the $d$-limited $1$-max-distance sparsifiers $\mathcal{K}_1$ and $\mathcal{K}_2$ of a family of $s$-branchings and $t$-in-branchings, respectively, with respect to $2^U$, and then performs an exhaustive search over pairs $(B_1, B_2) \in \mathcal{K}_1 \times \mathcal{K}_2$.
We can naturally extend this algorithm to the case where we choose multiple branchings and in-branchings. 
Specifically, we obtain an FPT algorithm parameterized by $k+d$ for the following problem. 
Let $k \in \mathbb{Z}_{\geq 1}$, $d \in \mathbb{Z}_{\geq 0}$, $k'$ be an integer with $0 \leq k' \leq k$, and $s_1, \dots, s_k \in V$. 
Let $\mathcal{D}_1, \dots, \mathcal{D}_{k'}$ be the family of $s_1, \dots, s_{k'}$-branchings, respectively, and $\mathcal{D}_{k'+1}, \dots, \mathcal{D}_k$ be the family of $s_{k'+1}, \dots, s_k$-in-branchings, respectively. 
The problem asks whether there is a $k$-tuple $(B_1, \dots, B_k) \in \mathcal{D}_1 \times \dots \times \mathcal{D}_k$ such that $\min_{1 \leq i < j \leq k} |D_i \triangle D_j| \geq d$ (or $\sum_{1 \leq i < j \leq k} |D_i \triangle D_j| \geq d$).


\subsubsection{Matching}\label{sec:app_matching}

Let $G = (V, E)$ be an undirected graph. An edge subset $M \subseteq E$ is a \emph{matching} of $G$ if no two edges in $M$ share a vertex.
A matching $M$ of $G$ is \emph{perfect} if $2|M| = |V|$.
Let $\ell\in  \left\{0, \dots, \floor{\frac{|V|}{2}}\right\}$ and $\mathcal{D}$ be the family of matchings in $G$ consists of $\ell $ edges.
Fomin et al.~\cite{fomin2024diversepair} provided FPT algorithms parameterized by $d$ for the max-min diversification problem on $\mathcal{D}$ for the case $k=2$ and $\ell $ is the size of the maximum matching. 
Fomin et al.~\cite{fomin2024diversecollection} provided the same result parameterized by $k+d$ for general $k$ for the case $2\ell=|V|$.
Eiben et al.~\cite{eiben2024determinantal} generalize the result for the perfect matching to general $\ell $, as well as improving the time complexity.
Let $d \in \mathbb{Z}_{\geq 0}$ and $k \in \mathbb{Z}_{\geq 1}$.
In this section, we design the $(-1,1)$-optimization oracle and the exact extension oracle on $\mathcal{D}$.
The following lemma is useful.
\begin{lemma}\label{lem:match_perfectize}
Let $W$ a vertex set of size $|V| - 2\ell $ disjoint from $V$.
Let $F := \{(v, w) \colon v \in V, w \in W\}$ and $G^l := (V \dot{\cup} W, E \dot{\cup} F)$. Then, if $M^l\subseteq E\dot{\cup} F$ is a perfect matching in $G^\ell $, $M^l \cap E$ is a matching of size $\ell $ in $G$. Conversely, if there exists a matching of size $\ell $ in $G$, there exists a perfect matching in $G^\ell $.
\end{lemma}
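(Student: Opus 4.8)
The plan is to prove both implications by a direct counting argument, exploiting two structural facts about $G^\ell$: the set $W$ is independent in $G^\ell$ (all of its incident edges lie in $F$ and go to $V$), and $F$ is the complete bipartite graph between $V$ and $W$, so any partial matching of the uncovered vertices of $V$ into $W$ can be realized. First I would record the parity sanity check: $G^\ell$ has $|V|+|W| = |V| + (|V|-2\ell) = 2(|V|-\ell)$ vertices, an even number, so a perfect matching is not ruled out a priori.

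For the forward direction, suppose $M^\ell \subseteq E \dot{\cup} F$ is a perfect matching of $G^\ell$. Since no edge of $E \dot{\cup} F$ has both endpoints in $W$, each vertex of $W$ is matched by $M^\ell$ along an edge of $F$ to a vertex of $V$; as $M^\ell$ is a matching, these are $|W| = |V|-2\ell$ pairwise distinct vertices of $V$. The remaining $|V| - (|V|-2\ell) = 2\ell$ vertices of $V$ cannot be matched into $W$ (every vertex of $W$ is already used) nor to the $|V|-2\ell$ vertices of $V$ already matched into $W$, so they must be matched to one another, necessarily by edges of $E$. Hence $M^\ell \cap E$ is a matching of $G$ covering exactly those $2\ell$ vertices, so $|M^\ell \cap E| = \ell$; it is a matching in $G$ simply because it is a subset of the matching $M^\ell$ and is contained in $E$.

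For the converse, let $M$ be a matching of size $\ell$ in $G$. It covers $2\ell$ vertices of $V$, leaving exactly $|V| - 2\ell = |W|$ vertices of $V$ uncovered. Since $F$ contains every pair between $V$ and $W$, I would fix any bijection between the uncovered vertices and $W$ and let $M_F \subseteq F$ be the corresponding edge set. Then $M$ and $M_F$ are matchings on disjoint vertex subsets of $V$ (covered versus uncovered), $M_F$ additionally covers all of $W$, and $M \cup M_F$ therefore covers every vertex of $V \dot{\cup} W$ exactly once, i.e.\ it is a perfect matching of $G^\ell$.

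I do not anticipate a real obstacle here; the lemma is essentially a bookkeeping statement. The only point requiring a little care is verifying that the counts behave correctly in the two degenerate regimes $\ell = 0$ (where $M^\ell \cap E = \emptyset$ is the empty matching) and $2\ell = |V|$ (where $W = \emptyset$, $F = \emptyset$, and $G^\ell = G$), both of which are handled by the argument above without modification.
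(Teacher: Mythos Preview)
Your proof is correct and follows essentially the same approach as the paper's: both argue that $W$ is independent so every vertex of $W$ is matched into $V$, forcing the remaining $2\ell$ vertices of $V$ to be matched within $E$, and both extend a size-$\ell$ matching to a perfect matching via the complete bipartite structure of $F$. Your version is simply more explicit about the counting and the degenerate cases.
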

\begin{proof}
Let $M^\ell $ be a perfect matching in $G^\ell $. Since $W$ is an independent set, $M^\ell $ consists of $|V| - 2\ell $ edges between $V$ and $W$ and $\ell $ edges within $V$, which means $M^l \cap E$ is a matching of size $\ell $ in $G$. Conversely, a matching of size $\ell $ in $G$ can be naturally extended to a perfect matching in $G^\ell $.
\end{proof}
We call $G^\ell $ the \emph{$\ell $-expanded graph} of $G$.
We now design a $(-1,1)$-optimization oracle on $\mathcal{D}$ that runs in polynomial time.
Let $w\in \{-1,1\}^{E}$.
The algorithm first constructs the $\ell $-expanded graph $G^l := (V \dot{\cup} W, E \dot{\cup} F)$ of $G$. Define the edge weights $\widehat{w}\in \{-1,0,1\}^{E\dot{\cup}F}$ by $\widehat{w}_e = w_e$ for $e \in E$ and $\widehat{w}_e = 0$ for $e \in F$. Then, the algorithm finds the maximum weight perfect matching $M^\ell $ with respect to $\widehat{w}$.
From Lemma~\ref{lem:match_perfectize}, $M^l \cap E$ is the desired output for the $(-1,1)$-optimization oracle on $\mathcal{D}$.

We now design an exact extension oracle on $\mathcal{D}$ that runs in polynomial time.
Let $r \in \mathbb{Z}_{\geq 0}$, $X, Y \subseteq E$, and $C \in \mathcal{D}$.
Let $G' := (V' := V \setminus V[X], E' := E \setminus (X \cup Y))$, where $V[X]$ denotes the set of endpoints of edges in $X$.
Color each edge $e \in E'$ red if $e \in C$ and blue if $e \notin C$.
The problem reduces to determining whether there exists a matching of size $\ell - |X|$ in $G'$ that contains exactly $r - |X \setminus C|$ blue edges.
Let $G'' = (V' \cup W', E' \cup F')$ be the $(l - |X|)$-expanded graph of $G'$, and color the edges in $F'$ red. The problem reduces to determining whether there exists a perfect matching in $G''$ containing exactly $r - |X \setminus C|$ blue edges.
This problem is known as the \emph{exact matching problem}~\cite{mulmuley1987matching} and admits a randomized polynomial-time algorithm.

\subsubsection{Minimum Edge Flow}\label{sec:app_flow}

Let $G = (V, E)$ be a graph.
For simplicity, we consider only the case when $G$ is directed; the results in this part naturally extend to undirected graphs.
Let $s, t \in V$ and $b \in \mathbb{Z}_{\geq 0}$. An edge subset $F \subseteq E$ is a \emph{$s,t$-flow of amount $b$} if $F$ can be written as the union of the edge sets of $b$ edge-disjoint $s,t$-paths.
Let $\mathcal{D}$ be the family of $s,t$-flows of amount $b$ with the minimum number of edges.
In this section, we design the $(-1,1)$-optimization oracle on $\mathcal{D}$ and the exact extension oracle on $\mathcal{D}$.

We provide a $(-1,1)$-optimization oracle on $\mathcal{D}$ that runs in polynomial time.
Let $w \in \{-1,1\}^E$.
Define edge weights $\widehat{w} \in \mathbb{Z}^E$ by $\widehat{w}_e := 2|E| - w_e$.
The algorithm finds the minimum weight $s,t$-flow $F$ of amount $b$ with respect to $\widehat{w}$. The correctness of the algorithm is shown as follows.
\begin{lemma}\label{lem:flow_opt}
$F$ is the desired output for the $(-1,1)$-optimization oracle on $\mathcal{D}$.
\end{lemma}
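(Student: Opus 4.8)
The plan is to exploit the scaling in the definition $\widehat{w}_e := 2|E| - w_e$: every edge carries a large base weight $2|E|$ perturbed by a $\pm 1$ term, so the number of edges of a flow dominates its $\widehat{w}$-weight, while among flows with the same number of edges, minimizing $\widehat{w}$ coincides with maximizing $w$. First I would record that $\widehat{w}_e \in \{2|E|-1,\, 2|E|+1\}$ are positive integers, so a minimum $\widehat{w}$-weight $s,t$-flow of amount $b$ (equivalently, a minimum-cost integral flow of value $b$ in the unit-capacity digraph, whose support is then a union of $b$ edge-disjoint $s,t$-paths) can be found in polynomial time by any standard min-cost flow algorithm; if no $s,t$-flow of amount $b$ exists then $\mathcal{D} = \emptyset$ and there is nothing to return.

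Next I would show $F \in \mathcal{D}$, i.e., that $F$ uses the minimum possible number of edges. Suppose for contradiction that some $s,t$-flow $F'$ of amount $b$ has $|F'| < |F|$. Then
\begin{align*}
\widehat{w}(F) - \widehat{w}(F') = 2|E|(|F| - |F'|) - (w(F) - w(F')).
\end{align*}
Since $w(F) - w(F') = \sum_{e \in F \setminus F'} w_e - \sum_{e \in F' \setminus F} w_e \leq |F \setminus F'| + |F' \setminus F| = |F \triangle F'| \leq |E|$ and $|F| - |F'| \geq 1$, we get $\widehat{w}(F) - \widehat{w}(F') \geq 2|E| - |E| = |E| > 0$ (the case $|E| = 0$ being trivial), contradicting the choice of $F$ as a minimizer of $\widehat{w}$. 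Hence $|F| = \min_{D \in \mathcal{D}} |D|$, so $F \in \mathcal{D}$.

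Finally I would show $F$ maximizes $w$ over $\mathcal{D}$. Write $m := |F|$; by definition every $D \in \mathcal{D}$ has $|D| = m$, so $\widehat{w}(D) = 2|E|\,m - w(D)$. Since $F$ minimizes $\widehat{w}$ among all $s,t$-flows of amount $b$, and every $D \in \mathcal{D}$ is such a flow, $2|E|\,m - w(F) = \widehat{w}(F) \leq \widehat{w}(D) = 2|E|\,m - w(D)$, hence $w(F) \geq w(D)$ for every $D \in \mathcal{D}$. Therefore $F$ is a minimum-edge $s,t$-flow of amount $b$ maximizing $\sum_{e \in F} w_e$, as required. I do not expect a genuine obstacle here; the only point needing care is that the base coefficient be large enough that the edge-count term strictly dominates the total possible swing of $w$ between two flows, and $2|E|$ comfortably exceeds the bound $|E|$ on $|w(F) - w(F')|$ used above, so a direct comparison suffices.
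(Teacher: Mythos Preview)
Your proof is correct and follows essentially the same approach as the paper: exploit the fact that the base weight $2|E|$ dominates the $\pm 1$ perturbation so that minimizing $\widehat{w}$ first forces minimum edge count, then maximum $w$-weight among minimum-edge flows. The paper's proof is actually terser than yours---it only argues that $F$ has the minimum number of edges (via a chain of inequalities $\widehat{w}(F') \leq 2|E||F'| + |F'| \leq 2|E||F| - (2|E|-|F'|) < 2|E||F| - |F| \leq \widehat{w}(F)$) and declares this ``sufficient,'' leaving the second step (that among flows with $|D|=|F|$, minimizing $\widehat{w}$ is the same as maximizing $w$) implicit; you spell it out explicitly, which is an improvement in clarity rather than a different method.
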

\begin{proof}
It is sufficient to state that $F$ is an $s,t$-flow of amount $b$ with the minimum number of edges. Assume the contrary and let $F'$ be an $s,t$-flow of amount $b$ with the minimum number of edges. Then, we have $\widehat{w}(F') \leq 2|E||F'| + |F'| \leq 2|E||F| - (2|E|-|F'|) < 2|E||F| - |F| \leq \widehat{w}(F)$, contradicting the minimality of $\widehat{w}(F)$. 
\end{proof}

We provide an exact extension oracle on $\mathcal{D}$ that runs in FPT time. 
Let $r \in \mathbb{Z}_{\geq 0}$, $X, Y \subseteq E$, and $F \in \mathcal{D}$.
Let $G' = (V, E')$ be the graph obtained from $G$ by reversing the direction of edges in $F$.
For an edge set $Z \subseteq E'$ and a vertex $v \in V$, let $\delta^{+}_Z(v)$ denote the set of edges in $Z$ leaving $v$ and $\delta^{-}_Z(v)$ denote the set of edges in $Z$ entering $v$.
An edge set $Z$ of a directed graph is \emph{Eulerian} if for all $v \in V$, $|\delta^{+}_Z(v)| = |\delta^{-}_Z(v)|$.
Let $\mathcal{D}^*$ be the family of Eulerian edge sets $D^*$ such that $|D^* \cap F| = |D^* \setminus F|$.
Then $\mathcal{D} = \{F \triangle D^* \colon D^* \in \mathcal{D}^*\}$.
Thus, to design the exact extension oracle on $\mathcal{D}$, it suffices to determine whether there exists an Eulerian edge subset $D^* \subseteq E'$ such that $|D^* \cap F| = |D^* \setminus F| = \frac{r}{2}$, $X' \subseteq Z$, and $Y' \cap Z = \emptyset$, where $X':=(X \setminus F) \cup (Y \cap F)$ and $Y':=(X \cap F) \cup (Y \setminus F)$.
Using the fact that Eulerian edge sets can be decomposed into edge-disjoint cycles, we solve this problem via color-coding.
For each $e \in E'$, sample $c_e \in \{1, \dots, r\}$ independently and uniformly at random. If there exists a $D^* \in \mathcal{D}^*$ meeting the conditions, there is a $\frac{1}{r!}$ probability that $\{c_e \colon e \in D^*\} = \{1, \dots, r\}$.
Define $\DP[u][v][C][p][q]$ by $Z$ if there exists a $Z \subseteq E' \setminus Y'$ that satisfies the following conditions, and $\bot$ otherwise.
\begin{itemize}
    \item $|\delta^{+}_Z(w)| - |\delta^{-}_Z(w)|$ is equal to $1$ if $w = u \neq v$, $-1$ if $w = v \neq u$, and $0$ otherwise,
    \item $|Z|=|C|$ and $\{c_e \colon e \in Z\} = C$,
    \item $|Z \cap X'| = p$, and
    \item $|Z \cap F| = q$.    
\end{itemize}
If there exists $v\in V$ such that $\DP[v][v][\{1,\dots, r\}][|X'|][r/2] \neq \bot$, then this is the desired output for the oracle.
The algorithm initializes the table with
\begin{align*}
    \DP[u][v][C][p][q] := \begin{cases}
        \emptyset & (u=v, C=\emptyset, p=q=0)\\
        \bot & (\text{otherwise})
    \end{cases}
\end{align*}
and updates the table by the following rules.
\begin{itemize}
    \item If $\DP[u][v][C][p][q]\neq \bot$, set $\DP[u][w][C \cup \{c_e\}][p'][q']:=\DP[u][v][C][p][q]\cup \{e\}$ for each $w\in V$ such that there exists an edge $e:=(v, w) \in E'$ with $c_e \not \in C$, where $p'$ equals $p + 1$ if $e \in X'$ and $p$ otherwise, and $q'$ equals $q + 1$ if $e \in F$ and $q$ otherwise.
    \item If $\DP[v][v][C][p][q]\neq \bot$, set $\DP[w][w][C][p][q]:=\DP[v][v][C][p][q]$ for each $w \in V$.
\end{itemize}
This algorithm runs in FPT time parameterized by $r$.

\subsubsection{Minimum Edge Steiner Tree}\label{sec:app_steiner}

Let $G=(V,E)$ be an undirected graph and $T\subseteq V$.
An edge subset $F\subseteq E$ is called a \emph{Steiner tree} with a terminal set $T$ if the graph $\left(V_F:=\bigcup_{e\in F}e, F\right)$ is connected and $T\subseteq V_F$.
Let $\mathcal{D}$ be the family of all Steiner trees with the minimum number of edges.
In this section, we design the $(-1,1)$-optimization oracle on $\mathcal{D}$ and the exact extension oracle on $\mathcal{D}$.
Funayama et al.~\cite{funayama2024parameterized} provided an FPT algorithm for the max-min diversification problem on the domain of shortest paths parameterized by $k+d$.
Their problem for undirected graphs corresponds to the case of $|T|=2$.

We provide a $(-1,1)$-optimization oracle on $\mathcal{D}$ that runs in FPT time parameterized by $|T|$. This is almost identical to the textbook algorithm~\cite{cygan2015parameterized}.
Let $w\in \{-1,1\}^E$.
For each $v\in V$ and $S\subseteq T$, let $\OPT[v][S]$ be the minimum number of edges in a Steiner tree with terminal set $S\cup \{v\}$, and $\DP[v][S]$ be the one with the maximum weight among such Steiner trees. What we seek is $\DP[v][T]$ for any $v\in T$.
This dynamic programming algorithm initializes
\begin{align*}
    \OPT[v][S]:=\begin{cases}
        0 & (v\in T, S=\{v\})\\
        \infty & (\text{otherwise})
    \end{cases},\quad \quad 
    \DP[v][S]:=\begin{cases}
        \emptyset & (v\in T, S=\{v\})\\
        \bot & (\text{otherwise})
    \end{cases}
\end{align*}
and updates using
\begin{align*}
    \OPT[v][S]&:=\min\left(\min_{u\in V\colon (u,v)\in E}\OPT[u][S]+1,
    \min_{\emptyset \neq S'\subsetneq S}\left(\OPT[v][S']+\OPT[v][S\setminus S']\right)\right),
\end{align*}
to compute $\OPT[\cdot][\cdot]$ and sets $\DP[v][S]$ to be the non-$\bot$ set with the maximum weight among the following sets, if any; otherwise, $\bot$. 
\begin{itemize}
    \item $\DP[u][S]\cup \{(u,v)\}$ for each $u\in V$ such that $(u,v)\in E$ and $\OPT[u][S]+1=\OPT[v][S]$.
    \item $\DP[v][S']\cup \DP[v][S\setminus S']$ for each $\emptyset\neq S'\subsetneq S$ such that $\OPT[v][S']+\OPT[v][S\setminus S']=\OPT[v][S]$.
\end{itemize}
Here, for simplicity, we set $\bot\cup Z = \bot$ for any set $Z$.
This runs in FPT time parameterized by $|T|$.

We provide an exact extension oracle on $\mathcal{D}$ that runs in FPT time parameterized by $|T|$. This is also almost identical to the same textbook algorithm.
Let $r\in \mathbb{Z}_{\geq 0}$, $X,Y\subseteq E$, and $F\in \mathcal{D}$.
For each $v\in V$, $S\subseteq T$, $p\in \{0,\dots, |X|\}$, and $q\in \{0,\dots, r/2\}$, define $\EX[v][S][p][q]$ as the Steiner tree with $\OPT[v][S]$ edges with terminal set $S\cup \{v\}$, disjoint from $Y$, including $p$ edges in $X$, and including $q$ edges not in $F$, if such Steiner tree exists; otherwise, it is $\bot$. The solution required for the oracle is $\EX[v][T][|X|][r/2]$ for any $v\in T$.
This dynamic programming algorithm initializes
\begin{align*}
    \EX[v][S][p][q]:=\begin{cases}
        \emptyset & (v\in T, S=\{v\}, p=q=0)\\
        \bot & (\text{otherwise})
    \end{cases}
\end{align*}
and sets $\EX[v][S][p][q]$ to be the non-$\bot$ sets among the following if there exists any; otherwise, $\bot$.
\begin{itemize}
    \item $\EX[u][S][p'][q']\cup \{(u,v)\}$ for each $u\in V$ such that $(u,v)\in E$ and $\OPT[u][S]+1=\OPT[v][S]$. Here, $p'$ equals $p-1$ if $(u,v)\in X$, and $p$ otherwise; $q'$ equals $q-1$ if $(u,v)\not \in F$, and $q$ otherwise.
    \item $\EX[v][S'][p'][q']\cup \EX[v][S\setminus S'][p-p'][q-q']$ for each $\emptyset\neq S'\subsetneq S$ with $\OPT[v][S']+\OPT[v][S\setminus S']=\OPT[v][S]$, $p'\in \{0,\dots, p\}$, and $q'\in \{0,\dots, q\}$.
\end{itemize}
This dynamic programming algorithm runs in FPT time parameterized by $|T|$.

\subsubsection{Vertex Set of Minimum $s,t$-Cut}\label{sec:app_cut}

Let $G=(V,E)$ be a directed or undirected graph and $s,t\in V$.
A vertex subset $C\subseteq V$ is an \emph{$s,t$-cut} if $s\in C$ and $t\not \in C$.
The \emph{cost} of the $s,t$-cut $C$ is $|\delta(C)|:=|\{e\in E\colon |e\cap C|=1\}|$.
Let $\mathcal{D}\subseteq 2^V$ be the family of minimum cost $s,t$-cuts.
Let $k\in \mathbb{Z}_{\geq 1}$ and $d\in \mathbb{Z}_{\geq 0}$.
In this section, we construct a $d$-limited $k$-max-distance sparsifier of $\mathcal{D}$.

We provide an $(-1,1)$-optimization oracle that runs in polynomial time. Let $w\in \{-1,1\}^V$.
We construct the edge set $\widehat{E}$ consisting of 
\begin{itemize}
    \item for each $e\in E$, an edge $e$ with weight $2|V|+1$,
    \item for each $v\in V$ with $w_v=1$, an edge $(s,v)$ with weight $1$, and
    \item for each $v\in V$ with $w_v=-1$, an edge $(v,t)$ with weight $1$.
\end{itemize}
Then, solve the minimum weight $s,t$-cut problem on the graph $(V,\widehat{E})$.
We can easily see that the $s,t$-cut returned by this procedure is indeed the solution required by the oracle by the similar argument as the proof of Lemma~\ref{lem:flow_opt}.

Let $C\in \mathcal{D}$, $r\in \mathbb{Z}_{\geq 0}$ and $X,Y\subseteq U$. 
Unfortunately, we do not have an exact extension oracle itself; however, we can construct an FPT algorithm that either outputs a trivial $d$-limited $k$-max-distance sparsifier or behaves as an exact extension oracle.
Given that our goal in this section is to compute the $d$-limited $k$-max-distance sparsifier, this is sufficient.
Furthermore, if this procedure returns a trivial $d$-limited $k$-max-distance sparsifier, it implies there is no solution to the $k$-center or $k$-sum-of-radii clustering problems. 
Therefore, algorithms for clustering problems can also be obtained without any issues.

A finite set equipped with a binary relation $(W,\preceq)$ is a \emph{poset} if $\preceq$ is reflexive, antisymmetric, and transitive.
For $v\in W$, define $I^{+}(v)=\{u\in W\colon u\succeq v\}$ and $I^{-}(v)=\{u\in W\colon u\preceq v\}$.
A subset $I\subseteq W$ is called an \emph{ideal} if for any $w\in I$, $I^{-}(w)\subseteq I$.
An ideal $I$ is \emph{proper} if $I$ is neither $\emptyset$ nor $W$.
The following is well-known.
\begin{lemma}[\rm{\cite{birkhoff1937rings,BergMS23}}]\label{lem:birkhoff}
There is a poset $\mathcal{P}=(W,\preceq)$ and a list of nonempty disjoint subsets $\{A_w\}_{w\in W}$ of $U$ such that $\mathcal{D}=\bigcup_{\text{$I$ is proper ideal of $\mathcal{P}$}}\left\{\bigcup_{w\in I}A_w\right\}$. Such a poset $\mathcal{P}$ can be computed in polynomial time.
\end{lemma}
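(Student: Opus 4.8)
The plan is to realize $\mathcal{D}$ as a ring of sets, invoke Birkhoff's representation theorem, and then patch the statement so that it quantifies over \emph{proper} ideals and \emph{nonempty} disjoint blocks. First I would check that $\mathcal{D}$ is closed under $\cap$ and $\cup$: for $C, C' \in \mathcal{D}$ both $C \cap C'$ and $C \cup C'$ are again $s,t$-cuts, and submodularity of the cut function $C \mapsto |\delta(C)|$ gives $|\delta(C)| + |\delta(C')| \ge |\delta(C\cap C')| + |\delta(C\cup C')|$; since the left-hand side equals $2\lambda$, where $\lambda$ is the minimum cut value, and each term on the right is at least $\lambda$, equality holds throughout and $C\cap C', C\cup C' \in \mathcal{D}$. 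Hence $(\mathcal{D}, \subseteq)$ is a finite distributive lattice. Let $D_{\min}$ and $D_{\max}$ be its bottom and top; then $D_{\min} \subseteq D \subseteq D_{\max}$ for every $D \in \mathcal{D}$, and since $s \in D_{\min}$ and $t \notin D_{\max}$ we have $D_{\min} \neq \emptyset$ and $V \setminus D_{\max} \neq \emptyset$.

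Next I would apply Birkhoff's theorem to the isomorphic lattice $\mathcal{D}' := \{D \setminus D_{\min} : D \in \mathcal{D}\}$, whose bottom is $\emptyset$ and top is $M := D_{\max} \setminus D_{\min}$. This produces a poset $\mathcal{P}' = (W', \preceq')$ together with a partition of $M$ into nonempty blocks $\{A_w\}_{w \in W'}$ --- concretely, the classes of the relation ``$v$ and $v'$ lie on the same side of every $D \in \mathcal{D}$'', ordered by ``$A_w \preceq' A_{w'}$ iff every $D \in \mathcal{D}$ with $A_{w'} \subseteq D$ also satisfies $A_w \subseteq D$'' --- such that $\mathcal{D}' = \{\bigcup_{w \in I'} A_w : I' \text{ an ideal of } \mathcal{P}'\}$. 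To get \emph{proper} ideals I would adjoin to $\mathcal{P}'$ a new minimum element $\bot$ and a new maximum element $\top$, set $A_\bot := D_{\min}$ and $A_\top := V \setminus D_{\max}$, and call $\mathcal{P} = (W, \preceq)$ the resulting poset; the blocks $\{A_w\}_{w \in W}$ then partition $V$ and are all nonempty. Every nonempty ideal of $\mathcal{P}$ contains $\bot$, and $W$ is the only ideal containing $\top$, so the proper ideals of $\mathcal{P}$ are exactly the sets $\{\bot\} \cup I'$ with $I'$ an ideal of $\mathcal{P}'$. For such an ideal, $\bigcup_{w \in \{\bot\} \cup I'} A_w = D_{\min} \cup \bigcup_{w \in I'} A_w$, and as $I'$ ranges over the ideals of $\mathcal{P}'$ this ranges exactly over $\{D_{\min} \cup (D \setminus D_{\min}) : D \in \mathcal{D}\} = \mathcal{D}$, which is the claimed identity.

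For the polynomial-time bound I would compute a maximum $s,t$-flow and its residual graph $G_f$; then $D_{\min}$ is the set of vertices reachable from $s$ in $G_f$, $D_{\max} = V \setminus \{v : v \text{ reaches } t \text{ in } G_f\}$, and the strongly connected components of $G_f$ contained in $M$, ordered by reachability in the component DAG, can be used as $\{A_w\}_{w \in W'}$ and $\mathcal{P}'$, relying on the classical facts that the two endpoints of an edge of $G_f$ always lie on the same side of every minimum cut and that a vertex set between $D_{\min}$ and $D_{\max}$ is a minimum cut iff it is a down-closed union of such components. I expect the only delicate point to be the bookkeeping with the adjoined $\bot$ and $\top$: one needs $\top$ precisely so that $D_{\max}$ (obtained from $I' = W'$) is produced by a proper ideal, and $\bot$ precisely so that $D_{\min}$ (obtained from $I' = \emptyset$) is, while keeping all $A_w$ nonempty and disjoint; everything else is Birkhoff's representation theorem and standard minimum-cut structure theory, which is what the cited references~\cite{birkhoff1937rings,BergMS23} supply.
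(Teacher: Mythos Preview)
The paper does not give its own proof of this lemma: it is stated as a known result and attributed to~\cite{birkhoff1937rings,BergMS23}, so there is nothing to compare against. Your argument is the standard one and is essentially correct --- submodularity makes $\mathcal{D}$ a distributive lattice, Birkhoff's theorem produces the poset of join-irreducibles with blocks given by the ``same side of every cut'' equivalence classes, and adjoining $\bot,\top$ with $A_\bot=D_{\min}$, $A_\top=V\setminus D_{\max}$ turns the bijection with \emph{all} ideals into one with \emph{proper} ideals while keeping blocks nonempty; the polynomial-time construction via the residual graph and its SCC/DAG structure is the Picard--Queyranne theorem, which is what~\cite{BergMS23} uses.

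One small imprecision: it is not true that ``the two endpoints of an edge of $G_f$ always lie on the same side of every minimum cut''. Residual edges can cross a minimum cut, but only from $V\setminus C$ to $C$; what you need (and what actually holds) is that no residual edge goes from $C$ to $V\setminus C$, so that $C$ is closed under residual successors. This is enough to conclude that vertices in the same SCC are always on the same side, and that the minimum cuts between $D_{\min}$ and $D_{\max}$ are exactly the successor-closed unions of SCCs --- which, under the appropriate orientation of the component DAG, are the ideals you want.
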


We take the poset $\mathcal{P}=(W,\preceq)$ and $\{A_w\}_{w\in W}$ in Lemma~\ref{lem:birkhoff} and the proper ideal $I_C$ corresponding to $C$.
Let $p:=(4d+2)^2 2^{k-1}$ be the upper bound of $r$ among all calls of the exact extension oracle in our framework.
Let $W^{+}_{C}:=\left\{w\in W\setminus I_C\colon |I^{-}(w)\setminus I_C|\leq p\right\}$ and $W^{-}_{C}:=\left\{w\in I_C\colon |I^{+}(w)\cap I_C|\leq p\right\}$.
We have the following.
\begin{lemma}\label{lem:cut_minusplus}
Let $D\in \mathcal{D}$ and assume $|D\triangle C|\leq p$. Let $I_{D}$ be the ideal of $\mathcal{P}$ corresponding to $D$. Then, $I_C\setminus W^{-}_{C}\subseteq I_{D}\subseteq I_C\cup W^{+}_C$.
\end{lemma}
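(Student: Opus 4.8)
The plan is to move everything from the level of subsets of $U$ to the level of ideals of $\mathcal{P}$, where the statement becomes a direct consequence of the defining closure property of ideals. First I would record the basic translation: since the sets $\{A_w\}_{w\in W}$ are pairwise disjoint and each $A_w$ is nonempty, we have $D\triangle C=\bigcup_{w\in I_D\triangle I_C}A_w$, hence $|D\triangle C|=\sum_{w\in I_D\triangle I_C}|A_w|\ge |I_D\triangle I_C|$. Therefore the hypothesis $|D\triangle C|\le p$ yields $|I_D\triangle I_C|\le p$, and this single inequality, together with the fact that $I_C$ and $I_D$ are ideals, is all that the argument uses; the specific value $p=(4d+2)^2 2^{k-1}$ plays no role beyond being the given threshold.

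For the inclusion $I_D\subseteq I_C\cup W^{+}_C$, I would take any $w\in I_D\setminus I_C$ and show $w\in W^{+}_C$, i.e.\ $|I^{-}(w)\setminus I_C|\le p$. Since $I_D$ is an ideal and $w\in I_D$, we have $I^{-}(w)\subseteq I_D$, so $I^{-}(w)\setminus I_C\subseteq I_D\setminus I_C\subseteq I_D\triangle I_C$, and hence $|I^{-}(w)\setminus I_C|\le |I_D\triangle I_C|\le p$, as required.

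For the inclusion $I_C\setminus W^{-}_C\subseteq I_D$, I would argue by contraposition: take $w\in I_C$ with $w\notin I_D$ and show $w\in W^{-}_C$, i.e.\ $|I^{+}(w)\cap I_C|\le p$. Because $I_D$ is an ideal, no element $u\succeq w$ can lie in $I_D$ (otherwise $I^{-}(u)\subseteq I_D$ would force $w\in I_D$), so $I^{+}(w)\cap I_D=\emptyset$. Consequently $I^{+}(w)\cap I_C\subseteq I_C\setminus I_D\subseteq I_D\triangle I_C$, giving $|I^{+}(w)\cap I_C|\le |I_D\triangle I_C|\le p$, which places $w$ in $W^{-}_C$.

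I do not anticipate a genuine obstacle here; the proof is short. The one point that deserves care is the reduction in the first paragraph — namely that the Hamming distance $|D\triangle C|$ dominates the cardinality $|I_D\triangle I_C|$ of the symmetric difference of the corresponding ideals — since everything downstream is just an application of ideal-closedness and monotonicity of cardinality. I would state that reduction explicitly and keep the rest at the level of the two short containment arguments above.
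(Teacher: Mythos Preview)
Your proposal is correct and follows essentially the same approach as the paper. The only cosmetic difference is that the paper argues each inclusion by contradiction (assuming $w\notin W^{+}_C$, respectively $w\notin W^{-}_C$, and deriving $|D\triangle C|>p$), whereas you argue the contrapositive directly; both rely on exactly the same two ingredients you identify, namely the ideal-closedness of $I_D$ and the bound $|D\triangle C|\ge |I_D\triangle I_C|$ from disjointness and nonemptiness of the $A_w$.
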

\begin{proof}
Assume there is an element $w$ in $I_{D}\setminus (I_C\cup W^{+}_C)$. From the definition of $W^{+}_{C}$, we have $|I^{-}(w)\setminus I_C| > p$. Therefore, $|D\triangle C|\geq \sum_{w\in I_{D}\setminus I_{C}}|A_w|\geq |I_{D}\setminus I_C|\geq |I^{-}(w)\setminus I_C| > p$.
Similarly, assume there is an element $w$ in $(I_C\setminus W^{-}_C)\setminus I_{D}$. From the definition of $W^{-}_{C}$, we have $|I^{+}(w)\cap I_C| > p$. Therefore, $|D\triangle C|\geq \sum_{w\in I_{C}\setminus I_{D}}|A_w|\geq |I_{C}\setminus I_{D}|\geq |I_C \cap I^{+}(w)| > p$.
\end{proof}

Since the size of $W^{+}_{C}$ or $W^{-}_{C}$ can be large in general, Lemma~\ref{lem:cut_minusplus} does not directly lead to an FPT algorithm for the exact extension oracle. However, we can claim the following.
\begin{lemma}
Assume $|W^{+}_{C}| > 2kd$. Then, a $d$-limited $k$-max-distance sparsifier of $\mathcal{D}$ can be computed in polynomial time. The same holds for $|W^{-}_{C}|$.
\end{lemma}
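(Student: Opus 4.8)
The plan is to prove the lemma by showing that $|W^+_C| > 2kd$ already forces $\mathcal{D}$ to contain $k+1$ distinct sets that are pairwise at Hamming distance at least $2d$; by Lemma~\ref{lem:tooscattered} such a family is a $d$-limited $k$-max-distance sparsifier of $\mathcal{D}$ with respect to $2^U$, hence with respect to $\mathcal{D}$, and it can be written down in polynomial time. (If $d = 0$ the set $\{C\}$ is trivially a $0$-limited $k$-max-distance sparsifier, so assume $d \ge 1$.) Throughout I use the poset $\mathcal{P} = (W, \preceq)$ and the disjoint nonempty sets $\{A_w\}_{w \in W}$ of Lemma~\ref{lem:birkhoff} and the proper ideal $I_C$ corresponding to $C$; all of these are computable in polynomial time, as are $W^+_C$ and $W^-_C$. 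The case $|W^-_C| > 2kd$ is dual with one extra check.

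First I would record that $W^+_C$ is an order ideal of the sub-poset $W \setminus I_C$: if $w \in W^+_C$ and $w' \preceq w$ with $w' \in W \setminus I_C$, then $I^-(w') \setminus I_C \subseteq I^-(w) \setminus I_C$, so $|I^-(w') \setminus I_C| \le p$ and $w' \in W^+_C$. Consequently, for every ideal $S$ of the sub-poset $W^+_C$, the union $I_C \cup S$ is an ideal of $\mathcal{P}$: an element of $W$ below a member of $I_C$ lies in $I_C$, and an element below a member $w$ of $S \subseteq W^+_C \subseteq W \setminus I_C$ either lies in $I_C$ or, lying in $W \setminus I_C$ and below $w \in W^+_C$, lies in $W^+_C$ and hence, being below $w \in S$, in $S$. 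Now I would build a maximal chain of ideals of $W^+_C$ by repeatedly adjoining a minimal element of the unused part, giving $\emptyset = S_0 \subsetneq S_1 \subsetneq \dots \subsetneq S_{|W^+_C|} = W^+_C$ with $|S_i| = i$ (a topological sort of $W^+_C$, polynomial time). For $j = 0, 1, \dots, k$ set $I_j := I_C \cup S_{2dj}$; since $|W^+_C| > 2kd$ we have $2dj \le 2dk < |W^+_C|$, so $S_{2dj} \subsetneq W^+_C$, whence $I_j$ is a proper ideal (nonempty as it contains $I_C$, and $I_j \subsetneq I_C \cup W^+_C \subseteq W$ because $S_{2dj} \subsetneq W^+_C$ and $W^+_C \cap I_C = \emptyset$). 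Thus each $I_j$ yields $D_j := \bigcup_{v \in I_j} A_v \in \mathcal{D}$, and for $a < b$ we have $I_b \setminus I_a = S_{2db} \setminus S_{2da}$, a set of $2d(b-a) \ge 2d$ elements; since the $A_v$ are nonempty and pairwise disjoint, $D_a \subseteq D_b$ and $|D_a \triangle D_b| = \sum_{v \in S_{2db} \setminus S_{2da}} |A_v| \ge 2d$, so the $D_j$ are distinct and pairwise at distance at least $2d$.

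For the case $|W^-_C| > 2kd$ the argument is symmetric: $W^-_C$ is a filter of the sub-poset $I_C$ (if $w \in W^-_C$ and $w \preceq w' \in I_C$ then $I^+(w') \cap I_C \subseteq I^+(w) \cap I_C$ has size at most $p$), so $I_C \setminus T$ is an ideal of $\mathcal{P}$ for every filter $T$ of $W^-_C$. Building a chain of filters $W^-_C = T_0 \supsetneq T_1 \supsetneq \dots \supsetneq T_{|W^-_C|} = \emptyset$ with $|T_i| = |W^-_C| - i$ (repeatedly delete a maximal element), I would take $I_j := I_C \setminus T_{2dj+1}$ for $j = 0, \dots, k$, which is well defined since $|W^-_C| > 2kd$ gives $2dk+1 \le |W^-_C|$. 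Each $I_j \subseteq I_C \subsetneq W$ is not all of $W$, and it is nonempty because $T_{2dj+1} \subseteq W^-_C \subseteq I_C$ gives $|I_C \setminus T_{2dj+1}| = |I_C| - |W^-_C| + 2dj + 1 \ge 1$; the ``$+1$'' shift in the index is exactly what handles the possibility $W^-_C = I_C$. As before $I_b \setminus I_a = T_{2da+1} \setminus T_{2db+1}$ has $2d(b-a) \ge 2d$ elements, so the corresponding $D_0, \dots, D_k \in \mathcal{D}$ are distinct and pairwise at distance at least $2d$.

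In either case $\{D_0, \dots, D_k\} \subseteq \mathcal{D}$ is a size-$(k+1)$ family with pairwise Hamming distance at least $2d$, and all the objects involved are computed in polynomial time, so Lemma~\ref{lem:tooscattered} finishes the proof. The step I expect to be the crux is the observation that $W^+_C$ (and dually $W^-_C$) is itself an ideal (resp.\ filter) inside $\mathcal{P}$, which is what keeps the extended sets $I_C \cup S_j$ and $I_C \setminus T_j$ inside the ideal lattice of $\mathcal{P}$; granted that, the fact that a maximal chain of ideals of a poset realizes every intermediate cardinality gives exact control over the $2d$-gaps, and the only remaining care is the properness of the $k+1$ ideals (which is where the hypothesis $|\cdot| > 2kd$ is spent) and the index shift in the $W^-_C$ case to avoid producing the empty ideal.
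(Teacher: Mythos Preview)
Your proof is correct and follows essentially the same approach as the paper: take a linear extension of $W^+_C$, use the resulting chain of ideals $I_C \cup S_{2dj}$ to produce $k+1$ sets in $\mathcal{D}$ at pairwise Hamming distance at least $2d$, and invoke Lemma~\ref{lem:tooscattered}. The paper states this more tersely (and simply writes ``the same argument applies'' for the $W^-_C$ case), whereas you make explicit the key fact that $W^+_C$ is a down-set of $W\setminus I_C$ (which the paper uses without comment) and you handle the $W^-_C$ case with an index shift rather than starting from $I_C$; both variants are equally valid.
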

\begin{proof}
Order elements of $W^{+}_{C}$ as $w_1,\dots, w_{|W^{+}_{C}|}$ so that $w_j\preceq w_{j'}$ implies $j\leq j'$. For each $j\in \{0,\dots, |W^{+}_{C}|-1\}$, $I_C\cup \{w_1,\dots, w_j\}$ is a proper ideal of $\mathcal{P}$. 
For $i\in \{0,\dots, k\}$, let $C_i:=C\cup \bigcup_{j\in \{1,\dots, 2id\}}A_{w_j}$. Then, $\{C_0,\dots, C_k\}$ is a $d$-limited $k$-max-distance sparsifier by Lemma~\ref{lem:tooscattered}. The same argument applies to $W^{-}_{C}$.
\end{proof}
Thus, we can assume $|W^{+}_{C}|, |W^{-}_{C}| \leq 2kd$.
Then, Lemma~\ref{lem:cut_minusplus} implies that there is at most $2^{4kd}$ sets $D\in \mathcal{D}$ with $|D\triangle C|\leq p$, and thus the exact extension oracle can be designed by a brute-force search.

\subsubsection{Vertex Set of Edge Bipartization}\label{sec:app_bipartization}

Let $G=(V,E)$ be an undirected graph.
For $D\subseteq V$, the \emph{edge bipartization set of $D$} is defined by $\beta(D) := E\setminus \delta(D)$.
Let $s:=\min_{D\subseteq U}|\beta(D)|$ and $\mathcal{D}$ be the family of all $D\subseteq V$ with $|\beta(D)|=s$.
In this section, we design the $(-1,1)$-optimization oracle on $\mathcal{D}$ and the exact extension oracle on $\mathcal{D}$.

Take any $D\in \mathcal{D}$.
We use the technique in~\cite{guo2006compression} for designing iterative compression algorithms for the edge bipartition problem to characterize the domain $\mathcal{D}$ using $D$.
For $D_1,D_2\subseteq V$, we denote $E(D_1,D_2):=\{(u,v)\in E\colon u\in D_1\land v\in D_2\}$.
We begin by the following.
\begin{lemma}\label{lem:ftosym}
Let $D_1,D_2\subseteq V$.
Then, $\beta(D_1)\triangle \beta(D_2)=\delta(D_1\triangle D_2)$.
\end{lemma}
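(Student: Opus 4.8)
The plan is to reduce the claimed identity $\beta(D_1)\triangle\beta(D_2)=\delta(D_1\triangle D_2)$ to the corresponding identity for $\delta$, and then verify that one edge by edge using a parity argument.

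First I would peel off the complementation. For any two subsets $A,B\subseteq E$ we have $(E\setminus A)\triangle(E\setminus B)=A\triangle B$, because an edge of $E$ lies in exactly one of $E\setminus A$, $E\setminus B$ precisely when it lies in exactly one of $A$, $B$. Applying this with $A=\delta(D_1)$, $B=\delta(D_2)$ and recalling $\beta(D_i)=E\setminus\delta(D_i)$ yields $\beta(D_1)\triangle\beta(D_2)=\delta(D_1)\triangle\delta(D_2)$. So it suffices to prove $\delta(D_1)\triangle\delta(D_2)=\delta(D_1\triangle D_2)$.

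For this remaining identity I would test membership of an arbitrary edge $e=\{u,v\}\in E$. Writing $\chi_D$ for the indicator function of $D\subseteq V$, the defining condition $e\in\delta(D)$ (exactly one endpoint of $e$ in $D$) is equivalent to $\chi_D(u)+\chi_D(v)$ being odd. Hence membership of $e$ in $\delta(D)$ depends only on the parities $\chi_D(u),\chi_D(v)$, and this dependence is additive modulo $2$: since $\chi_{D_1\triangle D_2}(w)\equiv\chi_{D_1}(w)+\chi_{D_2}(w)\pmod 2$ for every vertex $w$, we get that $e\in\delta(D_1\triangle D_2)$ iff $\bigl(\chi_{D_1}(u)+\chi_{D_1}(v)\bigr)+\bigl(\chi_{D_2}(u)+\chi_{D_2}(v)\bigr)$ is odd, iff exactly one of the two bracketed sums is odd, iff $e$ lies in exactly one of $\delta(D_1),\delta(D_2)$, iff $e\in\delta(D_1)\triangle\delta(D_2)$. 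Since this holds for every $e\in E$, the identity follows, and combining with the first step proves the lemma.

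I do not anticipate a genuine obstacle here: the only thing to get right is the bookkeeping of parities, i.e.\ recognising that $\delta(\cdot)$ is a $\mathrm{GF}(2)$-linear map from vertex subsets to edge subsets, after which both reductions are immediate. An equally short alternative would be a direct case analysis: $\delta(D_1\triangle D_2)$ consists of the edges with exactly one endpoint in $D_1\triangle D_2$, and splitting on which of $D_1,D_2$ each endpoint of $e$ belongs to matches this with $\delta(D_1)\triangle\delta(D_2)$; I would present whichever turns out cleaner, but the parity viewpoint is the most transparent.
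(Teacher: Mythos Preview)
Your proof is correct. It differs from the paper's argument in structure: the paper works directly with $\beta$, partitioning $V$ into the four cells $D_1\setminus D_2$, $D_2\setminus D_1$, $D_1\cap D_2$, $V\setminus(D_1\cup D_2)$ and computing $\beta(D_1)\setminus\beta(D_2)$ and $\beta(D_2)\setminus\beta(D_1)$ explicitly as unions of the bipartite edge sets $E(\cdot,\cdot)$ between these cells, then reassembling them into $E(D_1\triangle D_2,\,V\setminus(D_1\triangle D_2))=\delta(D_1\triangle D_2)$. You instead first strip off the complementation to reduce to $\delta(D_1)\triangle\delta(D_2)=\delta(D_1\triangle D_2)$, and then verify this by the $\mathrm{GF}(2)$-linearity of $\delta$ (equivalently, the parity calculation on each edge). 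Your route is a bit more conceptual and avoids the four-cell bookkeeping; the paper's route is more concrete and makes the edge sets $\beta(D_1)\setminus\beta(D_2)$ and $\beta(D_2)\setminus\beta(D_1)$ explicitly visible, which aligns with how they are used in the surrounding iterative-compression argument. Either is perfectly adequate here.
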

\begin{proof}
Then, we have $\beta(D_1)\setminus \beta(D_2)= E(D_1\setminus D_2,D_1\cap D_2) \cup E(D_2\setminus D_1,V\setminus (D_1\cup D_2))$. 
Similarly, $\beta(D_2)\setminus \beta(D_1)= E(D_2\setminus D_1,D_1\cap D_2) \cup E(D_1\setminus D_2,V\setminus (D_1\cup D_2))$. Thus, $\beta(D_1)\triangle \beta(D_2)=E(D_1\triangle D_2,D_1\cap D_2)\cup E(D_1\triangle D_2, V\setminus (D_1\cup D_2))=E(D_1\triangle D_2,V\setminus (D_1\triangle D_2))=\delta(D_1\triangle D_2)$.
\end{proof}

For $A,B\subseteq U$ and $t\in \mathbb{Z}_{\geq 0}$, define $\mathrm{MinCut}(A,B,t)$ as the family of all sets $C\subseteq V$ with $A\subseteq C$, $B\cap C=\emptyset$, and $|\delta(C)|=t$ if $t$ is the minimum possible value of $|\delta(C)|$; otherwise, define it as $\emptyset$.
Furthermore, for $Z\subseteq E$, let $V[Z]:=\bigcup_{e\in Z}e$.
We have the following.
\begin{lemma}\label{lem:edgebipartite_cut}
For any $D\in \mathcal{D}$,
\begin{align*}
    \mathcal{D}=\bigcup_{T\subseteq V[\beta(D)]}\left\{D\triangle C\colon C\in \mathrm{MinCut}(T,V[\beta(D)]\setminus T, 2|\beta(D)\cap \delta(T)|)\right\}.
\end{align*}
\end{lemma}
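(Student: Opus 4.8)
The plan is to use Lemma~\ref{lem:ftosym} together with the minimality of $s$ to obtain a cut-theoretic description of the coset $\{C\subseteq V: D\triangle C\in\mathcal{D}\}$, and then to reorganize that description as the claimed union over $T$.

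First I would record the key identity. Applying Lemma~\ref{lem:ftosym} with $D_1=D$ and $D_2=D\triangle C$ yields $\beta(D)\triangle\beta(D\triangle C)=\delta(C)$, hence $\beta(D\triangle C)=\beta(D)\triangle\delta(C)$ and, using $|\beta(D)|=s$,
\[
|\beta(D\triangle C)| \;=\; s+|\delta(C)|-2\,|\beta(D)\cap\delta(C)|.
\]
Since $s=\min_{D'\subseteq V}|\beta(D')|$, the left side is always at least $s$, so $|\delta(C)|\ge 2\,|\beta(D)\cap\delta(C)|$ for every $C\subseteq V$, with equality exactly when $D\triangle C\in\mathcal{D}$.

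Next I would observe that $\beta(D)\cap\delta(C)$ depends on $C$ only through $T:=C\cap V[\beta(D)]$: each $e\in\beta(D)$ has both endpoints in $V[\beta(D)]$, so $e\cap C=e\cap T$ and therefore $e\in\delta(C)\iff e\in\delta(T)$, giving $\beta(D)\cap\delta(C)=\beta(D)\cap\delta(T)$; the same holds for every $C'$ in the fiber $\{C'\subseteq V: C'\cap V[\beta(D)]=T\}$, which is precisely $\{C': T\subseteq C',\ C'\cap(V[\beta(D)]\setminus T)=\emptyset\}$. Writing $m_T:=2\,|\beta(D)\cap\delta(T)|$, the identity from the previous step shows that every $C'$ in this fiber satisfies $|\delta(C')|\ge m_T$, with equality precisely when $D\triangle C'\in\mathcal{D}$. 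Hence, regardless of whether $m_T$ is actually the minimum of $|\delta(\cdot)|$ over the fiber,
\[
\mathrm{MinCut}\bigl(T,\ V[\beta(D)]\setminus T,\ m_T\bigr)=\{C'\subseteq V: C'\cap V[\beta(D)]=T,\ D\triangle C'\in\mathcal{D}\},
\]
since if that minimum strictly exceeds $m_T$ both sides are empty, and otherwise the right side is exactly the defining set of $\mathrm{MinCut}$.

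Finally I would take the union over all $T\subseteq V[\beta(D)]$: the fibers partition $2^V$, so the union of the right-hand sides above equals $\{C\subseteq V: D\triangle C\in\mathcal{D}\}$, and applying the bijection $C\mapsto D\triangle C$ of $2^V$ turns this into $\mathcal{D}$. The proof is short; the only point requiring care is the edge case built into the definition of $\mathrm{MinCut}$ (its value is $\emptyset$ unless the third argument is the true minimum), which is exactly what the global inequality $|\delta(C)|\ge 2|\beta(D)\cap\delta(C)|$ is there to handle. The rest is routine manipulation of symmetric differences.
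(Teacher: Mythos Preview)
Your proof is correct and follows essentially the same route as the paper: both apply Lemma~\ref{lem:ftosym} to obtain $|\beta(D\triangle C)|=s+|\delta(C)|-2|\beta(D)\cap\delta(C)|$, observe that $\beta(D)\cap\delta(C)$ depends only on $T=C\cap V[\beta(D)]$, and then use the minimality of $s$ to conclude that $D\triangle C\in\mathcal{D}$ exactly when $|\delta(C)|$ attains its fiberwise lower bound $2|\beta(D)\cap\delta(T)|$. Your treatment of the edge case in the definition of $\mathrm{MinCut}$ (when $m_T$ is not actually attained) is slightly more explicit than the paper's, but the argument is the same.
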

\begin{proof}
Let $D'\subseteq U$.
Then, we have 
\begin{align*}
    |\beta(D')|=|\beta(D)|-2|\beta(D)\setminus \beta(D')|+|\beta(D)\triangle \beta(D')|=|\beta(D)|-2|\beta(D)\setminus \beta(D')|+|\delta(D\triangle D')|.
\end{align*}
Let $T\subseteq V[\beta(D)]$ and assume $V[\beta(D)]\cap (D\triangle D')=T$.
Then, we have 
\begin{align*}
    \beta(D)\setminus \beta(D')=E(D\setminus D',D\cap D')\cup E(D'\setminus D, V\setminus (D\cup D')) = \beta(D)\cap \delta(D\triangle D') = \beta(D)\cap \delta(T),
\end{align*}
where the last equality is from the fact that 
\begin{align*}
    F\cap \delta(Z)=F\cap ((\delta(V[F]\cap Z)\setminus E(V[F]\cap Z, Z\setminus V[F]))\cup E(Z\setminus V[F],V\setminus Z))=F\cap \delta(V[F]\cap Z)
\end{align*}
holds for all $F\subseteq E$ and $Z\subseteq V$.
By the minimality of $|\beta(D)|$, $|\beta(D')|=|\beta(D)|=s$ holds if and only if $|\delta(D\triangle D')|$ takes the minimum value $2|\beta(D)\cap \delta(T)|$.
\end{proof}
Lemma~\ref{lem:edgebipartite_cut} implies that optimization problems on $\mathcal{D}$ can be reduced to solving the corresponding optimization problems on the domain $\mathrm{MinCut}(T,V[\beta(D)]\setminus T, t)$ for each of $2^{|\beta(D)|}\leq 4^s$ candidates of $T\subseteq \beta(D)$, where $t=2|\beta(D)\cap \delta(T)|\leq 2s$.

We provide the $(-1,1)$-optimization oracle on $\mathcal{D}$ that runs in FPT time parameterized by $s$. Let $w\in \{-1,1\}^V$.
From Lemma~\ref{lem:edgebipartite_cut}, given two disjoint vertex subsets $A,B\subseteq V$ and $t\leq 2s$, it is sufficient to find a set that maximizes $w(C)$ among $C\in \mathrm{MinCut}(A,B,t)$.
The algorithm constructs a graph $(\widehat{V},\widehat{E})$, where $\widehat{V}=V\dot{\cup} \{a,b\}$ and $\widehat{E}$ consists of the following edges.
\begin{itemize}
    \item For each $e\in E$, an edge $e$ with weight $2|V|+1$.
    \item For each $a'\in A$, an edge $(a,a')$ with weight $\infty$.
    \item For each $b'\in B$, an edge $(b',b)$ with weight $\infty$.
    \item For each $v\in V$ with $w_v=1$, an edge $(a,v)$ with weight $1$.
    \item For each $v\in V$ with $w_v=-1$, an edge $(v,b)$ with weight $1$.
\end{itemize}
Then, the algorithm solve the minimum weight $a,b$-cut problem on the graph $(\widehat{V},\widehat{E})$.
If the weight of the returned cut is in the range $[(2|V|+1)t-|V|,(2|V|+1)t+|V|]$, it maximizes $w(C)$ among $C\in \mathrm{MinCut}(A,B,t)$. Otherwise, $\mathrm{MinCut}(A,B,t)=\emptyset$.

Let $r\in \mathbb{Z}_{\geq 0}$, $X,Y\subseteq V$, and $D\in \mathcal{D}$. 
We provide an exact extension oracle on $\mathcal{D}$ that runs in FPT time parameterized by $s$ and $r$.
From Lemma~\ref{lem:edgebipartite_cut}, given two disjoint subsets $A,B\subseteq V$ and $t\leq 2s$, it is sufficient to find a set in $C\in \mathrm{MinCut}(A,B,t)$ such that $|C| = r$, $X\subseteq D\triangle C$, and $Y\cap (D\triangle C)=\emptyset$.
Without loss of generality, we can assume $\mathrm{MinCut}(A,B,t)\neq \emptyset$.
The algorithm first computes $C\in \mathrm{MinCut}(A,B,t)$ that is inclusion-wise minimal. 
It is well-known that such a cut is unique; specifically, it is the cut corresponding to the ideal $\emptyset$ in Lemma~\ref{lem:birkhoff}. 
If $|C|>r$, there is no set $C$ satisfying the condition, and the oracle returns $\bot$. 
If $|C|=r$, the algorithm checks the conditions $X\subseteq D\triangle C$ and $Y\cap (D\triangle C)=\emptyset$, and if satisfied, outputs $C$; otherwise, it returns $\bot$.
Assume $|C| < r$. Since $C$ is unique inclusion-wise minimal set in $\mathrm{MinCut}(A,B,t)$, for any $C'\in \mathrm{MinCut}(A,B,t)\setminus \{C\}$, there exist an edge $e\in \delta(C)$ with $C'\in \mathrm{MinCut}(A\cup e,B,t)$. 
For each edge $e\in \delta(C)$, the algorithm recurses on the instance where $A$ is replaced by $A\cup e$ to compute the desired set $C$.
Since $|\delta(C)|\leq t\leq 2s$ and the recursion depth is at most $r$, this recursion is called at most $s^{O(r)}$ times.

\subsubsection{Dynamic Programming}\label{sec:app_dp}
Let $U$ be a finite set, $G=(V,E)$ be a directed acyclic graph, and $q \in V^U$ be a labeling of vertices such that no path in $G$ passes through multiple vertices with the same label.
For a path $P$ in $G$, let $q(P) \subseteq U$ be the set of labels of the vertices in $P$.
Let $\mathcal{D} \subseteq 2^U$ be the family of subsets $D \subseteq U$ such that there exists a longest path $P$ in $G$ with $q(P) = D$.
This formulation captures the solution domains of several typical dynamic programming problems.
In particular, the solution domain of the interval scheduling problem, for which Hanaka et al.~\cite{hanaka2021finding} provided an FPT algorithm for the max-min diversification problem parameterized by $k+\ell $, is represented as
\begin{itemize}
    \item $U = V$ is the set of intervals,
    \item $E$ is the set of pairs of intervals $(u,v)$ such that the right end of $u$ is to the left of the left end of $v$, and
    \item $q_v = v$ for all $v \in V$.
\end{itemize}
The solution domain of the longest common subsequence problem, for which Shida et al.~\cite{shida2024finding} provided an FPT algorithm for the max-min diversification problem parameterized by $k+\ell $ for the case the number $m$ of input strings $\{S_1, \dots, S_m\}$ is an absolute constant, is represented as
\begin{itemize}
    \item $U = \{1, \dots, \ell\} \times \Sigma$, where $\Sigma$ is the set of letters appearing in the input,
    \item $V = \{p := (p_1, \dots, p_m) \colon s_p := S_{1, p_1} = \dots = S_{m, p_m}\}$,
    \item $E$ is the set of pairs $(p, p')\in V\times V$ such that $p_i < p'_i$ for all $i \in \{1, \dots, m\}$, and
    \item $q_p = (\mathrm{len}_p, s_p)$ for all $p \in V$, where $\mathrm{len}_p$ represents the maximum number of vertices of a path that ends at $p$.
\end{itemize}
Let $k \in \mathbb{Z}_{\geq 1}$ and $d \in \mathbb{Z}_{\geq 0}$.
In this section, we design the $(-1,1)$-optimization oracle on $\mathcal{D}$ and the exact extension oracle on $\mathcal{D}$.
Particularly, by constructing these oracles, we obtain an FPT algorithm parameterized by $k+d$ for the max-min diversification problem on $\mathcal{D}$, which is a stronger result in terms of parameterization compared to the result in~\cite{hanaka2021finding} for the interval scheduling problem and the result in~\cite{shida2024finding} for the longest common subsequence problem.

We provide a $(-1,1)$-optimization oracle on $\mathcal{D}$ that runs in polynomial time. Let $w \in \{-1,1\}^U$.
This can be achieved by dynamic programming where for each $v \in V$, $\OPT[v]$ denotes the maximum number of vertices in a path ending at $v$, and $\DP[v]$ denotes the maximum total weight of the labels on a path ending at $v$ with $\OPT[v]$ vertices.

We provide an exact extension oracle on $\mathcal{D}$ that runs in polynomial time. Let $r \in \mathbb{Z}_{\geq 0}$, $X, Y \subseteq U$, and $C \subseteq U$.
This can be achieved by dynamic programming where for each $v \in V$, $p \in \{0, \dots, |X|\}$, $q \in \{0, \dots, r\}$, $\EX[v][p][q]$ denotes a path ending at $v$ with $\OPT[v]$ vertices that does not include any vertex with its label in $Y$, includes $p$ vertices with labels in $X$, and includes $q$ vertices with labels not belonging to $C$, if such a path exists; otherwise, it is $\bot$.

\bibliographystyle{plainurl}
\bibliography{bib}
\end{document}